\renewcommand{\mathbb}[1]{\mathbbm{#1}}
\newcommand{\Cl}{\mathbbm{C}}
\newcommand{\Rl}{\mathbb{R}}
\newcommand{\Nl}{\mathbb{N}}
\newcommand{\Zl}{\mathbb{Z}}
\definecolor{lightgray}{rgb}{0.8,0.8,0.8}
\newcommand{\Om}{\Omega}
\newcommand{\om}{\omega}
\newcommand{\te}{\theta}
\newcommand{\la}{\lambda}
\newcommand{\eps}{\varepsilon}
\newcommand{\A}{\mathcal{A}}
\newcommand{\B}{\mathcal{B}}
\newcommand{\M}{\mathcal{M}}
\newcommand{\I}{\mathcal{I}}
\newcommand{\K}{\mathcal{K}}
\newcommand{\NN}{\mathcal{N}}
\newcommand{\Hil}{\mathcal{H}}
\newcommand{\E}{\mathcal{E}}
\newcommand{\F}{\mathcal{F}}
\newcommand{\Ss}{\mathscr{S}}   % Schwartz space
\newcommand{\fti}{\tilde{f}}
\newcommand{\fhat}{\hat{f}}
\newcommand{\OO}{O}
\newcommand{\Sp}{\text{Sp}\,}
\DeclareMathOperator{\supp}{supp}
\newcommand{\dom}{\mathrm{dom}\,}
\newcommand{\Strip}{\mathrm{S}}
\DeclareMathOperator{\im}{Im}
\newtheorem{theorem}{Theorem}[section]
\newtheorem{proposition}[theorem]{Proposition}
\newtheorem{lemma}[theorem]{Lemma}
\newtheorem{corollary}[theorem]{Corollary}
\newtheorem{definition}[theorem]{Definition}
\newtheorem{example}[theorem]{Example}
\numberwithin{equation}{section}
\newlength{\dinwidth}
\newlength{\dinmargin}
\title{Localization in Nets of Standard Spaces}
\author{{\sc Gandalf Lechner}
\\
Institut f\"ur Theoretische Physik, Universit\"at Leipzig,
\\ Br\"uderstra\ss e 16, D-04103 Leipzig, Germany
\\
\phantom{X}\\
{\sc Roberto Longo}
\\
Dipartimento di Matematica,
Universit\`a di Roma ``Tor Vergata'',\\
Via della Ricerca Scientifica 1, I-00133 Roma, Italy
}
\date{March 05, 2014}
\begin{document}
\maketitle
%%================================================

%%================================================
\begin{abstract}
	Starting from a real standard subspace of a Hilbert space and a representation of the translation group with natural properties, we construct and analyze for each endomorphism of this pair a local, translationally covariant net of standard subspaces, on the lightray and on two-dimensional Minkowski space. These nets share many features with low-dimensional quantum field theory, described by corresponding nets of von Neumann algebras.
	
	Generalizing a result of Longo and Witten to two dimensions and massive multiplicity free representations, we characterize these endomorphisms in terms of specific analytic functions. Such a characterization then allows us to analyze the corresponding nets of standard spaces, and in particular compute their minimal localization length. The analogies and differences to the von Neumann algebraic situation are discussed.
\end{abstract}
%%================================================

%%%%%%%%%%%%%%%%%%%%%%%%%%%%%%%%%%%%
%% Contents
%%%%%%%%%%%%%%%%%%%%%%%%%%%%%%%%%%%%
\section{Introduction}

In relativistic quantum physics, the principle of locality requires spacelike observables to be commensurable and their associated operators to commute. In an operator-algebraic language, one can therefore view quantum field theories as certain collections (nets) of von Neumann algebras $\A(\OO)$, indexed by regions $\OO$ in spacetime, with the algebras $\A(\OO)$ satisfying specific inclusion, commutation and covariance properties. This formulation has traditionally been used in the investigation of model-independent aspects of quantum field theory \cite{Haag:1996,Araki:1999}.

However, in more recent years, a number of researchers have also begun to use this operator-algebraic framework for the (non-perturbative) construction of model theories, see \cite{SchroerWiesbrock:2000-1,BrunettiGuidoLongo:2002,BuchholzLechner:2004,KawahigashiLongo:2004,LongoRehren:2004,MundSchroerYngvason:2006,KawahigashiLongo:2006,Lechner:2008,LongoWitten:2010,Tanimoto:2011-1,Bischoff:2011,BuchholzLechnerSummers:2011,Lechner:2012,Plaschke:2012,Alazzawi:2012,LechnerSchlemmerTanimoto:2013,BarataJakelMund:2013} and references cited therein. Instead of the usual starting point of a classical Lagrangian density and techniques like quantization and renormalization, one works here in a quantum setting from the outset, and describes models in terms of certain algebraic data like endomorphisms or inclusions, or deformations thereof.

As some examples, we mention the operator-algebraic construction of free quantum field theories by modular localization \cite{BrunettiGuidoLongo:2002}, the construction of boundary quantum field theory models \cite{LongoRehren:2004}, the operator-algebraic formulation and solution of the inverse scattering problem for integrable quantum field theories \cite{SchroerWiesbrock:2000-1,Lechner:2003,Lechner:2008}, models of string-local infinite spin quantum fields \cite{MundSchroerYngvason:2006}, construction of local conformal nets by using framed vertex operator algebras \cite{KawahigashiLongo:2006}, models constructed from endomorphisms of standard pairs 
\cite{Tanimoto:2011-1,BischoffTanimoto:2011}, and deformations of quantum field theories \cite{BuchholzLechnerSummers:2011,Lechner:2012,Plaschke:2012,Alazzawi:2012,LechnerSchlemmerTanimoto:2013}.

In many of these approaches, one first considers a certain ``semi-local'' algebra $\M$, representing observables localized in a half space, or wedge, or half light ray. Such an algebra often gives rise to specific inclusions $\NN\subset\M$ or endomorphisms \cite{Borchers:1992,Wiesbrock:1992-2,ArakiZsido:2005,LongoWitten:2010,BuchholzLechnerSummers:2011}. In low dimensions, i.e. on a lightray or on two-dimensional Minkowski space, the step to strictly local objects associated with bounded regions in spacetime amounts to forming appropriate intersections or relative commutants $\NN'\cap\M$. Controlling the size of these intersections often represents the most difficult step in the construction program, and only very few general results are known in this context\footnote{With the notable exception of the special but important case of a split inclusion $\M_1\subset\M_2$ of two von Neumann algebras $\M_1,\M_2$ \cite{DoplicherLongo:1984}: Here spectral density conditions on modular operators (``modular nuclearity''
) \cite{BuchholzDAntoniLongo:1990-1} or Hamiltonians \cite{BuchholzWichmann:1986} give good criteria for non-triviality of the relative commutant $\M_1'\cap\M_2$, see \cite{BuchholzLechner:2004,Lechner:2008}.}.

In view of the difficulty of these questions, it can be helpful to consider them first in a simpler setting, where von Neumann algebras $\M\subset\B(\Hil)$ with a cyclic and separating vector $\Om$ are replaced by real subspaces $H:=\overline{\M_{\rm sa}\Om}\subset\Hil$, generated by their selfadjoint elements from a (cyclic and separating) vacuum vector $\Om$. In general, only partial information about the structure of $\M$ is encoded in $H$, but for interaction-free theories, the full von Neumann algebraic setting can be recovered from the subspace picture by second quantization \cite{Araki:1963,LeylandsRobertsTestard:1978}. The real subspace setting can be formulated in its own right, replacing commutants of von Neumann algebras with symplectic complements of real subspaces, and the main theorems of modular theory transfer to this setting \cite{Longo:2008}. It thus seems natural to also study questions of intersections or relative symplectic complements in this simplified setting first, as we 
shall do in the present article.

A natural description of a ``semi-local'' subspace is a standard pair $(H,T)$, consisting of a real subspace $H\subset\Hil$ as above (for precise definitions, see Section~\ref{Section:StandardPairs}) and a positive energy representation $T$ of the translations on $\Hil$, satisfying $T(x)H\subset H$ for $x$ in a half-ray or wedge. The endomorphisms of one-dimensional standard pairs $(H,T)$ were analyzed in \cite{LongoWitten:2010}. For irreducible standard pairs, the endomorphisms $V$ of $(H,T)$ were shown to be of the form $V=\varphi(P)$, where $P$ is the generator of $T$ and $\varphi$ a member of a specific family of analytic functions (symmetric inner functions on the upper half plane). This not only presents a clear characterization of the endomorphism semigroup $\E(H,T)$, but also an unexpected link to the construction of integrable quantum field theories from a scattering function, which satisfies almost identical properties as $\varphi$. Further exploring the link \cite{LechnerSchlemmerTanimoto:2013} 
between 
endomorphisms of standard pairs and integrable models, we will in this paper consider a 
construction of nets of standard subspaces, both in one and two dimensions. 

For the sake of a quick overview, we here describe our models in the one-dimensional (lightray) setting: We start from a standard pair $(H,T)$ as above, and  introduce for each endomorphism $V\in\E(H,T)$ a net $H_V$ taking intervals $I=(a,b)\subset\Rl$ to the real subspaces
\begin{align}\label{eq:IntroSubspaces}
	H_V(a,b)=T(b)H'\cap T(a)VH\,,
\end{align} 
where $H'$ denotes the symplectic complement of $H$ w.r.t. the imaginary part of the scalar product of $\Hil$. In Section \ref{Section:StandardPairs}, we discuss the basic properties of these nets in an abstract setting, and see in particular that their intersection properties are non-trivial. We then generalize the classification of the endomorphism semigroup obtained in \cite{LongoWitten:2010} to two dimensions, covering the irreducible and the massive multiplicity free case (Section \ref{Section:MultiplicityFree}). In the main Section~\ref{Section:1dCyclicity}, we consider the question for which intervals $I$ the space $H_V(I)$ is cyclic or at least non-trivial, in analogy to the intersection questions encountered in the von Neumann algebraic setting. Using techniques from complex analysis, in particular properties of inner functions and entire functions of exponential type, we show that for any endomorphism $V=\varphi(P)$ there exists a minimal localization radius $r_\varphi$ such that $H_V(I)$ is 
trivial if the 
length of $I$ is smaller than $2r_\varphi$, and cyclic if the length is larger than that. We consider the dependence $\varphi\mapsto r_\varphi$, and show how analytic properties of $\varphi$ (regular boundary behavior, density of zeros) influence the size of the subspaces \eqref{eq:IntroSubspaces}. In particular, we give examples of inner functions $\varphi$ realizing any value in $[0,\infty]$ as localization radius $r_\varphi$. This analysis relies on an explicit characterization of the ``undeformed'' local subspaces $H(I)$ in terms of entire functions, which is provided for the one- and two-dimensional case in the Appendix \ref{Section:Appendix}.

We discuss our results in the concluding Section~\ref{Section:Conclusions}. There we also compare with the von Neumann algebraic setting, the construction of integrable models, and known criteria for non-trivial intersections like the split property and modular nuclearity.

\section{Standard pairs and their endomorphism subnets}\label{Section:StandardPairs}

The main object of our investigation is a so-called {\em standard pair}. In preparation for its definition, recall that a closed real subspace $H$ of a complex Hilbert space $\Hil$ is called {\em cyclic} if $H+iH$ is dense in $\Hil$, and {\em separating} if $H\cap iH=\{0\}$. As usual, we denote by $H'$ the symplectic complement of $H$ w.r.t. the symplectic form $\im\langle\,\cdot\,,\,\cdot\,\rangle$, and recall that a closed real subspace $H$ is cyclic if and only if $H'$ is separating. A {\em standard subspace} $H\subset\Hil$ is a closed real subspace which is cyclic and separating. More information on standard subspaces, and proofs of the statements made here and further below in the text can be found in \cite{Longo:2008}.

We will also be concerned with unitary strongly continuous representations $T$ of $\Rl^d$ on $\Hil$, for the two cases $d=1$ and $d=2$. For $d=1$, we then have a self-adjoint generator $P$ such that $T(x)=e^{ixP}$, and for $d=2$, we introduce two generators $P_\pm$ of $T$ such that $T(x)=e^{i(x_+P_++x_-P_-)}$, where $x=(x_+,x_-)\in\Rl^2$ is presented in its light cone coordinates\footnote{Note that with these conventions, $x_\pm=x_0\pm x_1$ and $P_\pm=\frac{1}{2}(P_0\mp P_1)$, where $x_0,x_1$ and $P_0,P_1$ are the components of position and momentum w.r.t. the standard basis of two-dimensional Minkowski space.}. $T$ will be said to satisfy the {\em spectrum condition}, or to be a {\em positive energy representation}, if $P>0$ (for $d=1$), or if $P_+>0$ and $P_->0$ (for $d=2$). The two-dimensional positive energy representations thus consist of two commuting one-dimensional positive energy representations, which occasionally we shall denote by $T_\pm(a):=e^{iaP_\pm}$, $a\in\Rl$.

For concise formulations encompassing both the one- and two-dimensional case, it is also useful to introduce the sets, $d=1,2$,
\begin{align}
	W_d
	:=
	\begin{cases}
		\Rl_+ & d=1\\
		\{x\in\Rl^2\,:\,x_+>0,\,x_-<0\}
		& d=2
	\end{cases}
	\,.
\end{align}

\begin{definition}\label{Definition:StandardPair}
	Let $d\in\{1,2\}$.
	\begin{enumerate}
		\item A $d$-dimensional standard pair is a pair $(H,T)$ consisting of a standard subspace $H\subset\Hil$ of some complex Hilbert space $\Hil$ and a unitary strongly continuous representation $T$ of the translation group $\Rl^d$ on $\Hil$ such that
		\begin{itemize}
			\item $T(x)H\subset H$ for all $x\in W_d$.
			\item $T$ satisfies the spectrum condition.
		\end{itemize}
		\item A standard pair is called non-degenerate if there exists no non-zero vector in $\Hil$ which is invariant under all $T(x)$, $x\in\Rl^d$.
		\item An endomorphism of a standard pair is a unitary $V$ on $\Hil$ such that $VH\subset H$ and $[V,T(x)]=0$ for all $x\in\Rl^d$. The semigroup of all endomorphisms of $(H,T)$ is denoted $\E(H,T)$.
	\end{enumerate}
\end{definition}

In the literature \cite{LongoWitten:2010,LongoRehren:2011,BischoffTanimoto:2011}, the term ``standard pair'' is often reserved for what is here called a one-dimensional standard pair. In the present article, ``standard pair'' will refer to a pair in the sense of the above definition, with $d=1$ or $d=2$. As no confusion is likely to arise, we will use the same symbols $H,T,x$ in both the one- and two-dimensional situation. We also mention that a two-dimensional standard pair $(H,T)$ gives rise to a one-dimensional standard pair $(H,T_+)$ with the same standard space $H$, and representation $T_+$ instead of $T$. Similarly, the pair $(H',T_-)$ is standard w.r.t. $-W_1=\Rl_-$, i.e. satisfies the above definition with $W_1$ replaced by $-W_1$.

In the context of quantum field theory, standard pairs can be derived from so-called {\em Borchers triples} $(\M,T,\Om)$ \cite{Borchers:1992,BaumgrtelWollenberg:1992,BuchholzLechnerSummers:2011}, consisting of a von Neumann algebra $\M\subset\B(\Hil)$, a vector $\Om$ which is cyclic and separating for $\M$ (in the sense of operator algebras), and the same representation $T$ of translations as above, assumed to act by endomorphisms on $\M$ for $x\in W_d$. One then obtains a standard pair $(H,T)$ by defining $H$ as the closure of the real subspace $\{A\Om\,:\,A=A^*\in\M\}$.

Just as the modular Tomita Takesaki theory of von Neumann algebras is an essential tool in the context of Borchers triples, the modular theory of standard subspaces is an essential tool in the setting considered here. To any standard subspace $H$ one can associate a Tomita operator $S=S_H$, that is the anti-linear operator defined on the domain $D(S):=H+iH$ by
\begin{align}
	S : \xi+i\eta\mapsto \xi-i\eta\,,\qquad \xi,\eta\in H\,.
\end{align}
This map is well-defined and densely defined as a consequence of $H$ being standard, and clearly satisfies $S^2=1$ on $D(S)$. The operator $S$ is closed, and its adjoint is the Tomita operator of the symplectic complement, $S_H^*=S_{H'}$. As in the von Neumann algebra situation, one introduces the polar decomposition of $S$ as
\begin{align}
	S
	=
	J\Delta^{1/2}\,.
\end{align}
Here the {\em modular conjugation} $J$ is an anti-unitary involution, and the {\em modular operator} $\Delta=S^*S$ is a positive non-singular operator satisfying $J\Delta J=\Delta^{-1}$.

One then has an analogue of Tomita's theorem, 
\begin{align}\label{eq:TomitasTheorem}
	\Delta^{it}H=H\,,\quad t\in\Rl\,,\qquad J H =H'\,.
\end{align}

Also Borchers' theorem \cite{Borchers:1992} on the commutation relation of the modular group and the translations has an analogue for standard pairs \cite{Longo:2008}. In fact, $t\in\Rl$,
\begin{align}\label{eq:BorchersCommutationRelations}
	\Delta^{it}T(x)\Delta^{-it}&=T(\la_d(t)x)\,,\qquad
	JT(x)J=T(-x)\,,\qquad x\in\Rl^d\,,
\end{align}
where
\begin{align}
	\la_1(t)x&=e^{-2\pi t}x\,,\qquad\;\; x\in\Rl\,,\\
	(\la_2(t)x)_\pm&=e^{\mp 2\pi t}x_\pm\,,\qquad x=(x_+,x_-)\in\Rl^2\,.
\end{align}
We denote by $G_d$ the group generated by the translations $y\mapsto y+x$, $x,y\in\Rl^d$, and $\la_d(t)$, $t\in\Rl$. In the one-dimensional situation, $G_1$ is thus the affine group of $\Rl$, i.e. the ``$ax+b$ group'', and in the two-dimensional situation, $G_2$ is the the proper orthochronous Poincar\'e group in two dimensions. Thanks to the commutation relations \eqref{eq:BorchersCommutationRelations}, we have a unitary strongly continuous representation $U$ of $G_d$ on $\Hil$, called the representation associated with $(H,T)$. We call a standard pair irreducible iff $U$ is irreducible.

The logarithms $\log P$ and $\log\Delta$ of the generators of $U(G_1)$ satisfy canonical commutation relations if $\log P$ is defined, i.e. in the non-degenerate case. By von Neumann uniqueness, $U$ is therefore a multiple of the unique irreducible, unitary, strongly continuous, positive energy representation $\hat{U}$ of $G_1$. In view of \eqref{eq:BorchersCommutationRelations}, $\hat{U}$ fixes also the conjugation $J$ up to a sign, and thus $H=\ker(1-J\Delta^{1/2})$ is uniquely fixed by $\hat{U}$, i.e. there exists only one one-dimensional standard pair with irreducible associated representation, up to unitary equivalence. In comparison, $G_2$ has infinitely many inequivalent irreducible representations, classified by their mass.  In Section~\ref{Section:1dCyclicity}, we will recall convenient realizations of these data.
\\
\\
\indent Having recalled these facts, we proceed to considering families of standard subspaces. Just as a Borchers triple gives rise to a net of von Neumann algebras \cite{Borchers:1992}, a standard pair gives rise to a net of standard subspaces. To describe it, we first define its index set $\I_d$, $d=1,2$, to be the family of all subsets of $\Rl^d$ which are translates of $W_d$ or $-W_d$, or intersections thereof, i.e. of the form
\begin{align}
	I_{x,y}
	:=
	(-W_d+y)\cap(W_d+x)
	\,,\qquad
	y-x\in W_d\,.
\end{align}
The condition $y-x\in W_d$ is equivalent to this intersection being non-empty. For $d=1$, the family $\I_1$ thus consists of all half lines and intervals in $\Rl$, and for $d=2$, the family $\I_2$ consists of all wedges and double cones in $\Rl^2$. We also agree to write $I\perp\tilde{I}$ for $I,\tilde{I}\in\I_d$ if $I$ and $\tilde{I}$ are disjoint (in case $d=1$) or if $I$ and $\tilde{I}$ are spacelike separated (in case $d=2$). When working in a purely two-dimensional setting, we will also use the more common notation $\OO_{x,y}$ instead of $I_{x,y}$ to denote the double cones.

To each $I\in\I_d$, we now define a real subspace $H(I)\subset\Hil$. For $x,y\in\Rl^d$, we set
\begin{align}\label{eq:Net}
	H(W_d+x)&:=T(x)H
	\,,\qquad
	H(-W_d+x):=T(x)H'
	\,,\\
	H(I_{x,y})&:=H(-W_d+y)\cap H(W_d+x)
	\,,\qquad y-x\in W_d\,.
	\label{eq:HI}
\end{align}

These maps mimic basic locality and covariance properties of quantum field theories, formulated as nets of von Neumann algebras with specific properties. In fact, we have the following.

\begin{proposition}\label{Proposition:BasicNetProperties}
	The map \eqref{eq:Net} satisfies, $I,\tilde{I}\in\I_d$,
	\begin{enumerate}
		\item $H(I)\subset\Hil$ is a closed real separating subspace,
		\item $H(I)\subset H(\tilde{I})$ if $I\subset\tilde{I}$,
		\item $H(I)\subset H(\tilde{I})'$ if $I\perp\tilde{I}$,
		\item $T(x)H(I)=H(I+x)$ for all $x\in\Rl^d$,
		\item $JH(I)=H(-I)$, $\Delta^{it}H(I)=H(\la_d(t)I)$ for all $t\in\Rl$.
	\end{enumerate}
\end{proposition}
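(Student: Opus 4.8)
The plan is to reduce all five assertions to a small set of \emph{master} facts about the two wedge families $H(W_d+x)=T(x)H$ and $H(-W_d+x)=T(x)H'$, and then to propagate these to the double cones $H(I_{x,y})$ by intersection. The facts I would establish first are: (a) $H$ and $H'$ are standard, hence closed, real and separating, and any closed real subspace of a separating subspace is again separating; (b) the covariance inclusion extends from $W_d$ to its closure, i.e. $T(a)H\subseteq H$ for all $a\in\overline{W_d}$ (take $a_n\in W_d$ with $a_n\to a$ and use strong continuity of $T$ together with closedness of $H$); (c) since each $T(x)$ is unitary it preserves the symplectic form, so $(T(x)K)'=T(x)K'$ for any real subspace $K$, and in particular $H(-W_d+x)=(H(W_d+x))'$ exactly; and (d) Tomita's theorem $JH=H'$, $\Delta^{it}H=H$ from \eqref{eq:TomitasTheorem} together with the Borchers relations \eqref{eq:BorchersCommutationRelations}. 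From (a) and (c), item (i) is immediate for the wedges, and for the double cones it follows because an intersection of closed real subspaces is closed and real and sits inside the separating space $H(W_d+x)$.

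For items (iv) and (v) on the wedges the computation is direct: $T(x)H(W_d+x')=T(x+x')H=H(W_d+x+x')$, and likewise for $-W_d$; using (d), $JH(W_d+x)=JT(x)J\,JH=T(-x)H'=H(-(W_d+x))$ and $\Delta^{it}H(W_d+x)=T(\la_d(t)x)H=H(W_d+\la_d(t)x)$, where I use that $\la_d(t)$ scales the light-cone coordinates by positive factors and hence fixes $W_d$. The passage to double cones in (iv), (v) is then purely formal: $T(x)$, $J$, $\Delta^{it}$ are bijections and therefore commute with the intersection defining $H(I_{x,y})$, while the affine maps $z\mapsto z+x$, $z\mapsto -z$, $z\mapsto\la_d(t)z$ send the defining wedges of $I_{x,y}$ to the defining wedges of $I_{x,y}+x$, $-I_{x,y}=I_{-y,-x}$, and $\la_d(t)I_{x,y}$ respectively, so the two sides match term by term.

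For the monotonicity (ii) I would first treat inclusions between wedges of the same type. If $W_d+x\subseteq W_d+x'$ then $x-x'\in\overline{W_d}$, so by (b) $T(x)H=T(x')T(x-x')H\subseteq T(x')H$; dualising with (c) gives the analogous statement $H(-W_d+x)\subseteq H(-W_d+x')$ whenever $-W_d+x\subseteq -W_d+x'$. The general case reduces to these by the geometric observation that an inclusion $I\subseteq\tilde I$ forces the corresponding inclusions of bounding wedges, because the lower (resp. upper) tip of $I$ must lie in the closure of the corresponding bounding wedge of $\tilde I$; combining the two one-sided inclusions and intersecting yields $H(I)\subseteq H(\tilde I)$ in every remaining case.

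The least routine point, and the one I would treat most carefully, is locality (iii). Here the key is a geometric separation lemma: if $I\perp\tilde I$ then there is a reference point $z$ with $I\subseteq -W_d+z$ and $\tilde I\subseteq W_d+z$ (or with the roles of $I,\tilde I$ interchanged). For $d=1$ this is clear, taking $z$ to be any point separating the two disjoint intervals. For $d=2$ I would compute the causal complement of a double cone $\OO_{x,y}$ explicitly in light-cone coordinates and find that it is the disjoint union of a left wedge $-W_2+x$ and a right wedge $W_2+y$; since every element of $\I_2$ is connected, the spacelike-separated region $\tilde I$ must lie entirely in one of these two wedges, which produces the desired $z$. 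Granting the lemma, monotonicity (ii) gives $H(I)\subseteq H(-W_d+z)$ and $H(\tilde I)\subseteq H(W_d+z)$, and then the exact symplectic orthogonality $H(-W_d+z)=(H(W_d+z))'$ from (c), together with order-reversal of the symplectic complement, yields $H(I)\subseteq(H(W_d+z))'\subseteq H(\tilde I)'$. I expect verifying this separation lemma in the two-dimensional case, and checking its compatibility with the definition of $\perp$, to be the main obstacle; everything else is bookkeeping built on (a)--(d).
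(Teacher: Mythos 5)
Your proposal is correct and takes essentially the approach the paper intends: the paper offers no written proof, declaring all five items ``well-known straightforward consequences of the definitions'' and citing Borchers [Prop.~III.3] for the von Neumann algebraic analogue in $d=2$, and your argument --- wedge covariance and exact wedge duality $H(-W_d+z)=\left(H(W_d+z)\right)'$ combined with the geometric facts that tips of included regions lie in the closures of the bounding wedges and that the causal complement of a double cone is the disjoint union of a left and a right wedge --- is precisely that standard argument transplanted to the standard-subspace setting. All steps check out, so there is nothing to add.
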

All these statements are well-known straightforward consequences of the definitions, see for example \cite[Prop.~III.3]{Borchers:1992} for a proof in the von Neumann algebraic setting for $d=2$.

It also follows from the definition that the net built from a two-dimensional standard pair $(H,T)$ contains two one-dimensional nets, built from $(H,T_+)$ and $(H',T_-)$. More specifically,
\begin{align}\label{eq:2dcontains1d+}
	H_+(I^{(1)}_{a,b})
	:=
	T_+(a)H\cap T_+(b)H'
\end{align}
and
\begin{align}\label{eq:2dcontains1d-}
	H_-(I^{(1)}_{a,b})
	:=
	T_-(a)H'\cap T_-(b)H
\end{align}
are one-dimensional nets satisfying all properties of Proposition~\ref{Proposition:BasicNetProperties}, with $\la_1(t)$ replaced by $\la_1(-t)$ in {\em v)} for the second net. They are related to the net $I^{(2)}_{x,y}\mapsto H(I^{(2)}_{x,y})$ built from the two-dimensional standard pair $(H,T)$ via
\begin{align}\label{eq:2dNetContains1dNet+}
	H(I^{(2)}_{x,y})
	&\supset
	T_-(a)H_+(I^{(1)}_{x_+,y_+}),\qquad y_-\leq a\leq x_-\,,
	\\	
	H(I^{(2)}_{x,y})
	&\supset
	T_+(b)H_-(I^{(1)}_{y_-,x_-}),\qquad x_+\leq b\leq y_+\,.
	\label{eq:2dNetContains1dNet-}
\end{align}
These claims follow from the definitions of the nets $H,H_+,H_-$, making use of the fact that $T_+(a),T_-(-a)\in\E(H,T)$ for $a\geq0$. They express the geometric situation depicted in the following illustration.
\begin{center}
	\includegraphics[width=80mm]{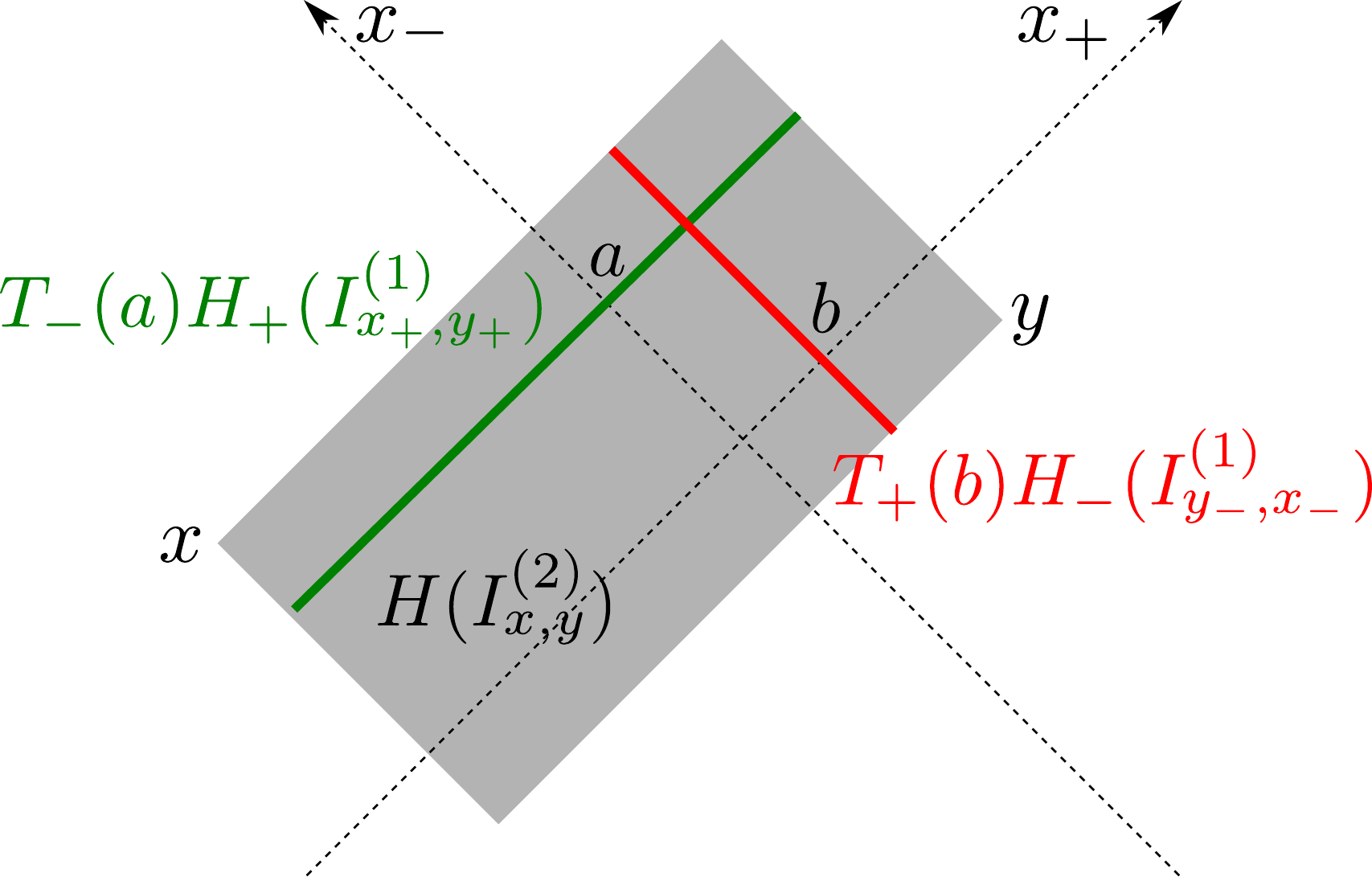}
\end{center}

Note that so far we have made no claim towards cyclicity of the local subspaces $H(I)$, $I\in\I_d$. This is done in the next proposition, which is closely related to \cite[Thm.~4.5]{BrunettiGuidoLongo:2002}.

\begin{proposition}\label{Proposition:CyclicityForHNet}
	Let $(H,T)$ be a non-degenerate standard pair. 
	\begin{enumerate}
		\item $H(I)$ is cyclic (and thus standard) for any $I\in\I_d$.
		\item $H$ is a factor, i.e.
			\begin{align}\label{eq:HIsAFactor}
				H\cap H'=\{0\}\,.
			\end{align}
	\end{enumerate}
\end{proposition}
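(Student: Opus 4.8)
The plan is to prove the two assertions by separate arguments: the factor property by a direct computation with the modular data, and cyclicity by a Reeh--Schlieder type argument powered by the spectrum condition. I begin with the factor property \eqref{eq:HIsAFactor}. Writing $S=J\Delta^{1/2}$ for the Tomita operator of $H$ and recalling $S_{H'}=S^*=\Delta^{1/2}J$, a vector $\xi$ lies in $H\cap H'$ if and only if $S\xi=\xi$ and $S^*\xi=\xi$; eliminating $J$ between these two relations gives $\Delta\xi=\xi$, and then $J\xi=\xi$, while the converse is immediate. Hence $H\cap H'\subseteq\ker(\log\Delta)$, and it suffices to show that the modular group $\Delta^{it}$ has no nonzero fixed vector.

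To see this I use the commutation relations \eqref{eq:BorchersCommutationRelations}. If $\Delta^{it}\xi=\xi$ for all $t$, then $g(a):=\langle\xi,T_+(a)\xi\rangle$ satisfies $g(a)=g(e^{2\pi t}a)$ for every $t$, so $g$ is constant on the positive and on the negative half-line; by strong continuity of $T_+$ both constants equal $g(0)=\|\xi\|^2$, whence $g\equiv\|\xi\|^2$. The equality case of the Cauchy--Schwarz inequality then forces $T_+(a)\xi=\xi$ for all $a$, and the same argument with $T_-$ gives $T_-(a)\xi=\xi$; non-degeneracy now yields $\xi=0$. (For $d=1$ one argues directly with $T$ in place of $T_\pm$.)

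For cyclicity, the wedge/half-line subspaces are clear: $H(W_d+x)=T(x)H$ and $H(-W_d+x)=T(x)H'$ are unitary images of the standard subspaces $H,H'$, hence cyclic. The work lies in the intervals $H(I_{x,y})$. In two dimensions I would reduce to one dimension using \eqref{eq:2dNetContains1dNet+}--\eqref{eq:2dNetContains1dNet-}: each $H(I^{(2)}_{x,y})$ contains a unitary image of a one-dimensional interval subspace $H_\pm(I^{(1)})$, so cyclicity of the one-dimensional nets implies cyclicity here; splitting $\Hil$ according to the kernels of $P_+$ and $P_-$, which intersect trivially by non-degeneracy, ensures that the one-dimensional pairs entering this reduction may be taken non-degenerate. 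In one dimension, von Neumann uniqueness (Section~\ref{Section:StandardPairs}) reduces matters, up to multiplicity, to the irreducible pair; cyclicity is inherited by multiples $\hat U\otimes 1$ since there the interval subspaces take the form $\hat H(a,b)\otimes K$ for a real form $K$ of the multiplicity space, whose complex span is the whole space.

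It then remains to prove that a one-dimensional interval subspace $H(a,b)=T(a)H\cap T(b)H'$ of an irreducible non-degenerate pair is cyclic, and here I would run the Reeh--Schlieder argument. If $\langle\xi,\eta\rangle=0$ for all $\eta\in H(a,b)$, pick $\eta\in H(a',b')$ with $a<a'<b'<b$; then $T(c)\eta\in H(a,b)$ for all small $c$, so $c\mapsto\langle\xi,T(c)\eta\rangle$ vanishes near $0$, and since $P\ge 0$ it extends to a bounded analytic function on the upper half plane, hence vanishes identically. Thus $\xi$ is orthogonal to every translate $H(a'+c,b'+c)$, i.e.\ to all bounded interval subspaces. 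I expect the main obstacle to be the final generation step, namely concluding $\xi=0$ from this. This amounts to an additivity statement --- that the translates of the interval subspaces exhaust the cyclic half-line subspace $T(a)H$, equivalently that $\bigcup_b\big(H\cap T(b)H'\big)$ is dense in $H$ --- which does not follow from formal manipulations of closed real subspaces, since intersections need not commute with increasing unions. This is precisely where the explicit structure of the irreducible representation must enter, either through Möbius covariance (a conformal map carries a bounded interval onto a half-line and transports cyclicity) or through the explicit entire-function description of the $H(I)$ provided in Appendix~\ref{Section:Appendix}; it is the analogue, in the standard-subspace setting, of the Reeh--Schlieder property of \cite[Thm.~4.5]{BrunettiGuidoLongo:2002}.
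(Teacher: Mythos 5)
Your proposal is correct, and it diverges from the paper's proof at two genuine junctures, in both cases defensibly. For part (ii), the paper simply quotes \cite[Thm.~2.5]{BrunettiGuidoLongo:2002} (absence of the trivial subrepresentation in $U$ implies \eqref{eq:HIsAFactor}), handling $d=1$ by viewing $U$ as a $G_2$-representation with one lightlike translation trivial; your dilation-invariance argument --- $H\cap H'\subset\ker(\log\Delta)$, then $g(a)=\langle\xi,T_\pm(a)\xi\rangle$ constant by \eqref{eq:BorchersCommutationRelations} and continuity at $0$, then saturation of Cauchy--Schwarz forcing translation invariance --- is a self-contained proof of essentially that cited theorem, uniform in $d$, and it is airtight (the domain bookkeeping $S\xi=\xi$, $S^*\xi=\xi\Rightarrow\Delta\xi=\xi$ goes through). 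For part (i), both proofs reduce the reducible one-dimensional case to the irreducible one, but differently: the paper invokes the criterion of \cite[Prop.~4.1]{BrunettiGuidoLongo:2002}, that $H'\cap T(x)H$ is cyclic iff the fixed-point set $\K_x$ of $S_xS'$ is dense, and exploits the identity $S_xS'=T(x)\Delta^{-1/2}T(x)\Delta^{-1/2}$, which depends only on the representation $U=\bigoplus_k\hat{U}$ and not on $J$ --- so density is inherited summandwise with no need to know how $H$ sits inside the direct sum. Your route instead requires $H$ itself to decompose, i.e.\ the claim $H(a,b)=\hat{H}(a,b)\otimes K$; this is true but needs a sentence you omit: by Schur's lemma the unitary $J(\hat{J}\otimes C_0)$ commuting with $\hat{U}\otimes 1$ is of the form $1\otimes u$, so $J=\hat{J}\otimes C$ with $C$ an antiunitary involution on the multiplicity space, and in an orthonormal basis of $C$-fixed vectors $S$, hence every $H(a,b)$, becomes the $\ell^2$-direct sum of copies of its irreducible counterpart (note $\|\hat{\Delta}^{1/2}\xi_n\|=\|\xi_n\|$ on $\hat{H}$, so domain questions take care of themselves). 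The paper's trick is slicker precisely because it bypasses the structure of $J$. Finally, both reductions terminate at the same base case --- cyclicity of the interval subspaces of the irreducible one-dimensional pair --- which the paper also does not prove but quotes as known from the current algebra on the circle, i.e.\ exactly the M\"obius-covariance transport you name as one of your two options; so your candid flagging of the ``final generation step'' as the point where additivity fails to be formal, and where either conformal covariance or the entire-function description of Appendix~\ref{Section:Appendix} must enter, matches the paper's own treatment rather than falling short of it. Your two-dimensional reduction via \eqref{eq:2dNetContains1dNet+}--\eqref{eq:2dNetContains1dNet-} is the paper's argument verbatim.
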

\begin{proof}
	{\em i)} By translational covariance (Prop.~\ref{Proposition:BasicNetProperties}~{\em iv)}), it is sufficient to consider regions $I_{x,y}$ with $y=0$, $x\in -W_d$. In the one-dimensional irreducible situation, i.e. for the unique (non-degenerate) one-dimensional standard pair with associated irreducible representation $\hat{U}$ of $G_1$, cyclicity of $H(I_{x,0})$ is known to hold for all $x<0$ (exemplified by the model of the current algebra on the circle). 
	
	To analyze the general reducible one-dimensional case, we recall that the standard subspaces $H'$ and $T(x)H$ have the modular operators $S'=J\Delta^{-1/2}$ and $S_x=T(x)J\Delta^{1/2}T(-x)$, respectively, and that $H(I_{x,0})=H'\cap T(x)H$ is cyclic if and only if $\K_x:=\{\Psi\in\dom(S_xS')\,:\,S_xS'\Psi=\Psi\}$ is dense \cite[Prop.~4.1]{BrunettiGuidoLongo:2002}. Since $U$ is assumed to be non-degenerate, it decomposes into a direct sum $U=\bigoplus_k \hat{U}$. But $S_xS'\Psi=T(x)J\Delta^{1/2}T(-x)J\Delta^{-1/2}\Psi=T(x)\Delta^{-1/2}T(x)\Delta^{-1/2}\Psi$ depends only on the representation $U$. Thus also $\K_x$ decomposes into a direct sum of subspaces, each of which is dense. Thus $\K_x$ is dense and $H(I_{x,0})$ is cyclic for all $x<0$.
	
	\pagebreak
	In the two-dimensional case, we decompose the Hilbert space as $\Hil=\Hil_+\oplus\Hil_-$ with $\Hil_+:=\ker P_-$ and note that $U$ and $J$ decompose into direct sums. We now use that $G_2$ contains two copies of $G_1$, represented in $U$ by the modular group and the translations generated by $P_+$ or $P_-$. On  $\Hil_+$, the first representation of $G_1$ is non-degenerate because $U$ was assumed to be non-degenerate, i.e. $\ker P_+\cap\ker P_-=\{0\}$, and on $\Hil_-$, the second representation of $G_1$ is non-degenerate by construction. As also $H=H^+\oplus H^-$ decomposes into a direct sum, we have on both summands a one-dimensional net (generated by $(H^+,T_+|_{\Hil_+})$ and $(H^-,T_-|_{\Hil_-})$, respectively) with cyclic interval subspaces. But the double cone spaces of the net $H$ contain images of these interval subspaces under unitary transformations, see \eqref{eq:2dcontains1d+} and \eqref{eq:2dcontains1d-}. Thus also $H(I^{(2}_{x,y})$ is cyclic for any $y-x\in W_2$.
	
	{\em ii)} If $T$ is non-degenerate, then $U$ does not contain the trivial representation. It was shown in \cite[Thm.~2.5]{BrunettiGuidoLongo:2002} that in the two-dimensional case, this implies \eqref{eq:HIsAFactor}. But also in the one-dimensional case, we can consider the representation $U$ of $G_1$ as a representation of $G_2$, with one lightlike translation being represented trivially. This representation again does not contain the trivial one, so the conclusion also holds in this case.
\end{proof}

We now come to the main topic of this article, namely the construction of certain ``deformed'' versions of the one- or two-dimensional nets $\I_d\ni I\mapsto H(I)$. Similar to the generalization of \cite{LongoRehren:2004} obtained in \cite{LongoWitten:2010}, the input into this construction is an endomorphism $V\in\E(H,T)$. We define, $x,y\in\Rl^d$, 
\begin{align}\label{eq:NetV}
	H_V(W_d+x)&:=T(x)VH
	\,,\qquad
	H_V(-W_d+y):=T(y)H'
	\,,\\
	H_V(I_{x,y})
	&:=
	H_V(-W_d+y)\cap H_V(W_d+x)
	\,,\qquad y-x\in W_d\,.
\end{align}
For the trivial endomorphism $V=1$, this definition clearly coincides with the original construction \eqref{eq:Net}. For non-trivial $V$, we obtain proper subnets.

\begin{lemma}\label{Lemma:BasicNetPropertiesForHV}
	Let $(H,T)$ be a standard pair (one- or two-dimensional) and $V\in\E(H,T)$ an endomorphism.
	\begin{enumerate}
		\item $H_V(I)=H(I)\cap VH(I)\subset H(I)$ for any $I\in\I_d$.
		\item The properties i)--iv) of Proposition~\ref{Proposition:BasicNetProperties} hold if $H$ is replaced everywhere by $H_V$. Thus $I\mapsto H_V(I)$ is a translationally covariant local subnet of $I\mapsto H(I)$.
	\end{enumerate}
\end{lemma}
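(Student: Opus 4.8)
The plan is to prove part (i) first and obtain part (ii) as an essentially formal consequence. The structural inputs are only that $V$ is a unitary, that $VH\subset H$, and that $V$ commutes with every $T(x)$, together with properties i)--iv) of Proposition~\ref{Proposition:BasicNetProperties} for the undeformed net. First I would record two elementary facts about $V$. Since $V$ is unitary it preserves the symplectic form $\im\langle\,\cdot\,,\,\cdot\,\rangle$, so $(VK)'=VK'$ for every closed real subspace $K$; applying the order-reversing map $K\mapsto K'$ to the hypothesis $VH\subset H$ then gives $H'\subset(VH)'=VH'$. Thus I have both $VH\subset H$ and $H'\subset VH'$ at my disposal. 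Moreover, being a bijection that commutes with all $T(x)$, $V$ commutes with the formation of intersections and with each translation.

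To prove (i) I would treat the generating regions and the double cones separately. For a wedge, $H(W_d+x)\cap VH(W_d+x)=T(x)(H\cap VH)=T(x)VH=H_V(W_d+x)$ using $VH\subset H$, and for the opposite wedge $H(-W_d+y)\cap VH(-W_d+y)=T(y)(H'\cap VH')=T(y)H'=H_V(-W_d+y)$ using $H'\subset VH'$. For a double cone I would expand, with $H(I_{x,y})=T(y)H'\cap T(x)H$,
\[
H(I_{x,y})\cap VH(I_{x,y})=T(y)H'\cap T(x)H\cap T(y)VH'\cap T(x)VH,
\]
and then discard the two redundant factors: $T(y)VH'$ is implied by $T(y)H'$ (since $H'\subset VH'$) and $T(x)H$ is implied by $T(x)VH$ (since $VH\subset H$), leaving exactly $T(y)H'\cap T(x)VH=H_V(I_{x,y})$. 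The inclusion $H_V(I)\subset H(I)$ is then immediate.

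With (i) available, part (ii) is short. Each $H_V(I)$ is a finite intersection of unitary images of the closed real subspaces $H,H'$, hence closed and real; and since $H_V(I)\subset H(I)$ with $H(I)$ separating, $H_V(I)$ is separating as well, which is property i). Isotony follows by intersecting monotone families: if $I\subset\tilde I$ then $H_V(I)=H(I)\cap VH(I)\subset H(\tilde I)\cap VH(\tilde I)=H_V(\tilde I)$. Locality follows from $H_V(I)\subset H(I)\subset H(\tilde I)'\subset H_V(\tilde I)'$ for $I\perp\tilde I$, the final inclusion holding because $H_V(\tilde I)\subset H(\tilde I)$ and $K\mapsto K'$ reverses inclusions. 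Covariance is read off the definition directly, as $T(z)$ commutes with $V$, with the translations, and with intersections. I would note explicitly that property v) is \emph{not} asserted, since $V$ need not commute with $J$ or with $\Delta^{it}$, so modular covariance of the deformed net generally fails.

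There is no genuinely difficult step here; the only point requiring care is the bookkeeping in (i), namely deriving the auxiliary inclusion $H'\subset VH'$ from unitarity and checking that a single redundancy argument covers both the half-space regions and the double cones, so that $H_V(I)=H(I)\cap VH(I)$ holds uniformly for all $I\in\I_d$.
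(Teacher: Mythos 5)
Your proof is correct and follows essentially the same route as the paper: the key observation $H'\subset(VH)'=VH'$ from unitarity, followed by the intersection computation using that $V$ commutes with the translations, is exactly the paper's argument, with part (ii) treated as a routine consequence in both. Your only additions are to spell out the (trivially analogous) wedge cases and the details of (ii), which the paper leaves as ``straightforward to check.''
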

\begin{proof}
	{\em i)} We first observe that since $VH\subset H$ and $V$ is unitary, we have $H'\subset(VH)'=VH'$. Using this and the fact that $V$ commutes with the translations, we obtain for $x,y\in\Rl^d$, $y-x\in W_d$,
	 \begin{align*}
	 	H(I_{x,y})\cap VH(I_{x,y})
	 	&=
	 	\left( T(x)H\cap T(y)H' \right) \cap \left( T(x)VH\cap T(y)VH'\right)
	 	\\
	 	&=
	 	T(x)\left(H\cap VH \right) \cap T(y)\left(H'\cap VH'\right)
	 	\\
	 	&=
	 	T(x)VH\cap T(y)H'
	 	\\
	 	&=
	 	H_V(I_{x,y})\,.
	 \end{align*}
	{\em ii)} Properties {\em ii)--iv)} of Proposition~\ref{Proposition:BasicNetProperties} are straightforward to check with part {\em i)} of this lemma, the properties of the net $I\mapsto H(I)$, and the fact that $V$ commutes with $T$. The separating property {\em i)} follows since $H_V(I)\subset H(I)$.
\end{proof}

Whereas the properties corresponding to isotony, locality and translation covariance immediately covariance carry over from $H$ to its endomorphism subnets $H_V$, $V\in\E(H,T)$, the same is not true for the covariance under the modular data,  Proposition~\ref{Proposition:BasicNetProperties}~{\em v)}, and the cyclicity statement of Proposition~\ref{Proposition:CyclicityForHNet}. 

The $G_d$-covariance can be seen to be incompatible with $H_V$ being a proper subnet. The following lemma gathers the relevant statements.
\begin{lemma}
	Let $(H,T)$ be a standard pair and $V\in\E(H,T)$ an endomorphism such that $JVH=H'$ or $\Delta^{it}VH=VH$ for all $t\in\Rl$. Then $VH=H$.
\end{lemma}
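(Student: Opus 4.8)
The statement splits into the two alternatives of the hypothesis, which are of very different difficulty.

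The case $JVH=H'$ is immediate. Since $VH\subset H$ and $J$ is an anti-unitary bijection satisfying $JH=H'$ by Tomita's theorem \eqref{eq:TomitasTheorem}, applying $J$ to $VH\subset H$ gives $JVH\subset JH=H'$. The assumption $JVH=H'$ then reads $JVH=JH$, and since $J$ is injective this yields $VH=H$.

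The case $\Delta^{it}VH=VH$ is the substantial one, and the plan is to reduce it to the single operator identity $\Delta_{VH}=\Delta$, where $\Delta$ is the modular operator of $H$. Write $K:=VH$; as $V$ is unitary and $H$ is standard, $K$ is again standard, with Tomita operator $S_K$, modular operator $\Delta_K=S_K^*S_K$ and conjugation $J_K$. From $K\subset H$ one gets that $S_K$ is a restriction of $S$, i.e. $S_K\subset S$, because on $D(S_K)=K+iK\subset H+iH=D(S)$ the two anti-linear involutions act identically. If moreover $\Delta_K=\Delta$, then $D(S_K)=D(\Delta_K^{1/2})=D(\Delta^{1/2})=D(S)$, so the inclusion $S_K\subset S$ of closed operators with coinciding domains becomes the equality $S_K=S$; since $H=\ker(1-S)$ and $K=\ker(1-S_K)$, this forces $K=H$, that is $VH=H$. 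Thus everything comes down to proving $\Delta_K=\Delta$.

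To obtain this, I would first extract commutation relations from the invariance. Since $\Delta^{it}K=K$, the Tomita operator transforms as $\Delta^{it}S_K\Delta^{-it}=S_{\Delta^{it}K}=S_K$, so $\Delta^{it}$ commutes with $S_K$ and hence, by uniqueness of the polar decomposition, with both $J_K$ and $\Delta_K$. Consequently the \emph{linear} unitary $u:=JJ_K$ commutes with $\Delta^{it}$ for all $t$, hence with the entire spectral calculus of $\Delta$, in particular with $\Delta^{1/2}$ (which it leaves invariant together with its domain). Now rewrite the inclusion $S_K\subset S$, i.e. $J_K\Delta_K^{1/2}\subset J\Delta^{1/2}$, in the form $u\,\Delta_K^{1/2}\subset\Delta^{1/2}$, and pass to adjoints: using that $u$ is bounded and commutes with $\Delta^{1/2}$, this produces the reverse inclusion $\Delta^{1/2}u\subset\Delta_K^{1/2}$. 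Comparing the domains occurring in the two inclusions gives $D(\Delta_K^{1/2})=D(\Delta^{1/2})$, and on this common domain $\Delta_K^{1/2}=\Delta^{1/2}u=\Delta^{1/2}u^{*}$. Injectivity of $\Delta^{1/2}$ then forces $u=u^{*}$, so $u^{2}=1$, and therefore $\Delta_K=(\Delta^{1/2}u)^{2}=\Delta u^{2}=\Delta$, as required.

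The main obstacle is precisely this last step: converting the soft inclusion $S_K\subset S$, together with the commutation relations, into the rigid identity $\Delta_K=\Delta$. The delicate points are the handling of the unbounded, mutually non-commuting operators $\Delta^{1/2}$ and $\Delta_K^{1/2}$ under adjoints, and the careful bookkeeping of domains; it is essential that the modular invariance is two-sided (all $t\in\Rl$), since this is exactly what makes $u=JJ_K$ commute with $\Delta^{1/2}$. I note that this argument uses only the standard-subspace structure and the commutativity of $V$ with the modular data induced by the hypothesis, and not the translation representation $T$, so positivity of the energy is not needed for this lemma.
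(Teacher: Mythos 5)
Your proof is correct. The first case is exactly the paper's argument: $JVH=H'=JH$ plus injectivity of $J$ gives $VH=H$. For the second case the paper does something much shorter: it simply cites the known fact that an inclusion $K\subset H$ of standard subspaces with $\Delta_H^{it}K=K$ for all $t$ forces $K=H$ \cite[Prop.~2.1.10]{Longo:2008}, whereas you re-prove this fact from scratch. Your derivation is sound at each step I checked: $K=VH$ is standard; $S_K\subset S$ holds because $K+iK\subset H+iH$ and $S$ is well defined there ($H\cap iH=\{0\}$); $\Delta^{it}S_K\Delta^{-it}=S_{\Delta^{it}K}=S_K$ together with uniqueness of the polar decomposition of the closed antilinear operator $S_K$ yields the commutation of $\Delta^{it}$ with $J_K$ and $\Delta_K^{is}$; and the adjoint of $u\Delta_K^{1/2}\subset\Delta^{1/2}$ (using $(uT)^*=T^*u^*$ for bounded $u$) gives $\Delta^{1/2}u\subset\Delta_K^{1/2}$, after which the invariance $u^*D(\Delta^{1/2})=D(\Delta^{1/2})$ — this is where the commutation with the spectral calculus of $\Delta$ is genuinely needed, as you note — yields equal domains, $\Delta_K^{1/2}=\Delta^{1/2}u=\Delta^{1/2}u^*$, $u=u^*$ by injectivity of $\Delta^{1/2}$, and finally $\Delta_K=\Delta u^2=\Delta$, hence $S_K=S$ and $K=\ker(1-S_K)=\ker(1-S)=H$. (One could even finish more directly: $\Delta_K=\Delta$ forces $\Delta^{1/2}u=\Delta^{1/2}$, i.e.\ $u=1$, so $J_K=J$.) Your remark that the translation representation and positivity of the energy play no role in this lemma is also accurate — the paper's hypotheses mention $V\in\E(H,T)$, but only $VH\subset H$ and the modular data of $H$ enter. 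What the paper's citation buys is a two-line proof; what your version buys is self-containedness, at the cost of the careful unbounded-operator and domain bookkeeping that the cited proposition encapsulates.
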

\begin{proof}
	By \eqref{eq:TomitasTheorem}, $H'=JH$, i.e. $JVH=H'$ implies $VH=H$. Similarly, $\Delta^{it}VH=VH$ for all $t\in\Rl$ implies $VH=H$ because $VH\subset H$ \cite[Prop.~2.1.10]{Longo:2008}.
\end{proof}

The two conditions listed here, $JVH=H'$ or $\Delta^{it}VH=VH$, are particular cases of the covariance properties in Proposition~\ref{Proposition:BasicNetProperties}~{\em v)} (for $I=W_d$). Since $VH=H$ is equivalent to $H_V(I)=H(I)$ for all $I\in\I_d$ in view of the definition of the net $H_V$ \eqref{eq:NetV}, we see that $H_V$ will be $G_d$-covariant only in trivial cases.

We also give here a similar statement for later reference.

\begin{proposition}\label{unique}
	Let $(H,T)$ be a non-degenerate standard pair with irreducible associated representation $U$. If $K\subset\Hil$ is a proper real subspace of $\Hil$ such that  $T(t)K \subset K$ for all $t\in W_d$, and $\Delta_H^{is}K = K$, $s\in \mathbb R$, then $K = \alpha H$ for some $\alpha\in \mathbb C$, $|\alpha| = 1$.
\end{proposition}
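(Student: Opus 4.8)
The plan is to first promote $K$ to a \emph{standard} subspace, then to show that its modular data coincide with those of $H$ up to a multiplier which the hypothesis $\Delta_H^{is}K=K$ eventually forces to be scalar. Irreducibility of $U$ is the rigidity input, used twice via Schur's lemma. Throughout I assume, as I must, $K\neq\{0\}$.

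\textbf{Standardness of $K$.} First I would prove that $K$ is cyclic and separating. Consider the closed \emph{complex} subspaces $\mathcal{V}:=\overline{K+iK}$ and $\mathcal{W}:=K\cap iK$. Both are invariant under $\Delta_H^{is}$ for all $s$ (as $\Delta_H^{is}$ is complex-linear with $\Delta_H^{is}K=K$) and under $T(t)$ for $t\in W_d$ (from $T(t)K\subseteq K$). For $d=1$, take any $\psi$ in such a subspace $\mathcal{X}\in\{\mathcal{V},\mathcal{W}\}$ and any $\phi\in\mathcal{X}^\perp$; then $F(t):=\langle\phi,T(t)\psi\rangle$ is bounded, analytic on the upper half plane and continuous up to $\Rl$ because $P\geq 0$, and it vanishes on $[0,\infty)$ since $T(t)\psi\in\mathcal{X}$ there. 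By boundary uniqueness for bounded analytic functions $F\equiv 0$, so $T(t)\mathcal{X}\subseteq\mathcal{X}$ for \emph{all} $t\in\Rl$. Hence $\mathcal{X}$ is invariant under the whole irreducible representation $U$ and equals $\{0\}$ or $\Hil$. Since $K$ is proper and nonzero, this gives $\mathcal{V}=\Hil$, $\mathcal{W}=\{0\}$, i.e.\ $K$ is standard, and therefore $(K,T)$ is a standard pair.

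\textbf{Matching the modular data.} By the Borchers relations \eqref{eq:BorchersCommutationRelations} applied to both pairs $(H,T)$ and $(K,T)$, the groups $\Delta_H^{is}$ and $\Delta_K^{is}$ implement the \emph{same} scaling automorphism of $T$, so $C(s):=\Delta_K^{is}\Delta_H^{-is}$ commutes with every $T(t)$. In the irreducible case $\{T(t)\}'$ is maximal abelian ($P$ has simple spectrum), whence $C(s)=g_s(P)$ is a function of $P$, and the identity $C(s+u)=C(s)\,\Delta_H^{is}C(u)\Delta_H^{-is}$ expresses $g$ as a unimodular $1$-cocycle for the scaling flow on $\mathrm{spec}\,P$. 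That flow is a single transitive orbit, so the cocycle is a coboundary: there is a unitary $M=\mu(P)\in\{T(t)\}'$ with $\Delta_K^{is}=M\,\Delta_H^{is}\,M^*$.

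\textbf{Removing the multiplier.} Set $\tilde K:=M^*K$. As $M$ is a function of $P$ and commutes with $T$, $(\tilde K,T)$ is again a standard pair and $\Delta_{\tilde K}=M^*\Delta_K M=\Delta_H$. Now $J_H$ and $J_{\tilde K}$ both commute with $\Delta_H^{is}$ and both send $T(t)\mapsto T(-t)$ (Borchers for the two pairs), so $J_{\tilde K}J_H$ commutes with $U$; by irreducibility it is a phase, giving $J_{\tilde K}=\lambda J_H$ and hence $\tilde K=\alpha H$ for some $|\alpha|=1$. Thus $K=\alpha MH$. Finally I feed in $\Delta_H^{is}K=K$: writing $\sigma^H_s:=\mathrm{Ad}\,\Delta_H^{is}$ and using $\Delta_H^{is}H=H$, one gets $\Delta_H^{is}K=\alpha\,\sigma^H_s(M)H$, so invariance forces $\sigma^H_s(M)H=MH$, i.e.\ $R_s:=M^*\sigma^H_s(M)$ obeys $R_sH=H$. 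Equivalently $R_s$ commutes with $S_H=J_H\Delta_H^{1/2}$, hence with both $\Delta_H$ and $J_H$. Being a function of $P$, commuting with $\Delta_H$ makes $R_s$ a constant $c_s$ (scaling is transitive on $\mathrm{spec}\,P$), and commuting with $J_H$ makes $c_s$ real, so $c_s=\pm1$ and $c_s\equiv 1$ by continuity. Then $\sigma^H_s(M)=M$, so $M$ commutes with $\Delta_H^{is}$ and with $T$, hence with the irreducible $U$, and is therefore scalar. This yields $K=\alpha MH=\alpha' H$ with $|\alpha'|=1$.

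\textbf{Two dimensions, and the crux.} For $d=2$ and irreducible (massive, multiplicity-free) $U$, the representation is carried by $L^2(\Rl)$ in the rapidity variable, with $\Delta_H^{is}$ the boost and $P_\pm$ multiplication by $\tfrac{m}{2}e^{\pm\theta}$; the momenta again generate a maximal abelian algebra and the modular flow acts transitively, so the boundary-uniqueness argument (applied to each lightlike translation) and the cocycle computation go through verbatim. I expect the genuinely delicate, load-bearing part to be the \emph{matching of the modular data}: recognizing $\Delta_K^{is}\Delta_H^{-is}$ as a function of $P$, trivializing the cocycle to extract $M$, and then eliminating $M$ via $\Delta_H^{is}K=K$. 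By contrast the standardness of $K$ is comparatively soft, resting only on positivity of the energy and boundary uniqueness.
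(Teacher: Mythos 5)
Your argument is correct, but it diverges from the paper's proof at the central step and is considerably heavier there. The standardness part is essentially the paper's: the paper takes the closed complex span of $\bigcup_{t}T(t)K$, which is $U$-invariant and hence all of $\Hil$ by irreducibility, then runs a Reeh--Schlieder argument to get cyclicity of $K$ itself, and applies the same reasoning to $K'$ for the separating property; your variant (boundary uniqueness applied directly to $\overline{K+iK}$ and $K\cap iK$) is the same mechanism in a slightly different packaging. The genuine difference is that the paper feeds in the hypothesis $\Delta_H^{is}K=K$ \emph{immediately}: a unitary preserving the standard subspace $K$ commutes with $S_K$, so $\Delta_H^{is}$ commutes with $\Delta_K^{it}$; hence $Z(s):=\Delta_H^{-is}\Delta_K^{is}$ is a one-parameter unitary group commuting with $T$ (Borchers' relation \eqref{eq:BorchersCommutationRelations} applied to both pairs) and with $\Delta_H^{it}$, hence with all of $U$, so $Z(s)=e^{i\theta s}$ by Schur; and $e^{i\theta s}K=Z(s)K=K$ forces $\theta=0$, since otherwise $K$ would be invariant under a dense set of phases, contradicting $K\cap iK=\{0\}$. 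Thus $\Delta_K=\Delta_H$ in three lines, with no maximal abelianness and no cocycles, after which $J_KJ_H$ is scalar by Schur and $S_K=zS_H$ gives $K=\alpha H$. You instead postpone the modular hypothesis and first prove the stronger intermediate claim that \emph{every} standard $K$ with $T(t)K\subset K$, $t\in W_d$, has the form $\alpha\,\mu(P)H$ --- which costs two inputs the paper's proof never touches: maximal abelianness of the translation algebra (true by irreducibility, cf.\ Proposition~\ref{proposition:MaximalAbelianInIrreps}, but an extra ingredient) and the trivialization of a measurable unimodular cocycle over the transitive dilation flow (where you should at least remark that a jointly measurable version of $(s,p)\mapsto g_s(p)$ can be chosen before defining $\mu$ along the orbit). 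What your detour buys is a Longo--Witten-flavored classification of all half-sided translation-invariant standard subspaces in the irreducible case, which has independent interest; what it costs is machinery --- and note that your predicted crux, ``matching the modular data,'' is precisely the step the paper's commutation trick renders trivial, while your final elimination $c_s=\pm1\Rightarrow c_s\equiv1$ via $J_H$-reality and continuity is the same move as the paper's $\theta=0$. One small repair: for $d=2$ the proposition also covers the massless irreducible classes $0\pm$, where one lightlike momentum vanishes; your sketch treats only the massive case, though these reduce to the one-dimensional argument with $P$ replaced by $\pm P_\pm$ (and ``multiplicity-free'' is redundant for irreducible $U$).
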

\begin{proof}
	The closed complex linear span of $\cup_{t\in\mathbb R^d} T(t)K$ is an invariant subspace for the representation $U$ of $G_d$ associated with $(H,T)$; as $K\neq \{0\}$, it is equal to $\Hil$ by irreducibility. A Reeh-Schlieder type argument then shows that $K$ is cyclic. By applying this reasoning to $K'$ we see that $K$ is separating too.

	As $\Delta_H^{is}K = K$, we have that $\Delta_H^{is}$ and $\Delta_K^{it}$ commute; by the commutation relation \eqref{eq:BorchersCommutationRelations} $Z(s):=\Delta_H^{-is}\Delta_K^{is}$ is a one-parameter unitary group commuting with $T(x)$. Thus $Z$ commutes with $U$, so $Z(s) =e^{i\theta s}$ for some $\theta\in\mathbb R$ by the irreducibility of $U$. As $e^{i\theta s}K=Z(s)K = K$ we have $Z(s) =1$, namely $\Delta_K = \Delta_H$.

	By the commutation relations \eqref{eq:BorchersCommutationRelations}, we see that $J_K J_H$ is a unitary commuting with $U$, thus 
	$J_H J_K = z$ with $z\in\mathbb C$, $|z|=1$. Thus $S_K = zS_H$ and $K = \alpha H$ where $\alpha\in \mathbb C$, $\alpha^2 = z$.
\end{proof}

Whereas full $G_d$-covariance is ruled out in general, the situation is quite different for the cyclicity question of the local subspaces $H_V(I)$ -- these spaces might or might not be cyclic depending on $V$ and $I$. This is in close analogy to the situation encountered when generating nets of von Neumann algebras from a single algebra, where only under additional assumptions, like modular nuclearity \cite{BuchholzDAntoniLongo:1990-1} or the split property for wedges \cite{DoplicherLongo:1984}, cyclicity for intersections corresponding to double cones is known to hold \cite{BuchholzLechner:2004, Lechner:2008}. In fact, proving cyclicity for the intersections analogous to \eqref{eq:HI} is usually the hardest step in such a construction program.

It is the main aim of the present work to answer this question in the standard subspace setting, i.e. to find conditions on the endomorphism $V\in\E(H,T)$ guaranteeing the cyclicity of the interval/double cone subspaces $H(I_{x,y})$. We will comment in more detail on the relation to the von Neumann algebra setting in our conclusions.

To address the cyclicity question, it is helpful to have an explicit characterization of the endomorphism semigroup $\E(H,T)$ of a standard pair. For the one-dimensional irreducible case, such a characterization is known and will be recalled now. We begin with a general description of endomorphisms, ignoring the translations $T$. For its formulation, we use the notation $\Strip_{a}:=\{\zeta\in\Cl\,:\,0<\im\zeta<a\}$ for strips in the complex plane.

\pagebreak
\begin{lemma}{\bf \cite{ArakiZsido:2005, Longo:2008}}\label{Lemma:AnalyticityOfEndomorphisms}\\
	Let $H\subset\Hil$ be a standard subspace and $V\in\B(\Hil)$. The following are equivalent:
	\begin{enumerate}
		\item $VH\subset H$
		\item $JVJ\Delta^{1/2}\subset\Delta^{1/2}V$
		\item The map $\Rl\ni t\mapsto V(t):=\Delta^{-it}V\Delta^{it}$ extends to a bounded weakly continuous function on the closed strip $\overline{\Strip_{1/2}}$, analytic in the open strip $\Strip_{1/2}$, such that $V(\frac{i}{2})=JVJ$.
	\end{enumerate}
\end{lemma}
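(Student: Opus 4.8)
The plan is to prove the cyclic chain of implications \emph{i)} $\Rightarrow$ \emph{ii)} $\Rightarrow$ \emph{iii)} $\Rightarrow$ \emph{i)}, treating the bounded analytic extension in \emph{iii)} as the key technical object. The starting point is the defining relation of the Tomita operator: $S_H = J\Delta^{1/2}$ is closed and antilinear with $S_H^2 = 1$ on $D(S_H) = H + iH$, and $H = \ker(1 - S_H)$, i.e. $\xi \in H$ iff $S_H\xi = \xi$. This characterization of $H$ in terms of $S_H$ is what translates the geometric condition $VH \subset H$ into an operator relation.

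First I would establish \emph{i)} $\Leftrightarrow$ \emph{ii)}. Suppose $VH \subset H$. For $\xi \in H$ we have $S_H\xi = \xi$, so $VS_H\xi = V\xi$; since $V\xi \in H \subset D(S_H)$ and $S_H(V\xi) = V\xi$, we get $S_H V\xi = V S_H \xi$ for $\xi \in H$. Because $V$ is bounded and linear while $S_H$ is antilinear, the cleanest way to phrase this is as an operator inclusion. Writing $S_H = J\Delta^{1/2}$ and using that $V$ is a bounded everywhere-defined operator, the intertwining $S_H V \supset V S_H$ on $D(S_H)$ rearranges, via $J^2 = 1$ and antilinearity of $J$, into $JVJ\,\Delta^{1/2} \subset \Delta^{1/2} V$. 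Here one must be careful that $JVJ$ is again linear and bounded, and that the inclusion of unbounded operators records the domain statement "$\Delta^{1/2}V$ extends $JVJ\Delta^{1/2}$." The converse \emph{ii)} $\Rightarrow$ \emph{i)} runs backwards: given the operator inclusion, apply it to $\xi \in H$ to recover $S_H V\xi = V\xi$, hence $V\xi \in \ker(1 - S_H) = H$.

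The substantive content is \emph{ii)} $\Leftrightarrow$ \emph{iii)}. This is a standard-space instance of the Araki--Zsido analytic-continuation technique, so I would invoke the cited references \cite{ArakiZsido:2005, Longo:2008} for the hard analytic estimates rather than reprove them. The idea: $V(t) = \Delta^{-it}V\Delta^{it}$ is a bounded, weakly continuous family, and the inclusion in \emph{ii)} is exactly the infinitesimal/boundary manifestation of the statement that $t \mapsto V(t)$ continues analytically into the strip $\Strip_{1/2}$ with the boundary value at $\im\zeta = \tfrac12$ given by $V(\tfrac{i}{2}) = JVJ$. Concretely, for a dense set of vectors $\phi, \psi$ in suitable analytic domains of $\Delta$, one shows the function $\zeta \mapsto \langle \phi, V(\zeta)\psi\rangle$ is holomorphic and bounded on $\overline{\Strip_{1/2}}$; the edge-of-the-strip value reproduces $JVJ$ precisely when \emph{ii)} holds, by matching the boundary relation $\Delta^{-1/2}(\cdot)\Delta^{1/2}$ against $JVJ$. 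The standard tool is the three-lines/Phragmén--Lindelöf argument to upgrade pointwise analyticity to uniform boundedness on the closed strip.

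The main obstacle is the analytic-continuation step \emph{ii)} $\Rightarrow$ \emph{iii)}: one cannot simply exponentiate the operator inclusion, because $\Delta^{1/2}$ and $\Delta^{-it}$ are unbounded and the vectors must be kept inside overlapping analytic domains of $\Delta$ throughout the strip. The delicate point is controlling the boundary behaviour at $\im\zeta = \tfrac12$ uniformly, so that the weak boundary value is genuinely the bounded operator $JVJ$ and not merely a densely defined closable one. Since this is exactly the content already worked out in \cite{ArakiZsido:2005, Longo:2008}, I would present the algebraic equivalences \emph{i)} $\Leftrightarrow$ \emph{ii)} in full and then cite those works for the analytic equivalence \emph{ii)} $\Leftrightarrow$ \emph{iii)}, noting only the key role of the relation $J\Delta^{1/2}J = \Delta^{-1/2}$ (equivalently $J\Delta J = \Delta^{-1}$ from \eqref{eq:TomitasTheorem}) in identifying the value $V(\tfrac{i}{2}) = JVJ$.
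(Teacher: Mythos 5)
Your proposal is correct and takes essentially the same route as the paper: the paper states this lemma without proof, attributing it to \cite{ArakiZsido:2005, Longo:2008} --- precisely the works you defer to for the analytic equivalence \emph{ii)} $\Leftrightarrow$ \emph{iii)} --- while your algebraic derivation of \emph{i)} $\Leftrightarrow$ \emph{ii)}, passing from $VH\subset H$ to $VS_H\subset S_HV$ with $S_H=J\Delta^{1/2}$ and multiplying by $J$, is the standard correct argument. One cosmetic slip: the formal boundary value at $\zeta=\tfrac{i}{2}$ is $\Delta^{1/2}V\Delta^{-1/2}=JVJ$, not $\Delta^{-1/2}V\Delta^{1/2}$ as written in your penultimate paragraph (the latter is $V(-\tfrac{i}{2})$), though this does not affect the substance since you cite the references for that step anyway.
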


We will also need the concept of a symmetric inner function. 

\begin{definition}
	\begin{enumerate}
		\item A symmetric inner function on the strip $\Strip_\pi$ is a function $\psi:\Strip_\pi\to\Cl$ which is analytic and bounded and satisfies $\psi(\te+i\pi)=\overline{\psi(\te)}=\psi(\te)^{-1}$ for almost all $\te\in\Rl$.
		\item A symmetric inner function on the upper half plane $\Cl_+$ is a function $\varphi:\Cl_+\to\Cl$ which is analytic and bounded and satisfies $\varphi(-p)=\overline{\varphi(p)}=\varphi(p)^{-1}$ for almost all $p\geq0$.
	\end{enumerate}
\end{definition}

Two remarks are in order here: First, for bounded analytic functions $f$ on the strip $\Strip_\pi$ (respectively on the upper half plane), the limits $\lim_{\eps\searrow0}f(\te+i\eps)$ and $\lim_{\eps\searrow0}f(\te+i\pi-i\eps)$ (respectively just $\lim_{\eps\searrow0}f(q+i\eps)$) exist for almost all $\te\in\Rl$ and define boundary values in $L^\infty(\Rl)$, to which the statements in the above definition refer. As an inner symmetric function is uniquely determined by its boundary values and vice versa, we will often not distinguish between the analytic function and its boundary values in our notation. Second, it is clear that symmetric inner functions on the strip and on the half plane are in one to one correspondence by $\psi=\varphi\circ\exp$.

In several parts of our investigations we will encounter the transformation $\varphi\mapsto\gamma(\varphi)$, 
\begin{align}\label{eq:gamma}
	\gamma(\varphi)(p)
	:=
	\overline{\varphi(1/\overline{p})}\,,\qquad {\rm Im}\,p>0\,.
\end{align}
This transformation is seen to be an involutive automorphism of the semigroup of symmetric inner functions on the upper half plane. Those $\varphi$ which are invariant under $\gamma$ are called {\em $\gamma$-invariant}, or {\em scattering functions} because of their relation to inverse scattering theory \cite{Lechner:2003}, or -- for reasons that will become clear later -- {\em time-reflection invariant}.
\\
\\
In the one-dimensional irreducible case, any operator $V$ commuting with all translations $T(x)=e^{ixP}$ is a function $V=\varphi(P)$ of $P$. In view of Lemma~\ref{Lemma:AnalyticityOfEndomorphisms}, $\varphi$ has to have specific analytic properties. The precise statement is the following.

\begin{theorem}{\bf\cite{LongoWitten:2010}}\label{Theorem:EHTin1d}
	Let $(H,T)$ be a one-dimensional non-degenerate standard pair with irreducible associated representation $U$ of $G_1$. Then 
	\begin{align}
		\E(H,T)
		&=
		\{\varphi(P)\,:\,\varphi \text{ \rm symmetric inner on } \Cl_+\}\,,
	\end{align}
	where $P$ is the generator of $T$.
\end{theorem}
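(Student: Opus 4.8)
The plan is to prove both inclusions by reducing the defining conditions of $\E(H,T)$ to analytic conditions on a single symbol function $\varphi$. The first step is to exploit irreducibility: since $(H,T)$ is non-degenerate with irreducible associated representation $U$, von Neumann uniqueness identifies $U$ with the unique irreducible positive energy representation $\hat U$ of $G_1$, in whose standard realization the generator $P$ acts as multiplication with simple spectrum. Consequently the commutant $\{T(x):x\in\Rl\}'$ coincides with the maximal abelian algebra $\{P\}''$, so every unitary $V$ commuting with all $T(x)$ is of the form $V=\varphi(P)$ for some $\varphi\in L^\infty(\Rl_+)$, with $|\varphi(p)|=1$ for almost every $p>0$ because $V$ is unitary. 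After this reduction, the only remaining condition to analyze is $VH\subset H$, which I would handle through Lemma~\ref{Lemma:AnalyticityOfEndomorphisms}.

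The second step is to rewrite the modular data acting on $\varphi(P)$ using the Borchers commutation relations \eqref{eq:BorchersCommutationRelations}. From $\Delta^{it}T(x)\Delta^{-it}=T(\la_1(t)x)=T(e^{-2\pi t}x)$ one reads off $\Delta^{-it}P\Delta^{it}=e^{2\pi t}P$, hence the function appearing in Lemma~\ref{Lemma:AnalyticityOfEndomorphisms}~{\em iii)} is
\begin{align*}
	V(t)=\Delta^{-it}\varphi(P)\Delta^{it}=\varphi(e^{2\pi t}P)\,.
\end{align*}
The map $t\mapsto e^{2\pi t}$ is a biholomorphism of the strip $\Strip_{1/2}$ onto the upper half plane $\Cl_+$, sending the real axis to $\Rl_+$ and the line $\im t=\tfrac12$ to the negative reals. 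Therefore the requirement of Lemma~\ref{Lemma:AnalyticityOfEndomorphisms}~{\em iii)} that $V(\cdot)$ extend to a bounded function on $\overline{\Strip_{1/2}}$, analytic in the interior, is exactly equivalent to $\varphi$ extending to a bounded analytic function on $\Cl_+$.

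The third step is the boundary identity $V(\tfrac{i}{2})=JVJ$. Using $JT(x)J=T(-x)$ one gets $JPJ=P$, so for the antiunitary $J$ one has $J\varphi(P)J=\overline{\varphi}(P)$, i.e. multiplication by $\overline{\varphi(p)}$. On the other hand $V(\tfrac{i}{2})=\varphi(e^{i\pi}P)$ is multiplication by the boundary value $\varphi(-p)$ on the negative axis. Equating the two and combining with $|\varphi|=1$ (equivalently $\overline{\varphi}=\varphi^{-1}$) yields precisely $\varphi(-p)=\overline{\varphi(p)}=\varphi(p)^{-1}$ for almost all $p\geq0$, which is the definition of a symmetric inner function on $\Cl_+$. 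Reading the chain of equivalences from $V\in\E(H,T)$ to ``$\varphi$ symmetric inner'' proves the inclusion $\subseteq$; reading it backwards --- starting from a symmetric inner $\varphi$, setting $V=\varphi(P)$, and invoking Lemma~\ref{Lemma:AnalyticityOfEndomorphisms} to recover $VH\subset H$ --- proves $\supseteq$.

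I expect the main obstacle to be the first step, namely justifying rigorously that $V=\varphi(P)$: this relies on $P$ having simple spectrum in the irreducible realization, i.e. on the non-degeneracy and irreducibility hypotheses together with von Neumann uniqueness, and it is where all the representation-theoretic input is concentrated. The remaining points --- the precise sense in which the weakly continuous extension of $V(\cdot)$ corresponds to Hardy-class boundary values of $\varphi$, and the interchange of spectral calculus with analytic continuation --- are technical but routine once Lemma~\ref{Lemma:AnalyticityOfEndomorphisms} is in hand, and I would treat them by standard Hardy-space arguments on the strip.
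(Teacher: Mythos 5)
Your proposal is correct and follows essentially the same route as the paper (and the cited proof of Longo--Witten, which the paper recapitulates in Proposition~\ref{proposition:MaximalAbelianInIrreps}): von Neumann uniqueness makes the translation algebra maximally abelian so that $V=\varphi(P)$, and Lemma~\ref{Lemma:AnalyticityOfEndomorphisms}~\emph{iii)} together with the Borchers relations \eqref{eq:BorchersCommutationRelations} converts $VH\subset H$ into the symmetric inner condition on $\varphi$ via $V(t)=\varphi(e^{2\pi t}P)$ and $V(\tfrac{i}{2})=JVJ$. You also correctly identify the genuinely nontrivial analytic step (passing from weak analyticity of $t\mapsto V(t)$ to a bounded analytic extension of $\varphi$ itself), which is where the Hardy-space work in \cite{LongoWitten:2010} resides.
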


In the irreducible one-dimensional case, this theorem provides a characterization of the endomorphism semigroup $\E(H,T)$. As in the irreducible case also the modular conjugation $J$ is fixed up to a sign by $U=\hat{U}$, we can identify $H,J,\Delta,T$ with concrete function spaces and operators on them. This will be done in Section~\ref{Section:1dCyclicity}, where we investigate the cyclicity question in the one-dimensional irreducible case.

\section{Multiplicity free representations of the Poincar\'e group and endomorphisms of two-dimensional standard pairs}\label{Section:MultiplicityFree}

In this section, we generalize Theorem~\ref{Theorem:EHTin1d} to the two-dimensional case, both for irreducible and also certain reducible representations $U$ of $G_2$ associated with a two-dimensional standard pair $(H,T)$. 

We begin by recalling the structure of the irreducible, unitary, strongly continuous representations of $G_2$ which satisfy the spectrum condition. Requiring that $T=U|_{\Rl^2}$ is non-degenerate, such representations can be classified up to unitary equivalence according to the joint spectrum of the generators $P_\pm$, denoted $\Sp(U|_{\Rl^2})$. There are equivalence classes of three types:
\begin{enumerate}
	\item[$m$)] $\Sp(U|_{\Rl^2})=\{p\in\Rl^2\,:\,p_\pm>0,\;\;4\,p_+\, p_-=m^2\}$ for some $m>0$,
	\item[$0+$)] $\Sp(U|_{\Rl^2})=\{p\in\Rl^2\,:\,p_+\geq0,\,p_-=0\}$,
	\item[$0-$)] $\Sp(U|_{\Rl^2})=\{p\in\Rl^2\,:\,p_-\geq0,\,p_+=0\}$.
\end{enumerate}
For fixed positive mass $m>0$, we denote the unique representation of class $m$) by $U_m$. Similarly, $U_{0,\pm}$ denotes the unique representation of class $0\pm$).

Looking at the proof of Theorem~\ref{Theorem:EHTin1d} in \cite{LongoWitten:2010}, it becomes apparent that the essential property of the (there one-dimensional) standard pair used is that the translations generate a maximally abelian von Neumann algebra ${U|_{\Rl^2}}''$, so that the endomorphisms of $(H,T)$ must be functions of the translation generator. This property carries over to the two-dimensional irreducible situation without essential changes, and we have the same characterization of $\E(H,T)$ as in the one-dimensional case.

\begin{proposition}\label{proposition:MaximalAbelianInIrreps}
	\begin{enumerate}
		\item Let $U$ be an irreducible unitary strongly continuous positive energy representation of $G_2$, with $U|_{\Rl^2}$ non-degenerate. Then $U|_{\Rl^2}$ generates a maximally abelian von Neumann algebra ${U|_{\Rl^2}}''$.
		\item Let $(H,T)$ be a two-dimensional non-degenerate standard pair, with associated representation $U$ of $G_2$ irreducible. Then, if $U$ is of class $0\pm$),
		\begin{align}\label{eq:EndoSubgroup2DIrrep}
			\E(H,T)
			&=
			\{\varphi(\pm P_\pm)\,:\,\varphi \text{ \rm symmetric inner on } \Cl_+\}\,.
		\end{align}
		If $U$ is of class $m)$ with some $m>0$, this equation holds for both $P_+$ and $P_-$.
	\end{enumerate}
\end{proposition}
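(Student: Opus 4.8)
The plan is to handle the two parts separately, the whole point being that once (i) is available, (ii) reduces to the one-dimensional Theorem~\ref{Theorem:EHTin1d}.

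For part (i), I would use the explicit realizations of the three classes of representations (to be recalled in Section~\ref{Section:1dCyclicity}). In each case the joint spectrum of $(P_+,P_-)$ is a single orbit of the boost group: the forward branch of the mass hyperbola $4p_+p_-=m^2$ for class $m$), and a half-line on one coordinate axis for the classes $0\pm$). Realizing $U|_{\Rl^2}$ as multiplication operators on the $L^2$-space over this orbit, the key observation is that the orbit is a one-parameter curve on which, say, $p_+$ is an injective (monotone) coordinate. Hence the von Neumann algebra generated by $P_+$ alone already coincides with the full multiplication algebra $L^\infty$ of the orbit, which is maximal abelian; since $\{U|_{\Rl^2}\}''$ is squeezed between $\{P_+\}''$ and this maximal abelian algebra, it equals it. Equivalently, one may argue abstractly: $U|_{\Rl^2}$ is multiplicity free because $U$ is irreducible and the boosts act freely on the orbit (trivial little group), and a multiplicity-free abelian von Neumann algebra is maximal abelian.

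For part (ii), the first step is to feed (i) into the endomorphism condition. An endomorphism $V$ commutes with all $T(x)$, so $V\in\{U|_{\Rl^2}\}'=\{U|_{\Rl^2}\}''$ by maximal abelianness, and is therefore a bounded function of the momenta; restricted to the spectrum curve this is a function of a single variable, $V=\varphi(P_+)$ in the cases $m$) and $0+$), and a function of $P_-$ (after the reflection dictated by positivity of the energy, leading to $\varphi(-P_-)$) in the case $0-$). The second step is to recognize the correct one-dimensional standard pair. Using that $T_+(a)\in\E(H,T)$ for $a\ge 0$, the pair $(H,T_+)$ is a one-dimensional standard pair; since the modular operator depends only on $H$, its associated $G_1$-representation is the restriction of $U$ to the subgroup generated by the $+$-translations and the boosts (the latter being implemented by $\Delta_H^{it}$ through \eqref{eq:BorchersCommutationRelations}). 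On the orbit $P_+$ is multiplication by an injective function and the boosts act by dilation, so this restriction is exactly the unique irreducible positive-energy representation of the $ax+b$ group; thus $(H,T_+)$ is non-degenerate with irreducible associated representation and Theorem~\ref{Theorem:EHTin1d} applies. In the case $0-$) one uses instead a one-dimensional pair built from $H'$ and $T_-$ (exploiting $VH\subset H\Leftrightarrow V^*H'\subset H'$), whose positive generator is $P_-$.

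The third step is to check that $\E(H,T)$ coincides with the endomorphism semigroup of this one-dimensional pair. The inclusion $\E(H,T)\subset\E(H,T_+)$ is immediate, as the former imposes the additional commutation with $T_-$; for the converse, any $\varphi(P_+)\in\E(H,T_+)$ automatically commutes with $P_-$ (the momenta commute), hence with $T_-$, and so lies in $\E(H,T)$. Theorem~\ref{Theorem:EHTin1d} then yields $\E(H,T)=\E(H,T_+)=\{\varphi(P_+):\varphi\text{ symmetric inner on }\Cl_+\}$, and symmetrically for $P_-$ in the massive class. I expect the main difficulties to be bookkeeping rather than conceptual: first, establishing multiplicity-freeness (equivalently maximal abelianness) uniformly across the three classes in part (i); and second, getting the reflections and signs right in class $0-$) and in the $P_-$-statement of the massive class, where one must pass between $H$ and $H'$ and between $P_-$ and $-P_-$ so that both the sign of the generator and the direction of inclusion match the hypotheses of the one-dimensional theorem.
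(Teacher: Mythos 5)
Your proposal is correct, and it reaches the result by a route that differs from the paper's in both parts. For part (i), the paper argues abstractly: non-degeneracy makes $P_\pm$ non-singular, so $\log P_\pm$ and the boost generator form an irreducible representation of the CCR, and von Neumann uniqueness already makes the algebra generated by one lightlike translation group maximally abelian; since ${U|_{\Rl^2}}''$ is abelian and contains it, it coincides with it. You instead invoke the explicit realization of the irreducible representation as multiplicity-one $L^2$ over the boost orbit and use injectivity of $p_+$ (resp. $p_\mp$) on the orbit to identify $\{P_+\}''$ with the full multiplication algebra $L^\infty$ of the orbit. Both arguments are sound and closely related; the CCR/uniqueness device is precisely what lets the paper avoid the concrete Mackey realization at this stage (though the realizations are recalled in Section~\ref{Section:1dCyclicity}, so your version is available within the paper). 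For part (ii), the paper reruns the Longo--Witten argument in two dimensions: by (i), $V$ is a function of $P_+$ (resp. $-P_-$), and Lemma~\ref{Lemma:AnalyticityOfEndomorphisms}~\emph{iii)} combined with the scaling \eqref{eq:BorchersCommutationRelations} forces that function to be symmetric inner. You instead black-box Theorem~\ref{Theorem:EHTin1d} via the derived one-dimensional pair $(H,T_+)$ (resp. $(H',T_-)$, using $VH\subset H\Leftrightarrow V^*H'\subset H'$) and prove the equality $\E(H,T)=\E(H,T_+)$, which the paper never states; the nontrivial inclusion $\E(H,T_+)\subset\E(H,T)$ is exactly your observation that $\varphi(P_+)$ automatically commutes with $T_-$. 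The one point that genuinely needs checking in your route -- irreducibility of the associated $G_1$-representation of $(H,T_+)$, which in class $m$) follows because $P_-=\frac{m^2}{4}P_+^{-1}$ is a function of $P_+$, so the commutant of $\{T_+,\Delta^{it}\}$ equals that of $U$ -- you do address, and your sign bookkeeping for class $0-$) (from $V^*=\psi(P_-)$ to $V=\psi(-P_-)$ via the boundary symmetry $\overline{\psi(p)}=\psi(-p)$) correctly reproduces the paper's substitution $P_+\to-P_-$. A pleasant byproduct of your organization is that part (i) becomes essentially superfluous for part (ii), whereas in the paper it carries the main weight there.
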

\begin{proof}
	{\em i)} If $U$ is of class $0\pm$), the same argument as in the one-dimensional case applies (see also \cite{Longo:2008}): The logarithm of the non-trivial momentum operator $P_\pm$ (which is non-singular because $U|_{\Rl^2}$ is non-degenerate) and the generator of the boosts form an irreducible representation of the canonical commutation relations. Hence the von Neumann algebras $\A_\pm$ generated by the translations $e^{iaP_\pm}$, $a\in\Rl$, are maximally abelian. As $U|_{\Rl^2}$ is abelian, this implies that also ${U|_{\Rl^2}}''$ is maximally abelian. 
	
	If $U$ is of class $m$) for some mass $m>0$, we can pick either generator $P_+$ or $P_-$ (since both are non-singular here) and apply the same argument.
	
	{\em ii)} For $U$ of class $0+$), this is the same proof as in \cite{LongoWitten:2010}, making use of part {\em i)} and Lemma~\ref{Lemma:AnalyticityOfEndomorphisms}~{\em iii)}. For class $0-$), the translations $e^{iaP_-}$ act by endomorphisms of $H$ for {\em negative} $a$, hence we get the same form as before after replacing $P_+$ by $-P_-$ in \eqref{eq:EndoSubgroup2DIrrep}. If $U$ is of class $m$) for some $m>0$, we can view an endomorphism $V\in\E(H,T)$ as a function of $P_+$ or $-P_-$ (both are non-singular), and thus both versions of \eqref{eq:EndoSubgroup2DIrrep} apply in this case.
\end{proof}

\noindent{\em Remark:} If $U$ is of class $m)$ for $m>0$, the two momentum operators $P_\pm$ are related by $P_-= \frac{1}{4}m^2\,P_+^{-1}$. Thus \eqref{eq:EndoSubgroup2DIrrep} gives an automorphism $\gamma_m$ of the symmetric inner functions on the upper half plane, $\gamma_m(\varphi)(p):=\varphi(-\frac{1}{4}m^2\,p^{-1})$, ${\rm Im}\,p>0$. In view of the symmetry $\varphi(-p)=\overline{\varphi(\overline{p})}$, this automorphism coincides with the involution $\gamma$ \eqref{eq:gamma}  up to the scaling factor $\frac{1}{4}m^2$.
\\
\\
The von Neumann algebra ${U|_{\Rl^2}}''$ can also be maximally abelian if $U$ is reducible, in particular in the massive case. (For convenience, we restrict ourselves to the massive case in the following. With minor modifications, our results also hold in the massless case.) We shall say that a unitary, positive energy representation $U$ of $G_2$ is {\em massive} if the boundary of the forward light cone has translation spectral measure zero. If $U$ is massive then it has an irreducible disintegration (unique as $G_2$ is of type $I$)
\begin{equation}\label{eq:UDirectIntegral}
	U
	=
	\int_{(0,\infty)}^\oplus N(m)U_{m}d\mu(m)\,.
\end{equation}
Here $\mu$ is a Borel measure on $\Rl_+$, and $N:\Rl_+\to\Nl\cup\{+\infty\}$ a measurable function giving the multiplicity $N(m)$ of $U_m$ in $U$. In this situation, we shall say that $U$ is \emph{multiplicity free} if $N(m)\in\{0,1\}$ for $\mu$-almost all $m$. We will also make use of the {\em mass operator} 
\begin{align}\label{eq:M}
	M:=2\,(P_+\cdot P_-)^{1/2}
\end{align}
associated with a representation $U$ of $G_2$.

\begin{lemma}\label{lemma:MultiplicityFreeVsMaximallyAbelian}
	Let $U$ be a massive representation of $G_2$.
	\begin{enumerate}
		\item The following are equivalent:
		\begin{enumerate}
			\item[a)] $U$ is multiplicity free.
			\item[b)] $U|_{\Rl^2}$ is multiplicity free.
			\item[c)] ${U|_{\Rl^2}}''$ is maximally abelian.
		\end{enumerate}
	\item The spectrum of the mass operator coincides with the support of the measure $\mu$ in \eqref{eq:UDirectIntegral}.
	\end{enumerate}
\end{lemma}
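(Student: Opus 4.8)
The plan is to reduce both statements to the spectral geometry of the commuting translation generators $P_\pm$, using that the mass hyperbolas foliate the open forward cone. First I would fix a concrete realization of the irreducible mass-$m$ representation $U_m$: on rapidity space $L^2(\Rl,d\te)$ the translations act by multiplication,
\begin{align}
	(U_m(x)\psi)(\te)=e^{i(x_+p_+(\te)+x_-p_-(\te))}\psi(\te)\,,\qquad p_\pm(\te)=\tfrac{m}{2}e^{\pm\te}\,.
\end{align}
From this one reads off that $U_m|_{\Rl^2}$ is multiplicity free, with joint spectrum the hyperbola $H_m:=\{p\,:\,p_\pm>0,\,4p_+p_-=m^2\}$, each character occurring once. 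The decisive geometric fact is that $(m,\te)\mapsto(p_+,p_-)=(\tfrac{m}{2}e^{\te},\tfrac{m}{2}e^{-\te})$ is a diffeomorphism of $(0,\infty)\times\Rl$ onto the open quadrant $\{p_\pm>0\}$, so the hyperbolas $H_m$ are pairwise disjoint and exhaust the quadrant. The hypothesis that $U$ is massive ensures that the joint spectral measure of $(P_+,P_-)$ sits inside this quadrant, charging neither coordinate axis.

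For part i), the equivalence of b) and c) is the standard structural fact that the abelian von Neumann algebra generated by a unitary representation of an abelian group is maximally abelian exactly when the representation is multiplicity free (equivalently, possesses a cyclic vector); I would invoke it directly. For a)$\Leftrightarrow$b) I would substitute the disintegration \eqref{eq:UDirectIntegral} into the realization above, so that $U|_{\Rl^2}$ becomes multiplication by characters on $\int^\oplus_{(0,\infty)}\Cl^{N(m)}\ot L^2(\Rl,d\te)\,d\mu(m)$. After the change of variables $(m,\te)\mapsto(p_+,p_-)$ each character indexed by a point of $H_m$ occurs with multiplicity exactly $N(m)$ (the $\te$-direction contributing multiplicity one and distinct masses living on disjoint hyperbolas). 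Hence $U|_{\Rl^2}$ is multiplicity free if and only if $N(m)\in\{0,1\}$ for $\mu$-almost every $m$, which is precisely the definition of $U$ being multiplicity free.

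For part ii), I note that on the fibre $U_m$ one has $P_+P_-=m^2/4$, so $M=2(P_+P_-)^{1/2}$ acts there as the scalar $m$. The disintegration \eqref{eq:UDirectIntegral} therefore diagonalizes $M$, realizing it as the operator of multiplication by the identity function $m\mapsto m$ on the base space $((0,\infty),\mu)$. Consequently $\Sp(M)$ equals the essential range of this function with respect to $\mu$, which is exactly $\supp\mu$.

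The main obstacle is the measure-theoretic bookkeeping inside a)$\Leftrightarrow$b): one has to show that the multiplicity of $U|_{\Rl^2}$ at a point of $H_m$ is genuinely $N(m)$, i.e.\ that the disintegration of $U$ over masses and the spectral (momentum) disintegration of $U|_{\Rl^2}$ are intertwined by the diffeomorphism above, with the fibre in the $\te$-variable contributing multiplicity one and with $\mu$ controlling the relevant null sets. The representation theory itself is transparent; the care lies entirely in matching these two disintegrations and in confirming, via massivity, that nothing is lost on the boundary of the cone.
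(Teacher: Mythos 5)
Your proposal is correct and follows essentially the same route as the paper: disintegration of $U$ over the mass with the characters $u_p$ on disjoint hyperbolas occurring with multiplicity $N(m)$ (you make the paper's Lorentz-covariance argument explicit via the diffeomorphism $(m,\te)\mapsto(\tfrac{m}{2}e^{\te},\tfrac{m}{2}e^{-\te})$), the standard fact that for an abelian group the generated von Neumann algebra is maximally abelian iff the representation is multiplicity free, and the observation that $M$ acts as the scalar $m$ on each fibre, so $\Sp(M)=\supp\mu$. The only cosmetic difference is organizational: the paper proves the cycle $a)\Rightarrow b)\Rightarrow c)\Rightarrow a)$, while you establish $b)\Leftrightarrow c)$ and $a)\Leftrightarrow b)$ directly.
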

\begin{proof}
	$i)$ $a)\Rightarrow b)$ Assume that $U$ is multiplicity free. The unitary representations of $\Rl^2$ given by $U_m |_{\Rl^2}$ are mutually inequivalent for different masses (because $\Sp(U_m|_{\Rl^2})\neq\Sp(U_{m'}|_{\Rl^2})$ for $m\neq m'$), and decompose into direct integrals $U_m|_{\Rl^2}=\int^\oplus_{{\rm Sp}U_m|_{\Rl^2}}u_p\,d\mu_m(p)$ into the one-dimensional irreducible mutually inequivalent $\Rl^2$-representations $u_p(x)=e^{ipx}$, $p\in{\Sp}U_m|_{\Rl^2}$, with the usual Lorentz invariant measure $\mu_m$. The $u_p$ occur with uniform multiplicity $N(m)$ by Lorentz covariance, and since $U_m$ carries multiplicity one, we have $N(m)=1$, i.e. $U|_{\Rl^2}$ is multiplicity free.
	
	$b)\Rightarrow c)$ $U|_{\Rl^2}$ being multiplicity free is equivalent to its commutant ${U|_{\Rl^2}}'$ being abelian. But since $\Rl^2$ is abelian, this implies that ${U|_{\Rl^2}}'={U|_{\Rl^2}}''$ is maximally abelian.
	
	$c)\Rightarrow a)$ If ${U|_{\Rl^2}}''$ is maximally abelian, then $U''\supset {U|_{\Rl^2}}''$ has abelian commutant, i.e. $U$ is multiplicity free.
% 	{\em i)} Assume that $U$ is multiplicity free. The unitary representations of $\Rl^2$ given by $U_m |_{\Rl^2}$ are mutually inequivalent for different masses (because $\Sp(U_m|_{\Rl^2})\neq\Sp(U_{m'}|_{\Rl^2})$ for $m\neq m'$), so also $U|_{\Rl^2}$ is multiplicity free. If on the other hand $U|_{\Rl^2}$ is multiplicity free, then so is $U$, because its multiplicities $N(m)$ in \eqref{eq:UDirectIntegral} are the same as the multiplicities of $U|_{\Rl^2}$. Furthermore, since $U$, and thus $U|_{\Rl^2}$, disintegrate as in \eqref{eq:UDirectIntegral}, it follows from Proposition~\ref{proposition:MaximalAbelianInIrreps} that ${U|_{\Rl^2}}''$ is maximally abelian.
% 	
% 	Conversely, if $U$ is not multiplicity free, we find a subset $B\subset\Rl_+$ with $\mu(B)>0$ and $N(m)\geq2$ for all $m\in B$. 	
% 	Thus there exist operators $Y_m$ commuting with $T_m:=N(m){U_m}|_{\Rl^2}$, but not contained in the von Neumann algebra generated by $T_m$. Considering direct integrals of the form 
% 	\begin{align*}
% 		Y=\int_B^\oplus Y_m\,d\mu(m)
% 	\end{align*}
% 	then shows that ${U|_{\Rl^2}}''$ is not maximally abelian.
% 	
% 	 Finally, if $M$ is multiplicity free as a selfadjoint operator, it generates a maximally abelian von Neumann algebra $\M$ \cite[Thm.~VII.5]{ReedSimon:1972}. As $M$ is a function of $P_+,P_-$, we have $\M\subset {U|_{\Rl^2}}''$. But because ${U|_{\Rl^2}}''$ is abelian, this implies that ${U|_{\Rl^2}}''$ is maximally abelian, which is equivalent to $U$ being multiplicity free.
% 	

	{\em ii)} For the irreducible class $m)$ representation $U_m$, we clearly have $M=m\cdot 1$. Thus the mass operator of a general massive representation is $M=\int^\oplus_{(0,\infty)}m\,1_m\,d\mu(m)$, where $1_m$ is the identity on the the $N(m)$-fold direct sum of the representation space of $U_m$. This shows ${\rm spec}\, M=\supp\mu$.
\end{proof}

We thus see that in case $U$ is either irreducible or massive and multiplicity free, a unitary $V$ commuting with $U|_{\Rl^2}$ must be an element of ${U|_{\Rl^2}}''$, i.e. a function of the generators $P_+,P_-$. We have the following generalization of Theorem~\ref{Theorem:EHTin1d}.

\begin{theorem}\label{inn0} 
	Let $(H,T)$ be a non-degenerate two-dimensional standard pair such that its associated representation of $G_2$ is massive and multiplicity free. Then the following are equivalent:
	\begin{enumerate}
		\item $V\in\E(H,T)$,
		\item $V = \psi(P_+, M)$ for some $\psi\in L^{\infty}(\Rl_+\times \Rl_+)$ such that $p\mapsto\psi(p, m)$ is (the boundary value of) a symmetric inner function on the upper half-plane for almost every $m>0$.
	\end{enumerate}
\end{theorem}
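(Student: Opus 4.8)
The plan is to first use maximal abelianness to pin down the \emph{form} of $V$, and then to pass to the irreducible disintegration \eqref{eq:UDirectIntegral} in order to reduce the endomorphism condition to the already understood irreducible situation of Proposition~\ref{proposition:MaximalAbelianInIrreps}. For the first step, suppose $V\in\E(H,T)$; then $V$ is unitary and commutes with $T=U|_{\Rl^2}$. Since $U$ is massive and multiplicity free, Lemma~\ref{lemma:MultiplicityFreeVsMaximallyAbelian} gives that ${U|_{\Rl^2}}''$ is maximally abelian, so $V\in{U|_{\Rl^2}}'={U|_{\Rl^2}}''$ is a bounded function of the commuting generators $P_+,P_-$. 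As $M=2(P_+P_-)^{1/2}$ and $P_-=\tfrac14 M^2 P_+^{-1}$ on the joint spectrum (where $P_+>0$), the pairs $(P_+,P_-)$ and $(P_+,M)$ generate the same von Neumann algebra, and hence $V=\psi(P_+,M)$ for some $\psi\in L^\infty(\Rl_+\times\Rl_+)$. This already yields the bulk of \emph{i)}$\Rightarrow$\emph{ii)}, and it remains only to identify the fibre behaviour of $\psi$.

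Next I would disintegrate the modular data. Writing $\Hil=\int_{(0,\infty)}^\oplus\Hil_m\,d\mu(m)$ according to \eqref{eq:UDirectIntegral}, the space $\Hil_m$ carries the irreducible class $m$) representation $U_m$, and by \eqref{eq:M} the mass operator acts as $M=\int^\oplus m\cdot 1_m\,d\mu(m)$. The key observation is that both modular objects are decomposable over this integral: from \eqref{eq:BorchersCommutationRelations} one reads off $JP_\pm J=P_\pm$ and $\Delta^{it}P_\pm\Delta^{-it}=e^{\mp2\pi t}P_\pm$, so $J$ commutes with $M$ and $\Delta^{it}$ leaves each mass fibre invariant. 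Consequently $S=J\Delta^{1/2}$ is decomposable, $S=\int^\oplus S_m\,d\mu(m)$ with $S_m=J_m\Delta_m^{1/2}$ the Tomita operator of the standard subspace $H_m:=\{\xi\in D(S_m):S_m\xi=\xi\}$ on $\Hil_m$. Since $\Delta_m^{it}$ and $J_m$ are precisely the boost and the reflection of $U_m$, the pair $(H_m,T_m)$ is the irreducible standard pair associated with $U_m$, and $H=\int^\oplus H_m\,d\mu(m)$.

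The fibrewise reduction then proceeds as follows. The operator $V=\psi(P_+,M)$ is decomposable with fibres $V_m=\varphi_m(P_+^{(m)})$, where $\varphi_m:=\psi(\,\cdot\,,m)$ and $P_+^{(m)}$ is the generator of $T_+$ on $\Hil_m$. I would invoke Lemma~\ref{Lemma:AnalyticityOfEndomorphisms}~\emph{ii)}, according to which $VH\subset H$ is equivalent to the operator inclusion $JVJ\Delta^{1/2}\subset\Delta^{1/2}V$; being a relation between decomposable operators, it holds if and only if $J_mV_mJ_m\Delta_m^{1/2}\subset\Delta_m^{1/2}V_m$ for $\mu$-almost every $m$, that is, $V_m\in\E(H_m,T_m)$ almost everywhere. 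By Proposition~\ref{proposition:MaximalAbelianInIrreps}~\emph{ii)} in its $P_+$ form (valid since $U_m$ is of class $m$)), this is in turn equivalent to $\varphi_m=\psi(\,\cdot\,,m)$ being a symmetric inner function on $\Cl_+$ for almost every $m$. Read in the direction \emph{ii)}$\Rightarrow$\emph{i)}, the same chain shows that such a $\psi$ defines an endomorphism, its unitarity following from $|\psi(p,m)|=1$ a.e.; together with the first step this establishes the equivalence.

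The analytic content is supplied entirely by the irreducible case, so the genuine work lies in the measure-theoretic bookkeeping of the two disintegration steps: justifying that the antilinear, unbounded Tomita operator $S$ and the real subspace $H$ really decompose fibrewise (including the measurable choice of the $H_m$ and of the domains of the $S_m$), and that the inclusion of real subspaces $VH\subset H$ is equivalent to its fibrewise counterpart up to a $\mu$-null set. The accompanying measurable-selection issue — that the almost-everywhere family $\{\varphi_m\}$ of symmetric inner functions can be represented by a single jointly measurable $\psi\in L^\infty(\Rl_+\times\Rl_+)$ — is where I would expect to spend the most care.
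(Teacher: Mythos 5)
Your proposal is correct and shares the paper's skeleton --- maximal abelianness of ${U|_{\Rl^2}}''$ (Lemma~\ref{lemma:MultiplicityFreeVsMaximallyAbelian}) forces $V$ to disintegrate over the mass, and the irreducible classification of Proposition~\ref{proposition:MaximalAbelianInIrreps} supplies all the analytic content --- but you implement the reduction to the fibres differently. The paper never decomposes the modular objects or $H$ itself: it invokes Lemma~\ref{Lemma:AnalyticityOfEndomorphisms}~\emph{iii)} for the global $V$, writes the bounded matrix elements $\langle\Psi_1,\Delta^{-it}V\Delta^{it}\Psi_2\rangle=\int_0^\infty\langle\Psi_{1,m},\varphi_m(e^{2\pi t}P_{+,m})\Psi_{2,m}\rangle_m\,d\mu(m)$, and argues that analyticity of this function of $t$ in $\Strip_{1/2}$ for all $\Psi_1,\Psi_2$ is equivalent to fibrewise analyticity; this keeps every object bounded and linear. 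You instead use Lemma~\ref{Lemma:AnalyticityOfEndomorphisms}~\emph{ii)} and disintegrate the antilinear unbounded $S=J\Delta^{1/2}$ together with $H$ --- legitimate, since $J$ and $\Delta^{it}$ commute with $M$, the diagonal algebra of \eqref{eq:UDirectIntegral} is exactly the algebra of bounded functions of $M$ (any element of $U'$ is a function $g(P_+,P_-)$ invariant under $(p_+,p_-)\mapsto(e^{-2\pi t}p_+,e^{2\pi t}p_-)$, hence a function of $P_+P_-$), and an inclusion of closed decomposable operators is equivalent to almost-everywhere fibrewise inclusion because the graphs are decomposable closed subspaces whose inclusions are checked fibrewise on the associated projections. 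The price is the heavier direct-integral bookkeeping you correctly flag; the gain is that ``$V_m\in\E(H_m,T_m)$ a.e.'' is established \emph{before} Proposition~\ref{proposition:MaximalAbelianInIrreps}~\emph{ii)} is cited, which is logically tidier than the paper's phrasing, where $\varphi_m$ is called a symmetric inner function at a stage when only $\varphi_m\in L^\infty$ is known and the innerness is really extracted afterwards from the analyticity criterion. One correction of emphasis: in your direction \emph{i)}$\Rightarrow$\emph{ii)} no measurable selection is actually needed, since $\psi$ is produced first by the joint functional calculus of $(P_+,M)$ and the $\varphi_m$ are its slices, well defined a.e.\ by Fubini because the joint spectral measure disintegrates over $d\mu(m)$ with fibre measures equivalent to Lebesgue measure on each mass shell; that same absolute continuity is what makes $|\psi(p,m)|=1$ a.e.\ suffice for unitarity in \emph{ii)}$\Rightarrow$\emph{i)}.
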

\begin{proof}
	By assumption, $U$ is of the form
	\begin{align*}
		U= \int_{(0,\infty)}^\oplus U_m\, d\mu(m)
	\end{align*}
	for a Borel measure $\mu$ on $\Rl_+$. Since $V$ commutes with the translations, and ${U|_{\Rl^2}}''$ is maximally abelian by Lemma~\ref{lemma:MultiplicityFreeVsMaximallyAbelian}, $V$ disintegrates in the same manner, 
	\begin{align*}
		V =\int_{(0,\infty)}^\oplus V_m\, d\mu(m) \,.
	\end{align*}
	For fixed $m>0$, we have by Proposition~\ref{proposition:MaximalAbelianInIrreps} $V_m=\varphi_m(P_{+,m})$ with some symmetric inner function on the upper half plane, where $P_{+,m}$ is the generator for $+$lightlike translations in the representation $U_m$. Making use of these direct integral decompositions and \eqref{eq:BorchersCommutationRelations}, we 	can now evaluate $\langle\Psi_1,\Delta^{-it}V\Delta^{it}\Psi_2\rangle$ for $\Psi_k=\int^\oplus_{(0,\infty)}\Psi_{1,m}\,d\mu(m)$, $k=1,2$, as
	\begin{align*}
		\langle\Psi_1,\Delta^{-it}V\Delta^{it}\Psi_2\rangle
		=
		\int_0^\infty\langle\Psi_{1,m},\varphi_m(e^{2\pi t}P_{+,m})\Psi_{2,m}\rangle_m\,d\mu(m)\,.
	\end{align*}
	By Lemma~\ref{Lemma:AnalyticityOfEndomorphisms}~{\em iii)}, $VH\subset H$ is equivalent to this being an analytic function of $t$ in the strip $\Strip_{1/2}$, which in turn is equivalent to $t\mapsto\varphi_m(e^{2\pi t}p_+)$ being analytic in that strip for almost all $m>0$, and all $p_+>0$. This shows that $VH\subset H$ is equivalent to $\psi:\Rl_+\times\Rl_+\to\Cl$, $\psi(p_+,m):=\varphi_m(p_+)$ having the claimed properties.
	
	Taking into account that also the mass operator disintegrates, 
	\begin{align*}
		M=\int_{(0,\infty)}^\oplus m\,{\rm id}_m\, d\mu(m)\,,\qquad P_+=\int_{(0,\infty)}^\oplus {P_{+,m}}\, d\mu(m) \,,
	\end{align*}
	we also see $\psi(P_+,M)=V$.
\end{proof}

We next give two examples of reducible multiplicity free representations appearing in quantum field theory. The first example models the single particle Hilbert space of a theory with several particle species: Take masses $0<m_1<....<m_N<\infty$ and consider a standard pair with associated representation of $G_2$ as the direct sum $U=\bigoplus_{k=1}^N U_{m_k}$. Then $U$ is reducible, massive, and multiplicity free. As the spectrum of $M$ is $\{m_1,...,m_N\}$ in this case, we see that endomorphisms $V\in\E(H,T)$ are given by $N$-tuples of symmetric inner functions.

Another example arises from a typical two-particle Hilbert space. Starting from the irreducible representation $U_{m_0}$ on $\Hil_{m_0}$ of mass $m_0>0$, we consider the symmetric tensor product $\Hil_2:=\Hil_{m_0}\otimes_+\Hil_{m_0}\subset\Hil_{m_0}\otimes\Hil_{m_0}$, and define
\begin{align}
	U_{m_0}^2
	:=
	(U_{m_0}\otimes U_{m_0})|_{\Hil_2}
\end{align}
as the restriction of the tensor product to the symmetric subspace. Note that the generators $P_{\pm}$ of this representation are related to the generators $P_{+,m_0}$ and $P_{-,m_0}=\frac{1}{4}\,m_0^2\,P_{+,m_0}^{-1}$ of $U_{m_0}|_{\Rl^2}$ via
\begin{align}\label{eq:SumOfMomenta}
	P_\pm=P_{\pm,m_0}\otimes 1+1\otimes P_{\pm,m_0}
	\,.
\end{align}

\begin{proposition}
	$U_{m_0}^2$ is a massive multiplicity free representation of $G_2$. Indeed
	\begin{align}
		U_{m_0}^2
		&=
		\int_{[2m_0,\infty)}^\oplus U_{m}\;dm \,,
	\end{align}
	i.e. all positive energy representations of mass $m\geq 2m_0$ occur, each  with multiplicity one.
\end{proposition}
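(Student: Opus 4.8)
The plan is to pass to the standard rapidity realization of $U_{m_0}$ and then to center-of-mass and relative rapidity coordinates, in which the two-particle representation visibly becomes a direct integral of irreducible massive representations. First I would realize $U_{m_0}$ on $\Hil_{m_0}=L^2(\Rl,d\theta)$, with $\theta$ the rapidity parametrizing the mass shell through $P_+=\tfrac{m_0}{2}e^{\theta}$, $P_-=\tfrac{m_0}{2}e^{-\theta}$; there the translations $T(x)$ act by multiplication with $\exp\!\big(i(x_+\tfrac{m_0}{2}e^{\theta}+x_-\tfrac{m_0}{2}e^{-\theta})\big)$ and the boosts by the shift $\theta\mapsto\theta-t$. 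On the unsymmetrized tensor product $\Hil_{m_0}\otimes\Hil_{m_0}=L^2(\Rl^2,d\theta_1\,d\theta_2)$ the total generators \eqref{eq:SumOfMomenta} are then the multiplication operators $P_+=\tfrac{m_0}{2}(e^{\theta_1}+e^{\theta_2})$ and $P_-=\tfrac{m_0}{2}(e^{-\theta_1}+e^{-\theta_2})$.

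Next I would introduce the center-of-mass rapidity $\Theta:=\tfrac12(\theta_1+\theta_2)$ and the relative rapidity $\beta:=\theta_1-\theta_2$, in which a short computation gives
\begin{align}
  P_\pm=\frac{M}{2}\,e^{\pm\Theta},\qquad M:=2m_0\cosh(\beta/2),
\end{align}
so that by \eqref{eq:M} one has $M^2=4P_+P_-=2m_0^2(1+\cosh\beta)$; the mass operator thus depends only on $\beta$ and sweeps out $[2m_0,\infty)$. Since the boosts act by $\Theta\mapsto\Theta-t$ and fix $\beta$ (hence $M$), for each fixed $\beta$ the dependence on $\Theta$ is exactly the standard rapidity realization of the irreducible mass-$M(\beta)$ representation $U_{M(\beta)}$. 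As $d\theta_1\,d\theta_2=d\Theta\,d\beta$, this yields a direct integral $\Hil_{m_0}\otimes\Hil_{m_0}\cong\int_{\Rl}^\oplus\Hil_{M(\beta)}\,d\beta$ intertwining $U_{m_0}\otimes U_{m_0}$ with $\int_{\Rl}^\oplus U_{M(\beta)}\,d\beta$, in which each mass above $2m_0$ appears twice (from $\pm\beta$).

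Finally I would restrict to the symmetric subspace $\Hil_2$. The flip $\theta_1\leftrightarrow\theta_2$ fixes $\Theta$ and sends $\beta\mapsto-\beta$, so $\Hil_2$ corresponds to the functions even in $\beta$, and restriction to $\beta\ge0$ is a unitary identification (up to a harmless constant) with $\int_{[0,\infty)}^\oplus\Hil_{M(\beta)}\,d\beta$. Because $\beta\mapsto M=2m_0\cosh(\beta/2)$ is a diffeomorphism of $(0,\infty)$ onto $(2m_0,\infty)$ with $d\beta=\tfrac{2\,dM}{\sqrt{M^2-4m_0^2}}$, this becomes a direct integral over $m\in[2m_0,\infty)$ with fibre $U_m$ and strictly positive, a.e.\ finite density; absorbing the square root of that density into the fibre vectors turns it into an \emph{exact} unitary equivalence $U_{m_0}^2\cong\int_{[2m_0,\infty)}^\oplus U_m\,dm$ with multiplicity one. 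From this both remaining claims follow immediately: the multiplicity is $1$ almost everywhere, so $U_{m_0}^2$ is multiplicity free, and its joint momentum spectrum lies in $\{p_+>0,\,p_->0\}$ with mass bounded below by $2m_0$, so the forward light cone boundary carries vanishing translation spectral measure and $U_{m_0}^2$ is massive. I expect the only delicate point to be the bookkeeping of the change of variables $\beta\mapsto M$ together with the fibre rescaling needed to convert the measure-class statement into an exact equivalence with Lebesgue measure $dm$; recognizing the fibre at fixed $\beta$ as the standard $U_{M(\beta)}$ and handling the symmetrization are routine once the coordinates are in place.
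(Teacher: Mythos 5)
Your proof is correct, but it proceeds quite differently from the paper's. The paper never constructs the direct integral explicitly: it works abstractly with the joint spectrum of the lightlike momenta. Writing $P_+=P_{+,m_0}\otimes 1+1\otimes P_{+,m_0}$ and $P_-=\frac{m_0^2}{4}(P_{+,m_0}^{-1}\otimes 1+1\otimes P_{+,m_0}^{-1})$ as in \eqref{eq:SumOfMomenta}, it observes that the conservation equations $p+q=p'+q'$ and $1/p+1/q=1/p'+1/q'$ force $\{p,q\}=\{p',q'\}$, so after symmetrization the joint spectrum of $(P_+,P_-)$ is simple; multiplicity freeness of the full representation then follows from Lemma~\ref{lemma:MultiplicityFreeVsMaximallyAbelian}~$i)$, and the mass spectrum is read off from the spectral values $m_0^2(2+p/q+q/p)\in[4m_0^2,\infty)$ of $M^2$, with Lemma~\ref{lemma:MultiplicityFreeVsMaximallyAbelian}~$ii)$ converting $\mathrm{spec}\,M=[2m_0,\infty)$ into the support of the disintegrating measure. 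Your center-of-mass/relative-rapidity change of variables instead builds the intertwining unitary by hand: the fibre identification $P_\pm=\frac{M}{2}e^{\pm\Theta}$ with boosts shifting $\Theta$ and fixing $\beta$ is exactly right, the Jacobian $d\theta_1\,d\theta_2=d\Theta\,d\beta$ and the identification of the symmetric subspace with functions even in $\beta$ are correct, and the substitution $d\beta=2\,dM/\sqrt{M^2-4m_0^2}$ with the density absorbed fibrewise (legitimate, since the density depends only on the fibre label $M$) is the standard device. What the paper's route buys is brevity, no choice of coordinates, and the physical reading stated right after the proof (absence of momentum transfer in two-dimensional elastic scattering of identical particles); what your route buys is strictly more information: the support of $\mu$ alone does not determine its measure class, so the paper's argument, taken literally, pins down the displayed formula only up to the (here unproblematic) identification of $\mu$ with Lebesgue measure, whereas your explicit unitary establishes the Lebesgue measure class, and hence the equality with $\int_{[2m_0,\infty)}^\oplus U_m\,dm$, exactly. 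The only points requiring care in your write-up --- the endpoint $M=2m_0$ and the locus $\beta=0$ being null sets, and the fibrewise nature of the rescaling --- you have handled correctly.
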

\begin{proof}
	We have $P_+=P_{+,m_0}\otimes 1+1\otimes P_{+,m_0}$ and $P_-=\frac{m_0^2}{4}(P_{+,m_0}^{-1}\otimes 1+1\otimes P_{+,m_0}^{-1})$ \eqref{eq:SumOfMomenta}. But spectral values $p,q,p',q'\geq0$ satisfying $p+q=p'+q'$ and $1/p+1/q=1/p'+1/q'$ are always related by either $p=p',q=q'$ or $p=q',q=p'$. As the latter possibility corresponds to flipped tensor factors ({\em cf.} \eqref{eq:SumOfMomenta}), which are identified in the symmetric tensor product, we see that $U_{m_0}^2|_{\Rl^2}$ is multiplicity free. By Lemma~\ref{lemma:MultiplicityFreeVsMaximallyAbelian}~$i)$, this implies that $U_{m_0}^2$ is multiplicity free. 
	
	The squared mass operator $M^2=4P_+P_-$ takes the spectral values $p+q$ and $\frac{m_0^2}{4}(1/p+1/q)$ to $m_0^2(2+p/q+q/p)$, which ranges over $[4m_0^2,\infty)$ as $p,q$ vary in $\Rl_+$. Thus ${\rm spec}\, M=[2m_0,\infty)$, which implies the claim by Lemma~\ref{lemma:MultiplicityFreeVsMaximallyAbelian}~$ii)$.
\end{proof}

In physics terminology, the above lemma expresses the fact that in two dimensions, energy-momentum conservation implies absence of momentum transfer in scattering processes of two indistinguishable particles.

The symmetric tensor square representation $U_{m_0}^2$ arise on the two-particle level of second quantization, i.e. from symmetric (real) tensor products $(H\otimes_{+,\Rl}H, T\otimes_+ T)$ of massive irreducible standard pairs, where the representation associated with $(H,T)$ is $U_{m_0}$. In this case, one can also reformulate the characterization of their endomorphism semigroup given in  Theorem~\ref{inn0} in terms of other variables than mass and lightlike momentum, for example in terms of sums and differences of two rapidities. We refrain from giving the details here.
\section{Cyclicity of local subspaces}\label{Section:1dCyclicity}

We now take up the task of characterizing the local subspaces of the endomorphism net $H_V$ \eqref{eq:NetV} in terms of $V$, in particular with regard to their cyclicity. We will mostly be concerned with irreducible (one- or two-dimensional) standard pairs and their associated endomorphism nets, and therefore first recall convenient representation spaces. 

\begin{enumerate}
	\item {\em ``Rapidity representation''} $\Hil=L^2(\Rl, d\te)$,
	\begin{align}\label{identify}
		(T(a)\psi)(\te) &= e^{iae^\te}\psi(\te),
		\qquad (Z\psi)(\te)=\psi(-\te)\,,\\
		(\Delta^{it}\psi)(\te)&= \psi(\te-2\pi t),
		\qquad
		(J\psi)(\te)=\overline{\psi(\te)}.
	\end{align}
	\item {\em ``Lightray representation'':} $\Hil=L^2(\Rl_+,dp/p)$,
	\begin{align}\label{eq:MomentumRepresentation}
		(T(a)\psi)(p) &= e^{iap}\psi(p),
		\qquad (Z\psi)(p)=\psi(\tfrac{1}{p})\,,\\
		(\Delta^{it}\psi)(p)&= \psi(e^{-2\pi t}\,p),
		\qquad
		(J\psi)(p)=\overline{\psi(p)}.
	\end{align}
	\item {\em ``Momentum representation'':} $\Hil=L^2(\Rl,\om_m(p_1)^{-1}dp_1)$, $\om_m(p_1)=(p_1^2+m^2)^{1/2}$, 
	\begin{align}
		(T(a)\psi)(p_1)
		&=
		e^{i\,\frac{a}{m}(\om_m(p_1)+p_1)}\psi(p_1)\,,\qquad (Z\psi)(p_1)=\psi(-p_1)\,,\\
		(\Delta^{it}\psi)(p_1)
		&=
		\psi(\cosh(2\pi t)p_1-\sinh(2\pi t)\om_m(p_1))\,,\qquad
		(J\psi)(p_1)=\overline{\psi(p_1)}\,.
	\end{align}
\end{enumerate}
Operators denoted by the same symbol are unitarily equivalent, as can be seen by introducing unitaries performing the change of variables $p_1=\frac{m}{2}(p-\frac{1}{p})=m\sinh\te$, $p=\frac{1}{m}(\om_m(p_1)+p_1)=e^\te$. We will always distinguish between the different pictures by denoting the variable $\te$, $p$, or $p_1$, respectively.

$J$ and $\Delta$ are the modular data of the unique irreducible one-dimensional standard pair $(H,T)$, and $T$ is the corresponding translation representation. To characterize $H$ (in the rapidity representation), recall the {\em Hardy space} \cite{Duren:1970},
\begin{align}
	{\mathbb H}^2(\Strip_\pi)
	:=
	\{&\psi :\Strip_\pi\to\Cl\,\text{ analytic },
	\quad\sup_{0<\la<\pi}\int_\Rl d\te\,|\psi(\te+i\la)|^2<\infty\}
	\,.
\end{align}
Any $\psi\in {\mathbb H}^2(\Strip_\pi)$ has boundary values (on $\Rl$) which lie in $L^2(\Rl,d\te)$. We will often denote these boundary values by the same symbol $\psi$, and also consider ${\mathbb H}^2(\Strip_\pi)$ as the subspace of all functions $L^2(\Rl,d\te)$ which are boundary values of functions in ${\mathbb H}^2(\Strip_\pi)$.

The following lemma (proven in Appendix~\ref{Section:Appendix}) specifies $H$ in the rapidity picture. Characterizations in the other pictures can be obtained by change of variables.

\begin{lemma}\label{Lemma:CharacterizeH}
In the rapidity representation, $H=\{\psi\in {\mathbb H}^2(\Strip_{\pi})\,:\,\overline{\psi(\te+i\pi)}=\psi(\te)\text{ a.e.}\}$.
\end{lemma}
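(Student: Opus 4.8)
The plan is to recover $H$ as the fixed-point space of its Tomita operator and then to evaluate that operator explicitly in the rapidity picture \eqref{identify}. Recall the general correspondence between standard subspaces and their Tomita operators: writing $S=S_H=J\Delta^{1/2}$ for the (closed, anti-linear, involutive) Tomita operator, one has $\dom(S)=\dom(\Delta^{1/2})=H+iH$ and
\begin{align*}
	H=\{\psi\in\dom(S)\,:\,S\psi=\psi\}\,,
\end{align*}
since $S(\xi+i\eta)=\xi-i\eta$ forces $\eta=0$ on the $+1$-eigenspace. Thus the lemma reduces entirely to computing $\dom(\Delta^{1/2})$ and the action of $S$ for the concrete $J,\Delta$ of \eqref{identify}.

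The first substantive step is to determine $\dom(\Delta^{1/2})$. Since $(\Delta^{it}\psi)(\te)=\psi(\te-2\pi t)$, the modular group is the group of real translations by $-2\pi t$; in the Fourier variable $\lambda$ dual to $\te$ it acts as multiplication by $e^{-2\pi i t\lambda}$, so $\Delta^{1/2}$ is multiplication by $e^{-\pi\lambda}$. Hence $\dom(\Delta^{1/2})=\{\psi\in L^2(\Rl,d\te)\,:\,e^{-\pi\lambda}\hat\psi\in L^2\}$. I would then invoke the Paley--Wiener theory for strips to identify this domain with the Hardy space: a function $\psi\in L^2(\Rl,d\te)$ satisfies $e^{-\pi\lambda}\hat\psi\in L^2$ if and only if it is the boundary value (on $\Rl$) of a function in ${\mathbb H}^2(\Strip_\pi)$, the intermediate horizontal sections being controlled by a Hadamard three-lines (convexity) argument. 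On this domain $\Delta^{1/2}$ is the analytic-continuation/boundary-value map $\psi\mapsto\psi(\,\cdot\,+i\pi)$, consistent with the formal substitution $t=-i/2$ in $(\Delta^{it}\psi)(\te)=\psi(\te-2\pi t)$.

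With $\dom(S)=\dom(\Delta^{1/2})={\mathbb H}^2(\Strip_\pi)$ established and $(J\psi)(\te)=\overline{\psi(\te)}$, the remaining step is the direct computation
\begin{align*}
	(S\psi)(\te)=(J\Delta^{1/2}\psi)(\te)=\overline{\psi(\te+i\pi)}\,.
\end{align*}
The eigenvalue equation $S\psi=\psi$ therefore reads $\overline{\psi(\te+i\pi)}=\psi(\te)$ a.e.\ on $\Rl$, which, combined with the membership $\psi\in{\mathbb H}^2(\Strip_\pi)$, is precisely the claimed description of $H$. I expect the main obstacle to be the rigorous identification $\dom(\Delta^{1/2})={\mathbb H}^2(\Strip_\pi)$ together with the justification that $\Delta^{1/2}$ implements the boundary-value map to $\Rl+i\pi$; everything else is bookkeeping once the Hardy-space characterization of the modular domain is in place.
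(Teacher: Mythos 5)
Your proposal is correct, but it follows a genuinely different route from the paper's proof. Both arguments reduce the lemma to the identity $H=\{\psi\in\dom (S)\,:\,S\psi=\psi\}$ with $S=J\Delta^{1/2}$, so the real content in either case is the identification $\dom(\Delta^{1/2})={\mathbb H}^2(\Strip_\pi)$ together with the fact that on this domain $\Delta^{1/2}$ is the boundary-value map $\psi\mapsto\psi(\,\cdot\,+i\pi)$. You obtain this by diagonalizing: since $\Delta^{it}$ acts by translation, the Fourier transform turns $\Delta^{1/2}$ into multiplication by $e^{-\pi\lambda}$, and the $L^2$ Paley--Wiener theorem for strips identifies $\{\psi\,:\,e^{-\pi\lambda}\hat{\psi}\in L^2\}$ with ${\mathbb H}^2(\Strip_\pi)$; the uniform bound on intermediate sections that you attribute to a three-lines argument also follows from the elementary pointwise estimate $e^{-2\mu\lambda}\leq 1+e^{-2\pi\lambda}$ for $0\leq\mu\leq\pi$. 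The paper instead argues without any spectral representation: for the inclusion ``$\supset$'' it shows, for $\psi\in{\mathbb H}^2(\Strip_\pi)$ and arbitrary $f\in L^2$, that $t\mapsto\langle f,\Delta^{it}\psi\rangle$ extends boundedly and analytically to $\Strip_{-1/2}$ via the three-lines theorem, then upgrades weak to strong analyticity (Reed--Simon) to conclude $\psi\in\dom\Delta^{1/2}$; for ``$\subset$'' it uses a Cauchy--Fubini--Morera argument to show that the vector-valued analyticity of $\zeta\mapsto\Delta^{-i\zeta/2\pi}\psi$ forces pointwise analyticity of $\psi$ itself on $\Strip_\pi$. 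Your route is more explicit and shorter, at the price of importing the strip Paley--Wiener theorem as a black box (including the convergence of the horizontal $L^2$-sections to the two boundary values, which is what makes the a.e.\ condition and the identification of $\Delta^{1/2}\psi$ with $\psi(\,\cdot\,+i\pi)$ meaningful); the paper's soft-analysis argument is self-contained modulo standard functional analysis, and its ``$\subset$'' direction extracts the analytic extension of $\psi$ directly from the domain condition rather than from a cited theorem. One small point of care: fix your Fourier sign convention explicitly, since the opposite convention produces the multiplier $e^{+\pi\lambda}$ and would place the Hardy space in the lower strip.
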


$T(a)$ and $\Delta^{it}$ generate the unique irreducible positive energy representation $\hat{U}$ of $G_1$. The unitary involution $Z$ \eqref{eq:MomentumRepresentation} satisfies the commutation relations
\begin{align}
	Z\Delta^{it}Z&=\Delta^{-it}\,,\qquad ZJ=JZ\,,\qquad
	ZT(a)Z=:T'(a)=e^{iaP^{-1}}\,,
\end{align}
and one has $T'(a)\in\E(H,T)$ for $a\leq0$ \cite{LongoWitten:2010}. The first two equations imply $ZH=H'$.

The two-dimensional irreducible representations of $G_2$ can now be described as follows: In the mass zero case, $U_{0,\pm}$ is generated by $\Delta^{it}$ and the translations $T_{0,\pm}$,
\begin{align}
	T_{0,+}(x):=T(x_+)\,,\qquad T_{0,-}(x):=T'(x_-)\,.
\end{align}
The massive irreducible representations $U_m$, $m>0$, are generated by $\Delta^{it}$ and the translations
\begin{align}\label{eq:TmAndT}
	T_m(x)=T(\tfrac{mx_+}{2})T'(\tfrac{mx_-}{2})\,,
\end{align}
with generators $P_+=\frac{m}{2}P$, $P_-=\frac{m}{2}P^{-1}$.

In the two-dimensional situation, the unitary $Z$ implements spatial reflection (parity), $ZT(x_0,x_1)Z=T(x_0,-x_1)$. Similarly, the antiunitary involution 
\begin{align}
	\Gamma:=ZJ
\end{align}
implements time reflection, $\Gamma T(x_0,x_1)\Gamma=T(-x_0,x_1)$.
\\
\\
Since we are considering irreducible standard pairs, the endomorphisms $V\in\E(H,T)$ are given by symmetric inner functions, $\varphi$ (Theorem~\ref{Theorem:EHTin1d}, Proposition~\ref{proposition:MaximalAbelianInIrreps}). We will denote the net $H_V$ given by the endomorphism $V=\varphi(P)$ (in the one-dimensional case) by $I\mapsto H_\varphi(I)$, and the net given by the endomorphism $V=\varphi(\pm P_\pm)$ (in the two-dimensional case) with representation $U_{0,\pm}$ or $U_m$, $m>0$), by $\OO\mapsto H^{0,\pm}_\varphi(\OO)$ and $\OO\mapsto H^m_\varphi(\OO)$, respectively.

The massless two-dimensional nets can not be distinguished from one-dimensional nets (chiral components). In fact, we have 
\begin{align*}
	H^{0,+}_\varphi(\OO_{a,b})
	&=
	T_{0,+}(a)\varphi(P_+)H\cap T_{0,+}(b)H'\\
	&=
	T(a_+)\varphi(P)H\cap T(b_+)H'\\
	&=
	H_{\varphi}(I_{a_+,b_+})
\end{align*}
and
\begin{align*}
	\Gamma H^{0,-}_\varphi(\OO_{a,b})
	&=
	JT(a_-)Z\varphi(-P^{-1})H\cap JT(b_-)ZH'\\
	&=
	T(-a_-)J\varphi(-P)H'\cap T(-b_-)H'\\
	&=
	T(-a_-)\varphi(P)H\cap T(-b_-)H'\\
	&=
	H_{\varphi}(I_{-a_-,-b_-})
	\,,
\end{align*}
and thus the nets $H^{0,\pm}_\varphi$ either coincide with $H_\varphi$ or its time reflection $\Gamma H_\varphi$. We will therefore work with the one-dimensional nets $H_\varphi$ and the massive two-dimensional nets $H^m_\varphi$ only.

Since we are interested in the size of local subspaces, we introduce the {\em minimal localization radii} $r_\varphi,r_{m,\varphi}$ as

\begin{definition}
	The minimal localization radii $r_\varphi$ and $r_{m,\varphi}$ are defined as
	\begin{align}
		r_\varphi
		&:=
		\inf\{r\geq0\,:\,H_\varphi(I_r)\neq\{0\}\}\,,\\
		r_{m,\varphi}
		&:=
		\inf\{r\geq0\,:\,H^m_\varphi(\OO_r)\neq\{0\}\}\,.
	\end{align}
\end{definition}

Both these numbers lie in $[0,\infty]$ (with infinity meaning trivial subspaces for intervals/double cones of any size), and we will later give examples of functions $\varphi$ realizing any value in $[0,\infty]$ as minimal localization radius $r_\varphi$ or $r_{m,\varphi}$. 

It is interesting to note that the minimal localization radius always marks a sharp divide between trivial and cyclic subspaces:

\begin{theorem}\label{Theorem:CyclicAboveR}
	\begin{enumerate}
		\item Let $r>r_\varphi$. Then $H_\varphi(I_r)$ is cyclic.
		\item Let $r>r_{m,\varphi}$. Then $H^m_\varphi(\OO_r)$ is cyclic.
	\end{enumerate}
\end{theorem}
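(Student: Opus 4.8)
The plan is to prove cyclicity of $H_\varphi(I_r)$ for $r>r_\varphi$ by a Reeh--Schlieder type argument that propagates a single nonzero local vector, combining the positivity of the energy with a boundary--uniqueness property of the Hardy--space vectors making up $H$. Recall that a closed real subspace $K\subset\Hil$ is cyclic exactly when its complex orthogonal complement is trivial, i.e. when $\langle\eta,\zeta\rangle=0$ for all $\zeta\in K$ forces $\eta=0$; this is what I will verify. Throughout I work in the rapidity representation \eqref{identify}, where $P$ acts as multiplication by $e^\te$, the endomorphism $V=\varphi(P)$ as multiplication by the symmetric inner function $\varphi$, and $H\subset\mathbb{H}^2(\Strip_\pi)$ by Lemma~\ref{Lemma:CharacterizeH}. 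Since $r_\varphi=\inf\{r\ge0:H_\varphi(I_r)\neq\{0\}\}$ and $r>r_\varphi$, I may choose $r_0\in(r_\varphi,r)$ with $H_\varphi(I_{r_0})\neq\{0\}$ and a nonzero $\xi\in H_\varphi(I_{r_0})$. By isotony and translation covariance of the subnet (Lemma~\ref{Lemma:BasicNetPropertiesForHV}~{\em ii)}, i.e. Proposition~\ref{Proposition:BasicNetProperties}~{\em ii),iv)} for $H_V$) one has $T(a)\xi\in H_\varphi(I_{r_0}+a)\subset H_\varphi(I_r)$ for every $|a|<r-r_0$, so a whole translated family of vectors sits inside $H_\varphi(I_r)$.

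Two analytic facts then close the argument. \emph{Full support:} from $H_\varphi(I_{r_0})\subset T(-r_0)VH\subset T(-r_0)H$ I may write $\xi=T(-r_0)\chi$ with $\chi\in H\subset\mathbb{H}^2(\Strip_\pi)$ nonzero; since $T(-r_0)$ acts by the unimodular factor $e^{-ir_0e^\te}$ on the real line, $|\xi(\te)|=|\chi(\te)|$ for a.e. $\te$, and a nonzero Hardy function cannot vanish on a set of positive measure, so $\xi(\te)\neq0$ almost everywhere. \emph{Propagation:} let $\eta$ be complex--orthogonal to $H_\varphi(I_r)$ and put $F(a):=\langle\eta,T(a)\xi\rangle$, which vanishes for $|a|<r-r_0$ by the previous step. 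The spectrum condition $P\ge0$ (Definition~\ref{Definition:StandardPair}) makes $a\mapsto e^{iaP}$ a contraction for $\im a\ge0$, so $F$ is bounded and continuous on $\overline{\Cl_+}$ and analytic on $\Cl_+$; a bounded analytic function whose continuous boundary values vanish on a real interval vanishes identically, by Schwarz reflection. Hence $F(a)=0$ for all $a\in\Rl$, so the Fourier transform of $\overline{\eta}\,\xi\in L^1(\Rl,d\te)$ (w.r.t. the variable conjugate to $e^\te$) vanishes, giving $\overline{\eta(\te)}\,\xi(\te)=0$ a.e. Together with full support of $\xi$ this forces $\eta=0$, proving {\em i)}.

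The massive two--dimensional statement {\em ii)} follows by the same scheme. One works in the momentum representation, where Appendix~\ref{Section:Appendix} characterizes $H$ again as boundary values of functions analytic in the rapidity strip, so the full--support step is unchanged; the role of positivity is played by the joint condition $P_+,P_->0$, which turns $x\mapsto\langle\eta,T(x)\xi\rangle$ into the boundary value of a function analytic in the tube over the forward light cone. Its vanishing on an open subset of $\Rl^2$ (obtained from the small two--dimensional translates of $\xi$ that remain inside $\OO_r$) again forces it to vanish on all of $\Rl^2$, and one concludes $\eta=0$ as before.

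I expect the delicate point to be the full--support step: one must be sure that a nonzero element of the \emph{intersection} $H_\varphi(I_{r_0})$ really inherits the Hardy (entire--function--of--exponential--type) structure that excludes a positive--measure zero set of its boundary values. This is precisely where Lemma~\ref{Lemma:CharacterizeH} (respectively its two--dimensional counterpart in Appendix~\ref{Section:Appendix}), the unimodularity of $\varphi$, and the unimodularity of the translation multipliers on the real boundary enter, and where the containment $H_\varphi(I_{r_0})\subset T(-r_0)H$ is what makes the boundary--uniqueness theorem applicable. The propagation step is a standard positive--energy argument, except that in the two--dimensional case it replaces plain Schwarz reflection by the multidimensional boundary--uniqueness theorem for tube domains.
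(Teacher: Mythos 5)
Your proposal is correct, but it proves the theorem by a genuinely different route than the paper. The paper's proof is constructive: it picks $r'\in(r_\varphi,r)$, uses Lemma~\ref{Lemma:PhiAsRatio-1d} to write $\varphi=\psi_1/\psi_2$ with nonzero $\psi_1,\psi_2\in H(I_{r'})$, and then multiplies by the family $\widehat{f'}$, $f\in C_{c,\Rl}^\infty(I_{r-r'})$; the exponential-type characterization in Proposition~\ref{Proposition:CharacterizationOfHI}~$iv)$ shows $\psi_k\widehat{f'}\in H(I_r)$, hence $\psi_1\widehat{f'}=\varphi\cdot\psi_2\widehat{f'}\in H_\varphi(I_r)$, and totality follows from density of the $\widehat{f'}$ together with the fact that the entire function $\psi_1$ has only isolated real zeros (part~$ii)$ is handled identically via Proposition~\ref{Proposition:CharacterizationOfHO} and the decomposition of $(\psi_k f)_\pm$). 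You instead run the abstract Reeh--Schlieder mechanism: a single nonzero $\xi\in H_\varphi(I_{r_0})$ whose translates lie in $H_\varphi(I_r)$, positivity of $P$ (resp.\ analyticity in the forward tube for $d=2$) forcing $\langle\eta,T(\cdot)\xi\rangle$ to vanish identically once it vanishes on an interval (resp.\ an open set, where the tube statement reduces to your half-plane argument along complex lines $x+\zeta y$ with $y$ in the cone), and then the boundary-uniqueness fact that a nonzero element of ${\mathbb H}^2(\Strip_\pi)$ has almost everywhere nonvanishing boundary values. Your steps all check out: the containment $H_\varphi(I_{r_0})\subset T(-r_0)VH$ with $VH\subset H$ and the unimodularity of the multipliers correctly reduce full support of $\xi$ to Lemma~\ref{Lemma:CharacterizeH}, and after the change of variables $p=e^\te$ the function $p\mapsto\overline{\eta(\log p)}\,\xi(\log p)/p$ lies in $L^1(\Rl_+,dp)$, so injectivity of the Fourier transform on $L^1$ does give $\overline{\eta}\,\xi=0$ almost everywhere. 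Comparing the two approaches: yours needs neither Lemma~\ref{Lemma:PhiAsRatio-1d} nor the Paley--Wiener characterizations of the local spaces --- only isotony, translation covariance, the spectrum condition, and the Hardy structure of $H$ --- so it is more economical and would apply verbatim to any translation-covariant positive-energy net of standard subspaces whose wedge vectors have almost everywhere nonvanishing momentum-space wave functions; its one extra input is the (standard, but worth citing) fact that nonzero strip-Hardy functions vanish at most on a null set of the boundary. The paper's proof, in exchange, produces an explicit dense family of smoothly smeared localized vectors $\psi_1\widehat{f'}\in H_\varphi(I_r)$, a device that is reused elsewhere (the subadditivity of $r_\varphi$, the sine-type example, and the two-dimensional Blaschke construction), which your softer argument does not provide.
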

The proof of this theorem is postponed to the subsequent subsections. We conclude the present general discussion by pointing out some relations between the one- and two-dimensional nets. Note that the relation between the time reflection $\Gamma$ and the automorphism $\gamma$ \eqref{eq:gamma} in part $ii)$ of the proposition below also explains the term ``time reflection invariant'' for the $\gamma$-invariant symmetric inner functions.

\begin{proposition}\label{Proposition:1dvs2dNets}
	\begin{enumerate}
		\item $H_\varphi^m(\OO_r)\supset H_{\varphi\circ\frac{m}{2}}(I_{\frac{mr}{2}})$
		\item $\Gamma H_\varphi^m(\OO_r)=H^m_{\gamma(\varphi)}(\OO_r)$
		\item $r_{m,\varphi}\leq\min\{r_\varphi,\,r_{\gamma(\varphi)}\}$.
	\end{enumerate}
\end{proposition}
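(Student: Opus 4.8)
The three parts build on one another, so the plan is to establish i), then ii), and combine them for iii).

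For i), the plan is a direct computation. Writing $\OO_r=I_{x,y}$ with $x=(-r,r)$, $y=(r,-r)$ in lightcone coordinates and using $T_m(a)=T(\tfrac{ma_+}{2})T'(\tfrac{ma_-}{2})$ together with $P_+=\tfrac m2 P$, one has, with $s:=\tfrac{mr}{2}\ge 0$,
\begin{align*}
	H^m_\varphi(\OO_r)=T(-s)T'(s)\varphi(P_+)H\,\cap\,T(s)T'(-s)H'\,.
\end{align*}
Since $-s\le 0$, the fact recorded in the text that $T'(a)\in\E(H,T)$ for $a\le 0$ gives $T'(-s)H\subset H$, hence $T'(s)H\supset H$ and, taking symplectic complements, $T'(-s)H'\supset H'$. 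As $T'(s)=e^{isP^{-1}}$ commutes with $\varphi(P_+)=\varphi(\tfrac m2 P)$, the first factor contains $T(-s)\varphi(P_+)H$ and the second contains $T(s)H'$; intersecting, $H^m_\varphi(\OO_r)\supset T(-s)\varphi(P_+)H\cap T(s)H'=H_{\varphi\circ\frac m2}(I_{s})$, which is exactly i).

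For ii), the plan is to push $\Gamma=ZJ$ through the intersection. Since $\Gamma$ is an antiunitary bijection, $\Gamma H^m_\varphi(\OO_r)=\Gamma T_m(x)\varphi(P_+)H\cap\Gamma T_m(y)H'$. The corners $x,y$ of $\OO_r$ lie on the time-zero axis and are therefore fixed by the time reflection implemented by $\Gamma$, so $\Gamma T_m(x)=T_m(x)\Gamma$ and likewise for $y$. From $JH=H'$ and $ZH=H'$ one gets $\Gamma H=H$ and $\Gamma H'=H'$. It then remains to identify $\Gamma\varphi(P_+)\Gamma$: the involution $\Gamma$ swaps the lightlike generators, $\Gamma P_\pm\Gamma=P_\mp$, and combined with the mass-shell relation $P_-=\tfrac14 m^2 P_+^{-1}$ and the symmetry $\varphi(-p)=\overline{\varphi(\overline p)}$ this yields $\Gamma\varphi(P_+)\Gamma=\gamma(\varphi)(P_+)$ -- this is the point where the automorphism $\gamma_m$, identified with $\gamma$ as in the Remark following Proposition~\ref{proposition:MaximalAbelianInIrreps}, enters. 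Substituting gives $\Gamma H^m_\varphi(\OO_r)=T_m(x)\gamma(\varphi)(P_+)H\cap T_m(y)H'=H^m_{\gamma(\varphi)}(\OO_r)$.

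For iii), the plan is to combine i) and ii) with the dilation behaviour of $r_\varphi$. A scaling $p\mapsto cp$ is implemented by a unitary $D_c$ (a rapidity translation in the picture of Lemma~\ref{Lemma:CharacterizeH}) that fixes $H$ and $H'$ and satisfies $D_cT(a)D_c^{-1}=T(ca)$ and $D_c\varphi(P)D_c^{-1}=(\varphi\circ c)(P)$; hence $D_cH_\varphi(I_r)=H_{\varphi\circ c}(I_{cr})$, and therefore $r_{\varphi\circ c}=c\,r_\varphi$. Taking $c=\tfrac m2$, part i) shows $H^m_\varphi(\OO_r)\supset H_{\varphi\circ\frac m2}(I_{mr/2})$, which is non-zero as soon as $\tfrac{mr}2>r_{\varphi\circ\frac m2}=\tfrac m2\, r_\varphi$, i.e.\ for every $r>r_\varphi$ (using isotony); thus $r_{m,\varphi}\le r_\varphi$. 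Applying the antiunitary bijection of ii) gives $H^m_\varphi(\OO_r)\ne\{0\}\Leftrightarrow H^m_{\gamma(\varphi)}(\OO_r)\ne\{0\}$, so $r_{m,\varphi}=r_{m,\gamma(\varphi)}$; the previous bound applied to $\gamma(\varphi)$ then gives $r_{m,\varphi}\le r_{\gamma(\varphi)}$. Together these yield $r_{m,\varphi}\le\min\{r_\varphi,r_{\gamma(\varphi)}\}$.

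I expect the main obstacle to be the bookkeeping in ii): getting the antiunitary conjugation $\Gamma\varphi(P_+)\Gamma$ exactly right, in particular tracking the mass-dependent scaling so that the automorphism produced by the $P_+\leftrightarrow P_-$ swap is the stated $\gamma$ (rather than its rescaled cousin $\gamma_m$), and checking carefully that the corners of $\OO_r$ are genuinely $\Gamma$-invariant. The dilation-covariance identity $r_{\varphi\circ c}=c\,r_\varphi$ needed in iii) is the other point requiring a small separate argument.
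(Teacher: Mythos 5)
Your proposal is correct and follows essentially the same route as the paper: part i) is the identical computation using $T'(\mp\tfrac{mr}{2})\in\E(H,T)$ and symplectic complements, part ii) pushes $\Gamma$ through the intersection with the same identities (the paper applies $\Gamma T(a)=T'(-a)\Gamma$, $\Gamma H=H$, $\Gamma\varphi(P)=\gamma(\varphi)(P)\Gamma$ to the decomposed first line of i), which is your corner-invariance argument in disguise), and part iii) is the paper's conjugation by $\Delta^{it}$ with $e^{-2\pi t}\tfrac m2=1$ --- precisely your dilation identity $r_{\varphi\circ c}=c\,r_\varphi$ --- combined with the antiunitarity of $\Gamma$. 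The mass-dependent scaling you flag as the main obstacle in ii) (the swap $P_+\leftrightarrow P_-$ producing $\gamma_m$ rather than $\gamma$) is handled by the paper in the same implicit way, via the identification of $\gamma_m$ with $\gamma$ in the Remark after Proposition~\ref{proposition:MaximalAbelianInIrreps}, so your treatment matches the paper's on this point as well.
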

\begin{proof}
	$i)$ Making use of \eqref{eq:TmAndT},  $P_+=\frac{m}{2}P$, and $T(a),T'(-a)\in\E(H,T)$ for $a\geq0$, we find the inclusion
	\begin{align*}
		H_\varphi^m(\OO_r)
		&=
		T(-\tfrac{mr}{2})T'(\tfrac{mr}{2})\varphi(P_+) H\cap T(\tfrac{mr}{2})T'(-\tfrac{mr}{2})H'
		\\
		&\supset
		T(-\tfrac{mr}{2})\varphi(\tfrac{m}{2}P)H\cap T(\tfrac{mr}{2})H'
		\\
		&=
		H_{\varphi\circ\frac{m}{2}}(I_{\frac{mr}{2}})\,,
	\end{align*}
	as claimed. $ii)$ follows in view of the identities $\Gamma T(a)=T'(-a)\Gamma$, $\Gamma H=H$, and $\Gamma\varphi(P)=\gamma(\varphi)(P)\Gamma$, by applying $\Gamma$ to the first line of the above calculation.
	
	$iii)$ By Borchers' commutation relations, we find for $t$ such that $e^{-2\pi t}\cdot\frac{m}{2}=1$
	\begin{align*}
		\Delta^{it}H^m_\varphi(\OO_r)
		&\supset
		\Delta^{it}H_{\varphi\circ\frac{m}{2}}(I_{\frac{mr}{2}})
		\\
		&=
		\Delta^{it}T(-\tfrac{mr}{2})\varphi(\tfrac{m}{2}P)H\cap \Delta^{it}T(\tfrac{mr}{2})H'
		\\
		&=
		T(-r)\varphi(P)H\cap T(r)H'
		\\
		&=
		H_\varphi(I_r)\,.
	\end{align*}
	As $\Delta^{it}H^m_\varphi(\OO_r)$ is non-trivial if and only if $H^m_\varphi(\OO_r)$ is, we find non-triviality for $r>r_\varphi$, i.e. have shown $r_{m,\varphi}\leq r_\varphi$. In view of part $ii)$, we also get $r_{m,\varphi}\leq r_{\gamma(\varphi)}$.
\end{proof}

We will see later that in certain cases, the inequality in $iii)$ becomes an equality, whereas in other cases, one has $r_\varphi=r_{\gamma(\varphi)}=\infty$, but $r_{m,\varphi}=0$. For such examples, and also for the proof of Theorem~\ref{Theorem:CyclicAboveR} and the calculation of $r_\varphi$ and $r_{m,\varphi}$ from $\varphi$, more detailed information on the subspaces $H(I_r)$, $H^m(\OO_r)$ is needed. The following two subsections are devoted to studying these two cases.

\subsection{Cyclicity of interval subspaces}

In this subsection, we analyze the interval subspaces $H_\varphi(I)$. As a prerequisite for this, we first give a characterization of the localized standard subspaces $H(I)$ for the case $V=1$. In view of the translational invariance, it suffices to consider the symmetric intervals $I=I_r:=(-r,r)$, $r>0$. Working in the lightray representation, it is useful to introduce the skew-symmetric extension of functions $\psi\in L^2(\Rl_+,dp/p)$ to $\Rl$ as
\begin{align*}
	\psi^{\tt s}(p) :=
	\begin{cases}
		\overline{\psi(-p)} & p<0\\
		\psi(p) & p>0		
	\end{cases}
	\,.
\end{align*}
$\psi^{\tt s}$ defines a tempered distribution in $\Ss'(\Rl)$ (see Appendix~\ref{Section:Appendix1d}), so that we can consider its (inverse) Fourier transform in the sense of distributions.

\begin{proposition}\label{Proposition:CharacterizationOfHI}
	Let $(H,T)$ be the one-dimensional irreducible standard pair, and $\psi\in\Hil = L^2(\mathbb R_+ , dp/p)$. The following are equivalent:
	\begin{enumerate}
		\item $\psi\in H(I_r)$.
		\item The inverse Fourier transform of $\psi^{\tt s}$ has support in $I_r$.
		\item $\psi$ extends to an entire analytic function such that $\psi(-\bar p) = \overline{\psi(p)}$ and
		$|\psi(p)| \leq C |p| e^{r|\im p|}$ for all $p\in\mathbb C$ and some constant $C>0$.
% 		\item $\psi$ extends to an entire analytic function such that $\psi(-\bar p) = \overline{\psi(p)}$ and for some $N\in\Nl$
% 		$|\psi(p)| \leq C (1 + |p|)^N e^{r|\im p|}$, $p\in\mathbb C$, for some constant $C>0$.
		\item $\psi$ extends to an entire analytic function such that $\psi(-\bar p) = \overline{\psi(p)}$, and of exponential type at most $r$, namely for all $\eps>0$ there exists $C_\eps>0$ such that $|\psi(p)| \leq C_\varepsilon  e^{(r+\varepsilon)|p|}$,  $p\in\mathbb C$.
	\end{enumerate}
	For $\psi\in H(I)$, the function $\Rl_+\ni p\mapsto -i\psi(p)/p$ is the restriction to $\Rl_+$ of the Fourier transform of a real function in $L^2(\Rl,dx)$ with support in the closure of $I$, and
	\begin{align}\label{eq:HIFourier}
		\{\widehat{f'}|_{\Rl_+}\,:\,f\in C_{c,\Rl}^\infty(I)\}
		\subset 
		H(I)
	\end{align}
	is cyclic.
\end{proposition}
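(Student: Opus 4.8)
The plan is to work throughout in the lightray representation and to push every condition onto a single object, the skew-symmetric extension $\psi^{\tt s}\in\Ss'(\Rl)$, whose inverse Fourier transform lives in ``position'' space. The computational engine is the elementary identity $(T(a)\psi)^{\tt s}=e^{iap}\,\psi^{\tt s}$ for $a\in\Rl$, which one verifies directly from the definition of $\psi^{\tt s}$ and the reality of $a,p$: the extension is tailored so that it intertwines $T(a)$ with multiplication by $e^{iap}$ on all of $\Rl$. Since the inverse Fourier transform turns $e^{iap}$ into translation by $a$, every membership statement will reduce to a support condition on $\check{\psi^{\tt s}}$, and the whole equivalence becomes bookkeeping with two half-lines.

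First I would prove {\em i)}$\Leftrightarrow${\em ii)}. Using Lemma~\ref{Lemma:CharacterizeH} together with the isometric change of variables $p=e^\te$ (which maps $\Strip_\pi$ onto the upper half plane $\Cl_+$ and $L^2(\Rl,d\te)$ onto $\Hil$), the defining relation $\overline{\psi(\te+i\pi)}=\psi(\te)$ becomes exactly the statement that $\psi^{\tt s}$ is the boundary value of a function in $\mathbb{H}^2(\Cl_+)$; indeed the extension was designed so that $\overline{\psi(-p)}=\psi(p)$ makes $\psi^{\tt s}$ agree on the negative axis with the continuation of $\psi$ across $\Rl_+$. By the (distributional) Paley--Wiener theorem this is equivalent to $\supp\check{\psi^{\tt s}}\subset(-\infty,0]$, and analogously, using $S_{H'}=J\Delta^{-1/2}$, one gets $\psi\in H'\Leftrightarrow\supp\check{\psi^{\tt s}}\subset[0,\infty)$. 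Applying the intertwining identity, $\psi\in T(-r)H$ is equivalent to $\supp\check{\psi^{\tt s}}\subset(-\infty,r]$ and $\psi\in T(r)H'$ to $\supp\check{\psi^{\tt s}}\subset[-r,\infty)$. As $H(I_r)=T(-r)H\cap T(r)H'$, intersecting the two half-lines yields $\supp\check{\psi^{\tt s}}\subset[-r,r]$, which is {\em ii)}. (General $I$ reduces to the symmetric case by translation covariance, Proposition~\ref{Proposition:BasicNetProperties}~{\em iv)}.)

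Next I would close the loop {\em ii)}$\Rightarrow${\em iii)}$\Rightarrow${\em iv)}$\Rightarrow${\em ii)}, and I would arrange it so that {\em ii)}$\Rightarrow${\em iii)} also yields the first supplementary claim. Since $\psi\in\Hil$ forces $\psi(0)=0$ (a nonzero value at the origin would make $\int_0 |\psi|^2\,dp/p$ diverge), the entire extension divides: set $\hat F:=-i\psi^{\tt s}/p$, which lies in $L^2(\Rl)$ because $\int^\infty|\psi|^2p^{-2}\,dp\le\|\psi\|^2$ and the integrand is bounded near $0$. Then $F$ is real (the Hermitian symmetry $\psi^{\tt s}(-p)=\overline{\psi^{\tt s}(p)}$ makes $\check{\psi^{\tt s}}=F'$ real) and, being an $L^2$ function with $F'=\check{\psi^{\tt s}}$ supported in $[-r,r]$, it satisfies $\supp F\subset[-r,r]$; this is precisely the representation $-i\psi(p)/p=\hat F(p)$ asserted at the end. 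A Cauchy--Schwarz estimate on $[-r,r]$ gives $|\hat F(p)|\le C e^{r|\im p|}$ for all $p\in\Cl$, whence $|\psi(p)|=|p|\,|\hat F(p)|\le C|p|e^{r|\im p|}$, which is {\em iii)}; the reality $\psi(-\bar p)=\overline{\psi(p)}$ follows from $\hat F(-\bar p)=\overline{\hat F(p)}$ for real $F$ and the factor $ip$. The step {\em iii)}$\Rightarrow${\em iv)} is immediate from $|p|e^{r|\im p|}\le C_\eps e^{(r+\eps)|p|}$, and {\em iv)}$\Rightarrow${\em ii)} is Paley--Wiener--Schwartz: an entire function of exponential type $\le r$ that is polynomially bounded on $\Rl$ (as $\psi^{\tt s}$ is, being tempered) has inverse transform supported in $[-r,r]$.

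Finally the cyclic family: for real $f\in C^\infty_{c,\Rl}(I)$ one checks $(\widehat{f'})^{\tt s}=\widehat{f'}$ on $\Rl$ (using $\overline{\hat f(-p)}=\hat f(p)$), and since $\check{\widehat{f'}}=f'$ has support in $I$, {\em ii)} gives $\widehat{f'}|_{\Rl_+}\in H(I)$, while $\widehat{f'}|_{\Rl_+}\in\Hil$ because $\hat f$ is Schwartz. Cyclicity is then a Reeh--Schlieder type argument: if $\chi\in\Hil$ satisfies $\langle\chi,\widehat{f'}\rangle=0$ for all such $f$, then with $G:=\overline{\chi}\,1_{\Rl_+}$ one finds $\int_\Rl\hat G(x)f(x)\,dx=0$ for every real $f\in C^\infty_c(I)$, so $\hat G$ vanishes on $I$; but $G$ is supported in $[0,\infty)$, so $\hat G$ is the boundary value of a function analytic in a half plane, and such a function vanishing on an interval vanishes identically, forcing $\chi=0$. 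The main obstacle is analytic rather than structural: because $\Hil$ carries the measure $dp/p$ and not Lebesgue measure, $\psi^{\tt s}$ is in general only a tempered distribution, so none of the Paley--Wiener arguments can be run naively in $L^2(\Rl,dp)$; setting up the distributional framework in which ``boundary value of a Hardy function'' and ``support of $\check{\psi^{\tt s}}$'' are rigorously equivalent (the content of the Appendix), and extracting the sharp linear factor $|p|$ in {\em iii)} from the weighted norm together with $\psi(0)=0$, are the delicate points; everything else is driven by the identity $(T(a)\psi)^{\tt s}=e^{iap}\psi^{\tt s}$.
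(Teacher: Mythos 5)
Your treatment of \emph{ii)}$\Rightarrow$\emph{iii)}$\Rightarrow$\emph{iv)}$\Rightarrow$\emph{ii)}, of the $L^2(\overline{I},dx)$ representation, and of the factor $|p|$ is essentially the paper's argument: the paper likewise uses $\psi^{\tt s}(0)=0$ to divide by $p$, gets the support of the antiderivative from that of $\check\psi^{\tt s}$, and applies the $L^2$ Paley--Wiener theorem. One small repair in your \emph{iv)}$\Rightarrow$\emph{ii)}: temperedness of $\psi^{\tt s}$ does not give a pointwise polynomial bound on $\Rl$; the paper avoids this by passing to $\xi(p)=i\psi^{\tt s}(p)/p\in L^2(\Rl,dp)$ and using classical $L^2$ Paley--Wiener, and your version needs the Plancherel--P\'olya/Phragm\'en--Lindel\"of refinement (exponential type plus $\psi^{\tt s}(p)/(1+|p|)\in L^2(\Rl,dp)$). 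Your cyclicity argument is a genuinely different and valid route: the paper deduces density of the family \eqref{eq:HIFourier} in $H(I)$ from the $L^2(\overline I,dx)$ representation and then invokes cyclicity of $H(I)$ (Proposition~\ref{Proposition:CyclicityForHNet}), whereas you prove density of the complex span in $\Hil$ directly by a boundary-uniqueness argument for Hardy-class functions -- this is sound and in fact mirrors the technique the paper uses later in the proof of Theorem~\ref{Theorem:CyclicAboveR}.

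The genuine gap is in \emph{i)}$\Leftrightarrow$\emph{ii)}. You reduce it to the assertion that ``$\psi^{\tt s}$ is the boundary value of a Hardy function on $\Cl_+$'' is equivalent, ``by the (distributional) Paley--Wiener theorem,'' to one-sided support of $\check\psi^{\tt s}$, and you defer the justification to ``the content of the Appendix.'' But that is not what the appendix contains, and no off-the-shelf Paley--Wiener theorem applies here: under $p=e^\te$, Lemma~\ref{Lemma:CharacterizeH} gives $L^2$ bounds of $\psi$ along \emph{rays} with respect to $dp/p$ (a Mellin--Hardy class), not the horizontal-line $L^2$ bounds defining ${\mathbb H}^2(\Cl_+)$, and $\psi^{\tt s}$ itself is only a tempered distribution. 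Converting the ray bounds into a support statement requires real work (e.g.\ showing $\psi^{\tt s}(p)/(p+i)\in{\mathbb H}^2(\Cl_+)$ by a bounded-type argument, or a careful contour-shift duality estimate), which your proposal does not supply. The paper avoids this entirely by a structural route: it defines $K(I)$ by the support condition, proves closedness (Lemma~\ref{Lemma:KIClosed}) and covariance of the half-line spaces under translations and the modular group, invokes the irreducibility-based uniqueness result Proposition~\ref{unique} to conclude $K(0,\infty)=\alpha H$, fixes $\alpha=\pm1$ on the vectors $\widehat{f'}|_{\Rl_+}$, obtains the converse inclusion of Corollary~\ref{Hsupp} by a symplectic-orthogonality computation against the dense family \eqref{eq:densS'}, and then intersects half-lines (Lemma~\ref{Lemma:K=H}). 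So for this step your approach is not a variant of the paper's proof but a different one with its key analytic lemma missing.

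Finally, an orientation slip: with the paper's convention $\check\psi^{\tt s}(x)=(2\pi)^{-1/2}\int_\Rl dp\,\psi^{\tt s}(p)e^{-ipx}$, Corollary~\ref{Hsupp} reads $\psi\in H\Leftrightarrow\supp\check\psi^{\tt s}\subset[0,\infty)$, the opposite of your assignment. This is harmless for the symmetric interval $I_r$, but for general $I$ your bookkeeping places the support in $-\overline{I}$ rather than $\overline{I}$, contradicting the supplementary claim; you should fix the Fourier convention at the outset.
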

Here and in the following, a subscript ``$c$'' means compact support and a subscript ``$\Rl$'' real-valued functions. Proposition~\ref{Proposition:CharacterizationOfHI} is proven in Appendix~\ref{Section:Appendix1d}.
\\
\\
After these preparations, we turn to the proof of Theorem~\ref{Theorem:CyclicAboveR}~$i)$. We first give an auxiliary Lemma.

\begin{lemma}\label{Lemma:PhiAsRatio-1d}
	Let $\varphi$ be a symmetric inner function, and $r>0$. Then  $H_\varphi(I_r)\neq\{0\}$ if and only if there exist non-zero $\psi_1,\psi_2\in H(I_r)$ such that $\varphi=\psi_1/\psi_2$.
\end{lemma}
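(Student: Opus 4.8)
The plan is to prove the biconditional in Lemma~\ref{Lemma:PhiAsRatio-1d} by establishing both implications, working in the lightray representation where $V=\varphi(P)$ acts as multiplication by the boundary values of $\varphi$. I would start from the identity provided by Lemma~\ref{Lemma:BasicNetPropertiesForHV}~\emph{i)}, namely $H_\varphi(I_r)=H(I_r)\cap\varphi(P)H(I_r)$, which reduces the whole question to understanding when a nonzero vector lies simultaneously in $H(I_r)$ and in $\varphi(P)H(I_r)$.

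\textbf{The forward direction.} Suppose $H_\varphi(I_r)\neq\{0\}$ and pick a nonzero $\psi\in H_\varphi(I_r)=H(I_r)\cap\varphi(P)H(I_r)$. Then $\psi\in H(I_r)$ directly, so set $\psi_1:=\psi$. Since $\psi\in\varphi(P)H(I_r)$, there is some $\psi_2\in H(I_r)$ with $\psi=\varphi(P)\psi_2$, i.e.\ $\psi_1(p)=\varphi(p)\psi_2(p)$ as multiplication operators on $L^2(\Rl_+,dp/p)$. Because $\varphi$ is a symmetric inner function it has unimodular boundary values a.e., hence is nonzero a.e., so we may write $\varphi=\psi_1/\psi_2$ as an a.e.\ identity of boundary values. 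Note $\psi_2$ is nonzero since $\psi_1=\varphi(P)\psi_2$ is nonzero and $\varphi(P)$ is unitary. This gives the two required vectors.

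\textbf{The reverse direction.} Conversely, suppose $\psi_1,\psi_2\in H(I_r)$ are nonzero with $\varphi=\psi_1/\psi_2$, meaning $\psi_1(p)=\varphi(p)\psi_2(p)$ a.e. Then $\psi_1=\varphi(P)\psi_2\in\varphi(P)H(I_r)$, and also $\psi_1\in H(I_r)$ by hypothesis, so $\psi_1\in H(I_r)\cap\varphi(P)H(I_r)=H_\varphi(I_r)$, and $\psi_1\neq\{0\}$ shows the intersection is nontrivial.

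\textbf{Main obstacle and cautions.} The argument above is essentially formal once the multiplication-operator picture is fixed, so the only real subtlety is bookkeeping about boundary values versus analytic extensions. The point requiring care is the passage between ``$\varphi(P)\psi_2$'' as a Hilbert-space vector and ``$\varphi\cdot\psi_2$'' as a pointwise a.e.\ product: one must know that $\varphi$, being an inner function, has well-defined unimodular boundary values a.e.\ (as recorded in the remarks following the definition of symmetric inner functions), so that division $\psi_1/\psi_2$ makes sense almost everywhere wherever $\psi_2\neq0$. A secondary check is that $\psi_2$ need not vanish on a set of positive measure, which is automatic here since $\psi_1$ is nonzero and $\varphi$ is unimodular; and one should phrase the ratio identity as an a.e.\ equality of boundary values rather than an identity of analytic continuations, since $\varphi=\psi_1/\psi_2$ is only being asserted on $\Rl_+$. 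With these observations in place the lemma follows immediately, and it is precisely the reformulation that makes the subsequent function-theoretic analysis of $r_\varphi$ tractable.
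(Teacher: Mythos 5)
Your proposal is correct and follows essentially the same route as the paper, which likewise reduces the statement via Lemma~\ref{Lemma:BasicNetPropertiesForHV}~\emph{i)} to the identity $H_\varphi(I_r)=H(I_r)\cap\varphi(P)H(I_r)$ and reads off the ratio $\varphi=\psi_1/\psi_2$ from $\psi_1=\varphi(P)\psi_2$. Your extra remarks on a.e.\ unimodular boundary values and the non-vanishing of $\psi_2$ merely make explicit what the paper leaves implicit.
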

\begin{proof}
	By Lemma~\ref{Lemma:BasicNetPropertiesForHV}~$i)$, $H_\varphi(I_r)=H(I_r)\cap \varphi(P)H(I_r)$, i.e. $H_\varphi(I_r)\neq\{0\}$ is equivalent to the condition that there exist non-zero $\psi_1,\psi_2\in H(I_r)$ such that $\psi_1=\varphi\cdot \psi_2\Rightarrow\varphi=\psi_1/\psi_2$.
\end{proof}

	\noindent{\em Proof of Theorem~\ref{Theorem:CyclicAboveR}~$i)$}. Let $r>r_\varphi$, and pick $r'$ such that $r_\varphi<r'<r$. By definition of $r_\varphi$, we have $H_\varphi(I_{r'})\neq\{0\}$, and by the preceding Lemma, this implies $\varphi=\psi_1/\psi_2$ with certain non-zero $\psi_1,\psi_2\in H(I_{r'})$. Now let $f\in C_{c,\Rl}^\infty(I_\eps)$, with $\eps:=r-r'>0$. We will show $\psi_k\cdot\widehat{f'}\in H(I_{r})$, $k=1,2$, by verifying Proposition~\ref{Proposition:CharacterizationOfHI}~$iv)$. To begin with, $\psi_k\cdot\widehat{f'}$ lies in $L^2(\Rl_+,dp/p)$ because $\psi_k$ does and $\widehat{f'}$ is of Schwartz class on $\Rl$. Now, by Proposition~\ref{Proposition:CharacterizationOfHI}, both $\psi_k$ and $\widehat{f'}$ extend to entire analytic functions, of exponential type at most $r'$ and $\eps$, respectively, and both satisfy the reality condition $\overline{\psi_k(-\overline{p})}=\psi_k(p)$. Hence their product has the same properties, with exponential type at most $r'+\eps=r$, i.e. $\psi_k\cdot\widehat{f'}\in H(I_{r})$ by Proposition~\ref{Proposition:CharacterizationOfHI}~$iv)$. This implies $\psi_1\widehat{f'}\in H_\varphi(I_r)$, since this vector also lies in $\varphi(P)H(I_r)$ because $\psi_1\widehat{f'}=\varphi\cdot\psi_2\widehat{f'}$.
	
	It remains to show that the $\psi_1\widehat{f'}$, where $f$ varies over $C_{c,\Rl}^\infty(I_\eps)$, span a cyclic space. Let $\eta$ be in the orthogonal complement of the complex linear span of these vectors, then $f_1,f_2\in C_{c,\Rl}^\infty(I_\eps)$,
	\begin{align*}
		0
		&=
		\langle\eta,\psi_1\cdot(\widehat{f_1'}+i\widehat{f_2'})\rangle
		=
		\langle\overline{\psi_1}\cdot \eta,(\widehat{f_1'}+i\widehat{f_2'})\rangle
		\,,\qquad
		f_1,f_2\in C^\infty_{c,\Rl}(I_\eps)\,.
	\end{align*}
	By Proposition~\ref{Proposition:CharacterizationOfHI}, the vectors in the right entry of the scalar product span a dense subspace of $\Hil$, i.e. $\overline{\psi_1}\cdot \eta=0$. But $\psi_1$ is entire and non-zero, so it has only isolated zeros on $\Rl$, which implies $\eta=0$.{\hfill $\square$ \\[2mm] \indent}

In the remainder of this subsection, we address the problem of determining the localization radius $r_\varphi$ from $\varphi$. For this, it will be essential to exploit the analytic structure of the elements of $H(I)$, as presented in Proposition~\ref{Proposition:CharacterizationOfHI}, and the analytic structure of $\varphi$. Recall that any symmetric inner function is of the form
\begin{align}\label{eq:PhiFactorization}
	\varphi(p)
	=
	\pm e^{ipx}\,B(p)\,S(p)\,,\qquad S(p):=\exp\left(-i\int_\Rl \frac{1+p\,t}{p-t}\,d\mu(t)\right)
	\,,
\end{align}
where $x\geq0$, $B$ is a symmetric Blaschke product for the upper half plane, and $\mu$ a Lebesgue singular finite symmetric measure on $\Rl$. In more detail, this means that $B$ is of the form
\begin{align}\label{eq:BlaschkeProduct}
	B(p)
	=
	\prod_n\frac{p-p_n}{p-\overline{p_n}}
	\,,
\end{align}
where the $p_n$ are the zeros of $p_n$. Since they occur in pairs $(p_n,-\overline{p_n})$ due to the symmetry $\overline{\varphi(-\overline{p})}=\varphi(p)$, we have $\prod_n'\frac{|1+p_n^2|}{1+p_n^2}=1$ (where the dash indicates a product over all zeros different from $i$), and thus these convergence factors in the Blaschke product are not needed, which explains the simplification of \eqref{eq:BlaschkeProduct} in comparison to the usual formula for Blaschke products for the upper half plane \cite{Garnett:2007}. We also mention that the potentially infinite product in \eqref{eq:BlaschkeProduct} converges uniformly on compact subsets of $\Cl_+$ thanks to the zeros of $\varphi$ satisfying the Blaschke condition $\sum_n\frac{{\rm Im}\,p_n}{1+|p_n|^2}<\infty$ \cite{Garnett:2007}.

In view of the canonical factorization of $\varphi$, it might seem worthwhile to consider the properties of the localization radius $r$ as a map between the semigroup of symmetric inner functions and the semigroup $[0,\infty]$. In fact, with the help of Lemma~\ref{Lemma:PhiAsRatio-1d} and the smoothing used in the proof of Theorem~\ref{Theorem:CyclicAboveR}~$i)$ above, one can show sub-additivity of the localization radius\footnote{Possibly even additivity $r_{\varphi_1\cdot\varphi_2}=r_{\varphi_1}+r_{\varphi_2}$ holds, i.e. $r$ might be a semigroup homomorphism. We have no definite result on this, but all examples we know comply with the homomorphism property.},
\begin{align}
	\max\{r_{\varphi_1}, r_{\varphi_2}\} \leq r_{\varphi_1\cdot\varphi_2}\leq r_{\varphi_1}+r_{\varphi_2},
\end{align}
for arbitrary symmetric inner functions $\varphi_1,\varphi_2$. (The first inequality in this line follows directly from the definition of the interval subspaces.) Even though $r_\varphi$ can be computed for the elementary factors of the canonical factorization \eqref{eq:PhiFactorization}, i.e. for $\varphi$ a plane wave, a single Blaschke factor, or a singular inner function, this point of view is not efficient for determining $r_\varphi$ in general because $\varphi\mapsto r_\varphi$ is discontinuous. In fact, we will see below convergent infinite products $\varphi=\prod_k \varphi_k$ of symmetric inner functions such that $r_{\varphi_k}=0$ for all $k$, but $r_\varphi$ can take any value in $[0,\infty]$. 

We will therefore work more directly with the canonical factorization of $\varphi$, and recall some more points in this respect. The functions $\varphi$ and $B$ \eqref{eq:PhiFactorization} have precisely the same zeros $p_n$, and we denote their {\em convergence exponent} as
\begin{align}\label{eq:ConvergenceExponent}
	\rho
	:=
	\inf\{\alpha\geq0\,:\,\sum_n |p_n|^{-\alpha}<\infty\}
	\,.
\end{align}
For any $\rho\in[0,\infty]$, there exist Blaschke products with convergence exponent $\rho$, where $\rho=\infty$ means that the series in \eqref{eq:ConvergenceExponent} diverges for all $\alpha>0$. This is in particular the case if the zeros $\{p_n\}$ have a finite limit point.

The factor $S$ is singular at the boundary: If $\varphi$ has an analytic continuation from $\Cl_+$ across some interval $K\subset\Rl$, then $\mu(K)=0$ \cite{RosenblumRovnyak:1994}. Furthermore, any point $t\in\Rl$ with $\mu(\{t\})\neq0$ gives rise to an essential singularity of $S$ at $t$.

The data $x, \rho,\mu$ are uniquely determined by $\varphi$, and we will therefore also denote them by $x_\varphi$, $\rho_\varphi$, $\mu_\varphi$. The sign appearing in \eqref{eq:PhiFactorization} is of no importance for the analysis of the spaces $H_\varphi(I_r)$, and without loss of generality, we can put it to $+1$ in the following.

Our first proposition shows that in case $\varphi$ has a singular part, or the density of its zeros is too high, all interval subspaces $H_\varphi(I)$ are trivial.

\begin{proposition}\label{Proposition:NecessaryConditions}
	Let $\varphi$ be a symmetric inner function.
	\begin{enumerate}
		\item If $\rho_\varphi>1$ or $\mu_\varphi\neq0$, then $r_\varphi=\infty$.
		\item $r_\varphi\geq\frac{1}{2}\,x_\varphi$.
	\end{enumerate}
\end{proposition}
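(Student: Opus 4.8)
The plan is to reduce both statements to properties of the canonical factors of $\varphi$ via the ratio criterion of Lemma~\ref{Lemma:PhiAsRatio-1d} together with the analytic description of $H(I_r)$ in Proposition~\ref{Proposition:CharacterizationOfHI}. I would argue contrapositively: if $r_\varphi<\infty$, then $H_\varphi(I_r)\neq\{0\}$ for some $r>0$, so by Lemma~\ref{Lemma:PhiAsRatio-1d} there are non-zero $\psi_1,\psi_2\in H(I_r)$ with $\psi_1=\varphi\,\psi_2$, and by Proposition~\ref{Proposition:CharacterizationOfHI} both $\psi_k$ extend to entire functions of exponential type at most $r$ satisfying $|\psi_k(p)|\le C_k|p|\,e^{r|\im p|}$ for all $p\in\Cl$. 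All the information I need will be read off this single identity on $\Cl_+$, where $\varphi=e^{ipx_\varphi}B\,S$ (with the sign normalised to $+1$) is genuinely analytic and $B,S$ satisfy $|B|,|S|\le1$.

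For the zero-density part of (i), the identity $\psi_1=\varphi\,\psi_2$ on $\Cl_+$ forces $\mathrm{ord}_{p_n}(\psi_1)\ge \mathrm{ord}_{p_n}(\varphi)$ at every zero $p_n$ of $\varphi$, so the zeros $\{p_n\}$ of $\varphi$ are, with multiplicity, a subset of the zeros of $\psi_1$. Since $\psi_1$ is entire of exponential type it has order at most $1$, hence its zeros have convergence exponent at most $1$; a fortiori $\sum_n|p_n|^{-\alpha}<\infty$ for every $\alpha>1$, which means $\rho_\varphi\le 1$ by \eqref{eq:ConvergenceExponent}. This contradicts $\rho_\varphi>1$ and yields $r_\varphi=\infty$.

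For the singular part of (i) I would use that $\varphi=\psi_1/\psi_2$ now extends to a meromorphic function on all of $\Cl$, whose only possible non-analyticity points on $\Rl$ form the discrete set $Z$ of real zeros of $\psi_2$. Across any interval $K\subset\Rl\setminus Z$ the function $\varphi$ continues analytically, so $\mu_\varphi(K)=0$ by \cite{RosenblumRovnyak:1994}; covering $\Rl\setminus Z$ by countably many such intervals shows $\mu_\varphi$ is concentrated on the countable set $Z$, hence purely atomic. I expect this to be the main obstacle, because meromorphy of $\varphi$ by itself does not prevent $B$ and $S$ from each carrying an essential singularity at a real point that could conceivably cancel in the product. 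The step that makes it work is to pass to moduli near a putative atom $t_0\in Z$, approaching $t_0$ vertically from $\Cl_+$: the atom contributes a factor $\exp(-\tfrac{(1+t_0^2)\mu_\varphi(\{t_0\})}{b})$ to $|S(t_0+ib)|$, while $|e^{i(t_0+ib)x_\varphi}|\le1$ and $|B(t_0+ib)|\le1$ since $B$ is inner, so that $|\varphi(t_0+ib)|\le \exp(-c/b)\to0$ faster than any power of $b$. Meromorphy with $\varphi\not\equiv0$ instead dictates $|\varphi(t_0+ib)|\sim |h(t_0)|\,b^{k}$ for some $k\in\Zl$, a contradiction; hence no atom occurs and $\mu_\varphi=0$.

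For (ii) I would compare indicators in the vertical directions. On $\Cl_+$ one has $|\varphi(p)|=e^{-x_\varphi\im p}|B(p)||S(p)|\le e^{-x_\varphi\im p}$, so $|\psi_1(i\rho)|\le e^{-x_\varphi\rho}|\psi_2(i\rho)|\le C_2\,\rho\,e^{(r-x_\varphi)\rho}$, giving $h_{\psi_1}(\tfrac\pi2)\le r-x_\varphi$ for the indicator of $\psi_1$. As $\psi_1$ is of exponential type at most $r$ one also has $h_{\psi_1}(-\tfrac\pi2)\le r$, and the non-negativity of the width of the (convex) conjugate indicator diagram, $h_{\psi_1}(\tfrac\pi2)+h_{\psi_1}(-\tfrac\pi2)\ge0$, then forces $x_\varphi\le 2r$. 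Taking the infimum over all admissible $r$ gives $r_\varphi\ge\tfrac12\,x_\varphi$; the estimates here are routine once the ratio criterion and the growth bound of Proposition~\ref{Proposition:CharacterizationOfHI}~$iii)$ are in place.
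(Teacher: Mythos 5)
Your proposal is correct, and it splits naturally into a part that mirrors the paper and a part that is genuinely different. For part (i) your skeleton is the paper's: the ratio criterion of Lemma~\ref{Lemma:PhiAsRatio-1d}, the order-$\le 1$ bound on the zeros of $\psi_1$ giving $\rho_\varphi\le 1$, and the meromorphic continuation of $\varphi=\psi_1/\psi_2$ to $\Cl$. Where you diverge is the singular measure: the paper argues qualitatively, noting that $\rho_\varphi\le1$ makes $B$ meromorphic across $\Rl$ (which already dissolves your cancellation worry) and then citing that $S$ extends meromorphically if and only if $\mu_\varphi=0$, via the quoted facts that continuation across an interval $K$ forces $\mu_\varphi(K)=0$ and that atoms are essential singularities of $S$. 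Your quantitative replacement is sound and self-contained: since $\im\frac{1+pt}{p-t}=-\frac{(1+t^2)\im p}{|p-t|^2}\le 0$ for $p\in\Cl_+$, an atom at $t_0$ indeed gives $|\varphi(t_0+ib)|\le \exp\bigl(-(1+t_0^2)\mu_\varphi(\{t_0\})/b\bigr)$, incompatible with the local behaviour $|\varphi(t_0+ib)|\sim |h(t_0)|\,b^{k}$ of a meromorphic nonzero function; combined with your covering argument showing $\mu_\varphi$ is concentrated on the discrete real zero set of $\psi_2$, this kills $\mu_\varphi$ without appealing to the essential-singularity fact. For part (ii) the paper takes a completely different, non-analytic route: it factors $\varphi(P)=T(x_\varphi)\tilde\varphi(P)$, shows $H_\varphi(I_r)\subset\tilde\varphi(P)H(I_{-r+x_\varphi,r})$ using $H'\subset\tilde\varphi(P)H'$, and for $x_\varphi\ge 2r$ the region is empty, so after a translation the space lands in $H\cap H'=\{0\}$ by the factoriality statement of Proposition~\ref{Proposition:CyclicityForHNet}; this is a pure subspace/locality argument that would survive in settings without the entire-function machinery. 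Your indicator argument is equally valid: $h_{\psi_1}(\tfrac{\pi}{2})\le r-x_\varphi$ from $|\varphi(i\rho)|\le e^{-x_\varphi\rho}$ together with Proposition~\ref{Proposition:CharacterizationOfHI}~$iii)$, $h_{\psi_1}(-\tfrac{\pi}{2})\le r$, and the standard inequality $h(\theta)+h(\theta+\pi)\ge 0$ for functions of exponential type not identically zero (trigonometric convexity of the indicator), applicable because $\psi_1\neq 0$; it keeps the whole proof inside the complex-analytic framework you already set up for part (i). One shared tacit step, in both your write-up and the paper's, is that the a.e.\ identity $\psi_1=\varphi\,\psi_2$ on $\Rl_+$ propagates to $\Cl_+$ (functions of bounded type agreeing on a boundary set of positive measure); since this is buried in Lemma~\ref{Lemma:PhiAsRatio-1d}, it is not a gap relative to the paper.
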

\begin{proof}
	{\em i)} We give a proof by contradiction. Assume $r_\varphi<\infty$, i.e. there exists $0<r<\infty$ such that $H_\varphi(I_r)\neq\{0\}$, which by Lemma~\ref{Lemma:PhiAsRatio-1d} is equivalent to $\varphi=\psi_1/\psi_2$ with non-zero $\psi_1,\psi_2\in H(I_r)$. This has two implications on the structure of $\varphi$: First, all zeros $\{p_n\}$ of $\varphi$ are also zeros of $\psi_1$. Let us denote the potential additional zeros of $\psi_1$ by $\{q_n\}$. As $\psi_1$ is of order~$1$, the convergence exponent of all its zeros $\{p_n\}\cup\{q_n\}$ is at most 1 \cite[Thm.~2.5.18]{Boas:1954}, i.e. we have
	\begin{align}
		\sum_n|p_n|^{-1-\eps}+\sum_n|q_n|^{-1-\eps}<\infty
	\end{align}
	for each $\eps>0$. This implies that $\sum_n|p_n|^{-1-\eps}<\infty$, i.e. $\rho_\varphi\leq1$. In particular, the $p_n$ can have no finite limit point and as a consequence, the Blaschke product $B$ in the factorization of $\varphi$ has a meromorphic extension to all of $\Cl$.
	
	Second, the equation $\varphi=\psi_1/\psi_2$ implies that $\varphi$ must have a meromorphic continuation to the full complex plane. By the previous remark on $B$, and the analyticity of the exponential factor $p\mapsto e^{ipx_\varphi}$, it is clear that this is the case if and only if also $S$ \eqref{eq:PhiFactorization} extends meromorphically to $\Cl$. But according to our earlier remarks on $S$, this is the case if and only if $\mu_\varphi=0$. 
	
	{\em ii)} Let us decompose $\varphi(p)=e^{ipx_\varphi}\cdot \tilde{\varphi}(p)$ into the exponential part and a remainder $\tilde{\varphi}$, i.e. $\varphi(P)=T(x_\varphi)\tilde{\varphi}(P)$. Then
	\begin{align*}
		H_\varphi(I_r)
		&=
		T(-r)T(x_\varphi)\tilde{\varphi}(P)H\cap T(r)H'
		\\
		&\subset
		T(-r+x_\varphi)\tilde{\varphi}(P)H\cap T(r)\tilde{\varphi}(P)H'
		\\
		&=
		\tilde{\varphi}(P)H(I_{-r+x_\varphi,r})
		\,,
	\end{align*}
	because $H'\subset \tilde{\varphi}(P)H'$ and $\tilde{\varphi}(P)$ commutes with translations. If $x_\varphi\geq 2r$, then $I_{-r+x_\varphi,r}$ is empty and $T(a)H(I_{-r+x_\varphi,r})\subset H\cap H'$ for suitable $a$. But $H\cap H'=\{0\}$ by Proposition~\ref{Proposition:CyclicityForHNet}~$ii)$, and thus also $H_\varphi(I_r)=\{0\}$. Hence $r_\varphi\geq\frac{1}{2}\,x_\varphi$.	
\end{proof}

Whereas the second part of this proposition estimates the localization radius from below by showing absence of localized vectors in intervals shorter than the trivial threshold $x_\varphi$ arising from the translation part of $\varphi(P)$, the next result establishes existence of localized vectors in intervals longer than $x_\varphi$, provided $\varphi$ is sufficiently regular.

\begin{proposition}\label{Proposition:SufficientConditions}
	Let $\varphi$ be a symmetric inner function with no singular part, i.e. $\mu_\varphi=0$, and with convergence exponent $\rho_\varphi<1$. Then $r_\varphi=\frac{1}{2}\,x_\varphi$.
\end{proposition}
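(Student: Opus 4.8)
The plan is to prove the nontrivial inequality $r_\varphi \le \tfrac12 x_\varphi$, since the reverse bound $r_\varphi \ge \tfrac12 x_\varphi$ is exactly Proposition~\ref{Proposition:NecessaryConditions}~$ii)$. Fix any $r > \tfrac12 x_\varphi$. By Lemma~\ref{Lemma:PhiAsRatio-1d} it suffices to produce nonzero $\psi_1,\psi_2 \in H(I_r)$ with $\varphi = \psi_1/\psi_2$, and membership in $H(I_r)$ will be verified through the entire-function criterion Proposition~\ref{Proposition:CharacterizationOfHI}~$iv)$ (entire, reality condition $\psi(-\overline p) = \overline{\psi(p)}$, exponential type at most $r$, and $\psi \in L^2(\Rl_+,dp/p)$).

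First I would rewrite $\varphi$ via canonical products. Since $\mu_\varphi = 0$, the singular factor $S$ in \eqref{eq:PhiFactorization} is trivial, so $\varphi(p) = \pm e^{ipx_\varphi} B(p)$. Because $\rho_\varphi < 1$ the zeros satisfy $\sum_n |p_n|^{-1} < \infty$, so the genus-zero product $P(p) := \prod_n (1 - p/p_n)$ converges and is entire of order $\rho_\varphi < 1$, hence of exponential type $0$. The invariance of the zero set under $p \mapsto -\overline p$ yields the reality relation $\overline{P(-\overline p)} = P(p)$, and a short computation turns the Blaschke product into $B(p) = \pm\,P(p)/\overline{P(\overline p)}$, the phase prefactor collapsing to a sign through the pairing $p_n \leftrightarrow -\overline{p_n}$. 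Thus $\varphi(p) = \pm\, e^{ipx_\varphi}\,P(p)/\overline{P(\overline p)}$.

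The key device is a symmetric splitting of the exponential factor: I set $\psi_1(p) := e^{ipx_\varphi/2}\,P(p)\,h(p)$ and $\psi_2(p) := e^{-ipx_\varphi/2}\,\overline{P(\overline p)}\,h(p)$, so that $\psi_1/\psi_2 = \varphi$ and the poles of $\varphi$ cancel, making both $\psi_k$ entire. Each factor $e^{\pm ipx_\varphi/2}$ has exponential type $\tfrac12 x_\varphi$ and satisfies the reality condition; since $P$ and $\overline{P(\overline{\cdot})}$ have type $0$, the type of $\psi_k$ is $\tfrac12 x_\varphi$ plus that of the auxiliary factor $h$. Using $\overline{P(-\overline p)} = P(p)$, both reality conditions on the $\psi_k$ reduce to the single requirement $h(-\overline p) = \overline{h(p)}$. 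On $\Rl$ one has $|\psi_1(p)| = |\psi_2(p)| = |P(p)|\,|h(p)|$, where $P$ may grow like $e^{|p|^{\rho'}}$ for any $\rho' \in (\rho_\varphi,1)$; hence $h$ must decay faster than this. For $h$ I would take $h = \widehat{f'}$ with $f$ real-valued and compactly supported in $I_\eps$, $\eps < r - \tfrac12 x_\varphi$, chosen from a non-quasianalytic Gevrey class $G^s$ with $1 < s < 1/\rho'$ (Denjoy--Carleman). Then $h$ is entire of exponential type $\eps$, satisfies $h(-\overline p) = \overline{h(p)}$ and $h(0) = 0$ as in Proposition~\ref{Proposition:CharacterizationOfHI}, and its compactly supported Gevrey origin forces $|h(p)| \le C|p|\,e^{-c|p|^{1/s}}$ on $\Rl$ with $1/s > \rho'$.

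With this $h$ the verification is routine: $\psi_k$ is entire, satisfies the reality condition, has exponential type $\le \tfrac12 x_\varphi + \eps \le r$, is nonzero (as $P \not\equiv 0$ and $h \not\equiv 0$), and lies in $L^2(\Rl_+,dp/p)$ because the decay $e^{-c|p|^{1/s}}$ dominates the order-$\rho'$ growth of $P$ at infinity while the zero of $h$ at the origin controls $dp/p$ there. Proposition~\ref{Proposition:CharacterizationOfHI}~$iv)$ then gives $\psi_1,\psi_2 \in H(I_r)$, and Lemma~\ref{Lemma:PhiAsRatio-1d} yields $H_\varphi(I_r) \neq \{0\}$; letting $r \downarrow \tfrac12 x_\varphi$ gives $r_\varphi \le \tfrac12 x_\varphi$. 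The main obstacle is the construction of $h$: one needs an entire function of arbitrarily small exponential type whose real-axis decay is of order $e^{-c|p|^\beta}$ with $\beta < 1$, fast enough to beat the order-$\rho_\varphi$ growth of the canonical product $P$, yet still compatible with the reality normalization and the vanishing at the origin required by $H(I_r)$. This is precisely where the hypothesis $\rho_\varphi < 1$ enters, opening the window $\rho_\varphi < \rho' < 1/s < 1$, and its feasibility rests on the existence of non-quasianalytic compactly supported Gevrey functions; everything else is bookkeeping with exponential type and the criterion of Proposition~\ref{Proposition:CharacterizationOfHI}.
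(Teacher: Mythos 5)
Your proposal is correct and follows essentially the same route as the paper's proof: factor out the exponential part, replace the Blaschke product by the ratio of genus-zero canonical products $Q_\pm$ (possible exactly because $\rho_\varphi<1$), split $e^{ipx_\varphi}$ symmetrically between numerator and denominator, and tame the sub-exponential growth of the canonical products on $\Rl$ by a compactly supported smoothing factor with Fourier decay $e^{-c|p|^{\delta}}$, $\delta<1$, before invoking Proposition~\ref{Proposition:CharacterizationOfHI}~$iv)$ and Lemma~\ref{Lemma:PhiAsRatio-1d}. The only deviation is cosmetic: where the paper uses Ingham's explicit construction \cite{Ingham:1934} for the decay factor (with the $ip$ inserted by hand), you use a compactly supported Gevrey-class function and take $h=\widehat{f'}$, which is the same device under a different name.
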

\begin{proof}
	We denote the zeros of $\varphi(p)=e^{ipx_\varphi}B(p)$ by $\{p_n\}$. As $\rho:=\rho_\varphi<1$, the canonical product of genus 0,
	\begin{align}
		Q_+(p)
		:=
		\prod_n \left(1-\frac{p}{p_n}\right)
		\,,
	\end{align}
	running over all (finitely or infinitely many) zeros of $\varphi$, converges uniformly on compact subsets to an entire function of order $\rho$ and type $0$ \cite[Thm.~2.6.5, Lemma~2.10.13]{Boas:1954}, i.e.  for any $\eps>0$ we find $C_\eps>0$ such that 
	\begin{align}\label{eq:QBound}
		|Q_+(p)|
		\leq
		C_\eps\,e^{\eps |p|^{\rho+\eps}}
		\,,\qquad p\in\Cl
		\,.
	\end{align}
	The same is true for $Q_-(p):=\overline{Q_+(\overline{p})}$, and as the zeros $\{p_n\}$ occur in symmetric pairs $(p_n,-\overline{p_n})$, we have the symmetry $\overline{Q_\pm(-\overline{p})}=Q_\pm(p)$.

	Because of our summability assumption on the $p_n$, we have the following equalities of converging products
	\begin{align*}
		B(p)
		=
		{\prod_n}
		\frac{p-p_n}{p-\overline{p_n}}
		=
		{\prod_n}
		\frac{p_n}{\overline{p_n}}\cdot\frac{1-p/p_n}{1-p/\overline{p_n}}
		=
		\pm\,
		\frac{Q_+(p)}{Q_-(p)}
		\,,
	\end{align*}
	where we have used that $p_n/\overline{p_n}=-1$ for the purely imaginary zeros, and the product $\prod'_n p_n/\overline{p_n}$ taking over the zeros with non-zero real part equals $1$ because they occur in pairs $(p_n,-\overline{p_n})$. Again, $r_\varphi$ is independent of the sign coming from the imaginary zeros, and we may put it to $+1$ without loss of generality.
	
	The growth of the canonical product $Q_\pm$ at infinity is compatible with the growth of elements of $H(I_r)$ off the real axis, but in general, $Q_\pm$ will grow too fast on $\Rl$ to restrict to an element of $L^2(\Rl_+,dp/p)$. To fix this, we introduce another function: Given any $a>0$ and $0<\delta<1$, there exist non-zero, real, continuous functions $\tilde{M}$ with support in $(-a,a)$ such that their Fourier transforms satisfy the bound
	\begin{align}\label{eq:MBound}
		|M(p)|
		\leq
		c\,e^{-\tau|p|^\delta}
		\quad 
		\text{for all {\em real} } p
		\,,
	\end{align}
	with some $c,\tau>0$, see for example \cite{Ingham:1934} for an explicit construction. Because $\tilde{M}$ is real, continuous, and supported in $(-a,a)$, its Fourier transform is entire, and we also have $\overline{M(-\overline{p})}=M(p)$ and $|M(p)|\leq c'\,e^{a|p|}$ for all $p\in\Cl$, for some $c'>0$. 
	
	After these remarks, we set $\psi_\pm(p):=ip\,e^{\pm\frac{i}{2}\,px_\varphi}\,Q_\pm(p)M(p)$, so that 
	\begin{align*}
		\frac{\psi_+(p)}{\psi_-(p)}
		=
		\frac{ip\,e^{\frac{i}{2}\,px_\varphi}\,Q_+(p)M(p)}{ip\,e^{-\frac{i}{2}\,px_\varphi}\,Q_-(p)M(p)}
		=
		e^{ipx_\varphi}\,\frac{Q_+(p)}{Q_-(p)}
		=
		e^{ipx_\varphi}\,B(p)
		=
		\varphi(p)\,,
	\end{align*}
	as required in Lemma~\ref{Lemma:PhiAsRatio-1d}. We have to show that $\psi_\pm$ lies in $H(I_r)$ for suitable $r$. It is clear that $\psi_\pm$ is entire, vanishes at $p=0$, and satisfies the reality condition $\overline{\psi_\pm(-\overline{p})}=\psi_\pm(p)$ because each factor in its defining equation does. 
	
	To show that $\psi_\pm$ restricts to a function in $L^2(\Rl_+,dp/p)$, we choose $\delta$ in the definition of $M$ such that $1>\delta>\rho+\eps$, where $\eps$ is taken from the bound on $Q_+$, and estimate using \eqref{eq:MBound} for real $p>0$
	\begin{align*}
		|\psi_\pm(p)|
		=
		|e^{\pm\frac{i}{2}\,px_\varphi}pM(p)Q(p)|
		&\leq
		p\cdot\,c\,e^{-\tau\,p^\delta}\cdot C_\eps\,e^{\eps\,p^{\rho+\eps}}
		\,.
	\end{align*}
	Because of our choice of $\delta$, the right hand side converges rapidly to zero as $p\to\infty$, and in view of the explicit factor $p$, it follows that $\psi_\pm\in L^2(\Rl_+,dp/p)$.
	
	To estimate the exponential type of $\psi_\pm$, note that we have for all $p\in\Cl$
	\begin{align*}
		|\psi_\pm(p)|
		=
		|e^{\pm\frac{i}{2}\,px_\varphi}\,p\,M(p)Q(p)|
		\leq
		e^{\frac{x_\varphi}{2}|{\rm Im}\,p|}
		\cdot
		|p|\cdot
		c'\,e^{a|p|}\cdot C_\eps\, e^{\eps|p|^{\rho+\eps}}
		\leq
		c
		\,e^{(\frac{x_\varphi}{2}+a+\eps)|p|}
		\,,
	\end{align*}
	with a suitable constant $c$. As we are still free to choose $a>0$ and $\eps>0$ as small as we like, we see that $\psi_\pm\in H(I_{\frac{x_\varphi}{2}})$ by the characterization in Proposition~\ref{Proposition:CharacterizationOfHI}~$iv)$. According to Lemma~\ref{Lemma:PhiAsRatio-1d}, this implies that $H_\varphi(I_r)$ is cyclic for all $r>\frac{x_\varphi}{2}$, i.e. $r_\varphi\leq \frac{x_\varphi}{2}$. But $r_\varphi\geq\frac{x_\varphi}{2}$ in general by Proposition~\ref{Proposition:NecessaryConditions}~{\em ii)}, so we have $r_\varphi=\frac{x_\varphi}{2}$.
\end{proof}

Note that it follows from these two propositions and the structure of the symmetric inner functions \eqref{eq:PhiFactorization} that to each $r\in[0,\infty]$, there exists a symmetric inner function $\varphi$ such that $r_\varphi=r$. However, Propositions~\ref{Proposition:NecessaryConditions} and \ref{Proposition:SufficientConditions} do not cover the case of convergence exponent $\rho_\varphi=1$. In this case, $r_\varphi$ depends on more detailed properties of $\varphi$ than those described by $x_\varphi$, $\rho_\varphi$, $\mu_\varphi$, as we will now demonstrate with a family of examples, which realize any localization radius $r_\varphi\in(0,\infty)$, and still satisfy $\mu_\varphi=0$, $x_\varphi=0$.

\begin{example}
	Let $\nu,q>0$. The function $\varphi$ defined by
	\begin{align*}
		\varphi(p):=\frac{s_{\nu,q}(p)}{s_{\nu,-q}(p)}
		\,,\qquad
		s_{\nu,q}(p):=i\sin(\nu p-iq)
	\end{align*}
	is a symmetric inner function on the upper half plane with $x_\varphi=0$, $\mu_\varphi=0$, $\rho_\varphi=1$, independent of $\nu$ and $q$. Its minimal localization radius is $r_\varphi=\nu$.
\end{example}
\begin{proof}
	The zeros of $\varphi$ are $p_n=\nu^{-1}(iq+\pi n)$, $n\in\Zl$, which yields the convergence exponent $\rho_\varphi=1$. We also see that the $p_n$ satisfy the Blaschke summability condition $\sum_n\frac{{\rm Im}\,p_n}{1+|p_n|^2}<\infty$, so that the corresponding Blaschke product $B$ is convergent. We have
	\begin{align*}
		B(p)
		&=
		\prod_{n\in\Zl}\frac{p-\frac{iq+\pi n}{\nu}}{p-\frac{-iq+\pi n}{\nu}}
		=
		\frac{\nu p-iq}{\nu p+iq}\prod_{n=1}^\infty\frac{1-\frac{(\nu p-iq)^2}{\pi^2n^2}}{1-\frac{(\nu p+iq)^2}{\pi^2n^2}}\,,
	\end{align*}
	which is seen to coincide with $\varphi(p)$, $p\in\Cl_+$, by taking into account the product formula $\sin(z)=z\prod_{n=1}^\infty (1-\frac{z^2}{\pi^2n^2})$. Thus $\varphi=B$ is a Blaschke product for the upper half plane, i.e. an inner function with $x_\varphi=0$, $\mu_\varphi=0$. The symmetry property $\overline{\varphi(-\overline{p})}=\varphi(p)$ is easily verified.
	
	To determine the localization radius $r_\varphi$, let $f\in C_{c,\Rl}^\infty(I_r)$, so that $\widehat{f'}\in H(I_r)$. As the inverse Fourier transform of $s_{\nu,\pm q}$ is a real linear combination of delta distributions at $x=\pm\nu$, the inverse Fourier transform of $s_{\nu,\pm q}\cdot\widehat{f'}$ has support in $I_{r+\nu}$. Thus $s_{\nu,\pm q}\cdot\widehat{f'}\in H(I_{r+\nu})$ and $\varphi=(s_{\nu,q}\cdot\widehat{f'})/(s_{\nu,-q}\cdot\widehat{f'})$, which by Lemma~\ref{Lemma:PhiAsRatio-1d} implies $H_\varphi(I_{r+\nu})\neq\{0\}$. As we can choose $r>0$ as small as we like, we conclude $r_\varphi\leq\nu$.
	
	Conversely, assume that $r_\varphi<\nu$. Then there exists $r<\nu$ such that $\varphi=\psi_1/\psi_2$ with $\psi_1,\psi_2\in H(I_r)$ non-zero; these functions may be represented as $\psi_k(p)=-ip\widehat{g}_k(p)$ for some real $g_k\in L^2(I_r,dx)$, $k=1,2$. Then we have $s_{\nu,q}\,\widehat{g_2}=s_{\nu,-q}\,\widehat{g_1}$, which implies by inverse Fourier transformation
	\begin{align*}
		e^q\,g_2(x-\nu)-e^{-q}g_2(x+\nu)
		=
		e^{-q}\,g_1(x-\nu)-e^{q}g_1(x+\nu)
		\,,\qquad x\in\Rl\,.
	\end{align*}
	As $r<\nu$, the supports of the functions shifted by $\nu$ and $-\nu$ are disjoint, and comparison of the two parts gives $g_2=e^{-2q}g_1$ and $g_2=e^{2q}g_1$, which is impossible for $q\neq0$, $g_1,g_2\neq0$. Thus $r_\varphi=\nu$.
\end{proof}

We also point out that in case $\varphi$ is $\gamma$-invariant, the situation simplifies.

\begin{lemma}\label{Lemma:GammaSymmetricPhi-1d}
	Let $\varphi$ be a symmetric inner function with $\gamma(\varphi)=\varphi$.
	\begin{enumerate}
		\item Either $\rho_\varphi=0$ or $\rho_\varphi=\infty$, and thus $r_\varphi=\infty$ in the latter case.
		\item If $\rho_\varphi=0$, the following statements are equivalent:
		\begin{enumerate}
			\item[a)] $r_\varphi=0$.
			\item[b)] $\mu_\varphi=0$ and $x_\varphi=0$, i.e. $\varphi$ is a (finite) Blaschke product.
			\item[c)] $\varphi\circ\exp$ extends to a bounded analytic function on a strip of the form $-\eps<{\rm Im}\,\te<\pi+\eps$ for some $\eps>0$.
		\end{enumerate}
	\end{enumerate}
\end{lemma}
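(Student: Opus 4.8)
The plan is to handle the two parts separately, leaning on the canonical factorization \eqref{eq:PhiFactorization} and on Propositions~\ref{Proposition:NecessaryConditions} and \ref{Proposition:SufficientConditions}. For Part~i), I would first record how $\gamma$ acts on zeros: since $\gamma(\varphi)(p)=\overline{\varphi(1/\overline p)}$, the zero set of $\gamma(\varphi)$ is $\{1/\overline{p_n}\}$, so $\gamma(\varphi)=\varphi$ forces $\{p_n\}$ to be invariant under the involution $p\mapsto 1/\overline p$. This involution maps the open upper half-plane onto itself, inverts the modulus ($|1/\overline p|=1/|p|$), interchanges $\{|p_n|>1\}$ with $\{|p_n|<1\}$, and fixes $\{|p_n|=1\}$. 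Hence if there are infinitely many zeros, the involution forces infinitely many of them into the closed unit disk, so that $\sum_n|p_n|^{-\alpha}\geq\#\{n:|p_n|\leq 1\}=\infty$ for every $\alpha\geq 0$, i.e. $\rho_\varphi=\infty$; and if there are only finitely many zeros, then trivially $\rho_\varphi=0$. This proves the dichotomy $\rho_\varphi\in\{0,\infty\}$, and the case $\rho_\varphi=\infty$ gives $r_\varphi=\infty$ directly from Proposition~\ref{Proposition:NecessaryConditions}~i).

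For Part~ii), a)$\Leftrightarrow$b): under $\gamma$-invariance the hypothesis $\rho_\varphi=0$ already means, by Part~i), that $B$ is a \emph{finite} Blaschke product. For a)$\Rightarrow$b), if $r_\varphi=0$ then $r_\varphi<\infty$ forces $\mu_\varphi=0$ by the contrapositive of Proposition~\ref{Proposition:NecessaryConditions}~i), while $r_\varphi\geq\tfrac12 x_\varphi$ (Proposition~\ref{Proposition:NecessaryConditions}~ii)) forces $x_\varphi=0$; hence $\varphi=B$. Conversely, b)$\Rightarrow$a) is immediate from Proposition~\ref{Proposition:SufficientConditions}, since $\mu_\varphi=0$ and $\rho_\varphi=0<1$ yield $r_\varphi=\tfrac12 x_\varphi=0$.

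The substantive part is b)$\Leftrightarrow$c). The idea is to read off $x_\varphi$ and $\mu_\varphi$ from the boundary behaviour of $\varphi\circ\exp$ under $p=e^\te$, which sends the line $\im\te=0$ to $\Rl_+$, the line $\im\te=\pi$ to $\Rl_-$, and the boundary strips $\im\te\in(-\eps,0)$, $\im\te\in(\pi,\pi+\eps)$ into the lower half-plane near $\Rl_+$ and $\Rl_-$ respectively. For b)$\Rightarrow$c), $\varphi=B$ is a rational function whose only poles $\overline{p_n}$ lie in the open lower half-plane; choosing $\eps$ smaller than the (finite, positive) minimal angular distance of these poles from $\Rl_+$ and $\Rl_-$ keeps the image of the enlarged strip off the poles, so $B\circ\exp$ is analytic and bounded there. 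For c)$\Rightarrow$b), analytic continuation of $\varphi\circ\exp$ across $\im\te=0$ and $\im\te=\pi$ means $\varphi$ continues analytically across $\Rl\setminus\{0\}$, so $\mu_\varphi(\Rl\setminus\{0\})=0$ by the singularity property of $S$ recalled after \eqref{eq:BlaschkeProduct}, whence $\mu_\varphi=c\,\delta_0$. I then kill the two remaining defects by boundedness: an atom $c>0$ gives $S\circ\exp(\te)=\exp(-ic\,e^{-\te})$ with $|S\circ\exp|=\exp(-c\,e^{-u}\sin v)\to\infty$ as $\te=u+iv\to-\infty$ in $v\in(-\eps,0)$, while $x_\varphi>0$ gives $|e^{i e^\te x_\varphi}|=\exp(-x_\varphi e^{u}\sin v)\to\infty$ as $u\to+\infty$ in $v\in(\pi,\pi+\eps)$; in each case the remaining factors stay bounded and do not cancel the blow-up. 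Hence $c=0$ and $x_\varphi=0$, i.e. $\varphi=B$, which is b).

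The main obstacle is precisely the implication c)$\Rightarrow$b): one must notice that an atom of $\mu_\varphi$ at the origin is invisible to analytic continuation across $\Rl\setminus\{0\}$ (and likewise the exponential factor is entire), so that continuing $\varphi$ across the reals is not by itself enough — it is the \emph{boundedness} on the strictly larger strip that eliminates both the atom at $0$ and the translation part, and getting the signs and the $u\to\pm\infty$ limits in the two boundary regions right is the delicate computational point.
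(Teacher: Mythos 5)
Your proof is correct, and its overall skeleton coincides with the paper's: part i) via the modulus-inverting symmetry of the zero set together with Proposition~\ref{Proposition:NecessaryConditions}~i); the equivalence a)$\Leftrightarrow$b) exactly as in the paper, from Propositions~\ref{Proposition:NecessaryConditions} and \ref{Proposition:SufficientConditions} with $\rho_\varphi=0<1$; and b)$\Leftrightarrow$c) by passing to the sector $-\eps<\arg p<\pi+\eps$ under $p=e^\te$. Two local steps differ, both defensibly in your favour. For b)$\Rightarrow$c) the paper continues $\varphi\circ\exp$ across the boundary lines using the identities $\varphi(e^{-\te})=\varphi(e^\te)^{-1}$ and $\varphi(e^{\te+i\pi})=\overline{\varphi(e^\te)}$, whereas you simply observe that a finite Blaschke product is a rational function with poles $\overline{p_n}$ in the open lower half plane and choose $\eps$ below their angular distance to $\Rl_+$ and $\Rl_-$; this is more elementary and equally valid, boundedness following since $B$ extends continuously to the closed sector in the Riemann sphere with values $B(0)$ and $1$ at the two ends. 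More substantially, in c)$\Rightarrow$b) you correctly isolate the point the paper glosses with ``$\mu_\varphi=0$ as before'': continuation across $\Rl\setminus\{0\}$ only forces $\mu_\varphi=c\,\delta_0$, because the chart $p=e^\te$ never reaches $p=0$, so a singular atom at the origin (the factor $e^{-ic/p}$, which the paper itself exhibits in the two-dimensional section) is invisible to analytic continuation and must be excluded by boundedness. Your blow-up computations do this cleanly: $|e^{-ic e^{-\te}}|=\exp(-c\,e^{-u}\sin v)\to\infty$ as $u\to-\infty$ with $v\in(-\eps,0)$ fixed, and $|e^{ix_\varphi e^{\te}}|=\exp(-x_\varphi e^{u}\sin v)\to\infty$ as $u\to+\infty$ with $v\in(\pi,\pi+\eps)$, while in each regime the remaining factors tend to constants of modulus one ($B(0)$ is unimodular and non-zero since the finitely many $p_n$ lie in the open upper half plane, $B(\infty)=1$, $S\to1$ at infinity), so no cancellation can occur. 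In the printed proof the atom is implicitly covered by $\gamma$-invariance --- $\gamma$ exchanges $e^{-ic/p}$ with $e^{icp}$, so for $\gamma$-invariant $\varphi$ the atom at $0$ carries the same weight as $x_\varphi$, and the paper's unboundedness check on the translation factor suffices --- but your self-contained treatment does not need this and is the sharper version of the step.
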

\begin{proof}
	{\em i)} In view of the symmetry $\varphi(-p^{-1})=\varphi(p)$, the zeros $p_n$ of $\varphi$ occur in pairs of the form $(p_n,-p_n^{-1})$. Hence, if $|p_n|^{-\alpha}$ is summable over $n$ for some $\alpha>0$, also $|p_n|^\alpha$ is summable over~$n$. This is possible if and only if the number of zeros is finite, in which case we have $\rho_\varphi=0$. If the number of zeros is infinite, we necessarily have $\rho_\varphi=\infty$ and, by Proposition~\ref{Proposition:NecessaryConditions}~{\em i)}, $r_\varphi=\infty$. 
		
	{\em ii)} Since $\rho_\varphi=0<1$, we know from Proposition~\ref{Proposition:NecessaryConditions} and Proposition~\ref{Proposition:SufficientConditions} that $r_\varphi=0$ is equivalent to $\mu_\varphi=0$, $x_\varphi=0$, i.e. $a)\Leftrightarrow b)$. According to {\em i)}, the function $\varphi\circ\exp$ has only finitely many zeros in the strip $\Strip_\pi$. Assuming {\em b)}, we have no singularities on the boundary of $\Strip_\pi$ by $\mu_\varphi=0$, and by $\varphi(e^{-\te})=\varphi(e^\te)^{-1}$ and $\varphi(e^{\te+i\pi})=\overline{\varphi(e^\te)}$ and the finite number of zeros, we find an analytic continuation to the broader strip $-\eps<{\rm Im}\,\te<\pi+\eps$ if $0<\eps<\min\{{\rm Im}\,\log p_n\}$. As $\varphi$ is a finite Blaschke product, $\varphi(e^{\te+i\la})\to\pm1$ as $\te\to\pm\infty$, uniformly in $\la\in[0,\pi]$. Thus $\varphi\circ\exp$ is also bounded on the extended strip, i.e. we have shown $b)\Rightarrow c)$. For the reverse conclusion, we first note that the analytic continuation to a 
broader strip requires $\mu_\varphi=0$ as before, and $\varphi\circ\exp$ will be bounded on the broader strip if and only if the translation part $\te\mapsto e^{ix_\varphi e^\te}$ is. But one easily checks that this function is bounded on $-\eps<{\rm Im}\,\te<0$ only if $x_\varphi=0$. Thus also $c)\Rightarrow b)$ holds.
\end{proof}

The significance of the extended analyticity requirement $c)$ is discussed in our conclusions.

\subsection{Cyclicity of double cone subspaces}

In this section we investigate the massive irreducible situation in two dimensions, with representation $U_m$, $m>0$.

Again we first characterize the local subspaces of the net with $V=1$. To this end, we use the time reflection $\Gamma=JZ$, and the Hamiltonian $\om$, multiplying (in the momentum picture) with $\om_m(p_1)$. With these operators, we associate with any $\psi\in\Hil$ the two vectors
\begin{align}\label{eq:psi+-MT}
	\psi_+
	&:=
	\frac{1}{2}(1+\Gamma)\psi\,,\qquad
	\psi_-
	:=
	\frac{1}{2i\om}(1-\Gamma)\psi\,,\qquad
	\Rightarrow\quad
	\psi=\psi_++i\om\psi_-\,,
\end{align}
related to the Cauchy data formulation of the solutions of the Klein Gordon equation. The $\psi_\pm$ depend real linearly and continuously on $\psi$ (since $\|\om^{-1}\|\leq m^{-1}<\infty$) and are invariant under $\Gamma$ since $\Gamma$ and $\om^{-1}$ commute.

\begin{proposition}\label{Proposition:CharacterizationOfHO}
	Let $(H,T_m)$ be the two-dimensional irreducible standard pair with representation $U_m$, $m>0$, and $\psi\in\Hil = L^2(\Rl,\om_m(p_1)^{-1}dp_1)$. The following are equivalent:
	\begin{enumerate}
		\item $\psi\in H^m(\OO_r)$.
		\item The inverse Fourier transforms of $\psi_\pm$ \eqref{eq:psi+-MT} have support in $I_r$.
		\item The functions $\psi_\pm$ \eqref{eq:psi+-MT} extend to entire analytic functions such that $\psi_\pm(-\overline{p_1}) = \overline{\psi_\pm(p_1)}$ and $|\psi_\pm(p_1)| \leq C(1+|p_1|)^N  e^{r|{\rm Im}\,p_1|}$,  $p_1\in\mathbb C$.
		\item The functions $\psi_\pm$ \eqref{eq:psi+-MT} extend to entire analytic functions such that $\psi_\pm(-\overline{p_1}) = \overline{\psi_\pm(p_1)}$ and of exponential type at most $r$, namely for all $\eps>0$ there exists $C_\eps>0$ such that $|\psi_\pm(p_1)| \leq C_\varepsilon  e^{(r+\varepsilon)|p_1|}$,  $p_1\in\mathbb C$.
	\end{enumerate}
	The subspace
	\begin{align}\label{eq:HISmooth+-MT}
		\{\psi\,:\,\check{\psi}_\pm\in C_{c,\Rl}^\infty(I_r)\}
		\subset 
		H^m(\OO_r)
	\end{align}
	is cyclic.
\end{proposition}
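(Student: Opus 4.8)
The plan is to reduce the two-dimensional double cone problem to a pair of one-dimensional interval problems by means of the Cauchy data decomposition \eqref{eq:psi+-MT}, and then to invoke Paley--Wiener theory exactly as in Proposition~\ref{Proposition:CharacterizationOfHI}. The first step is to express the double cone space as an intersection of two oppositely translated wedge spaces. Writing $\OO_r=(W_2+(0,-r))\cap(-W_2+(0,r))$ and combining translation covariance (Proposition~\ref{Proposition:BasicNetProperties}~{\em iv)}) with the factorization \eqref{eq:TmAndT}, a direct computation using the substitution $p_1=m\sinh\te$, $p=e^\te$ shows that $T(-\tfrac{mr}{2})T'(\tfrac{mr}{2})$ and $T(\tfrac{mr}{2})T'(-\tfrac{mr}{2})$ act in the momentum representation as multiplication by $e^{-irp_1}$ and $e^{irp_1}$, i.e. as the spatial translations moving the edge of the wedge to $\mp r$. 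Hence
\begin{align*}
	H^m(\OO_r)=e^{-irp_1}H\cap e^{irp_1}H'\,,
\end{align*}
the familiar statement that the symmetric double cone of radius $r$ is the intersection of the right wedge with edge at $-r$ and its causal complement with edge at $+r$.

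The heart of the proof is the characterization of the wedge space $H$, and thereby $H'$, in terms of the Cauchy data $\psi_\pm$. Transporting the rapidity description $H=\{\psi\in{\mathbb H}^2(\Strip_\pi):\overline{\psi(\te+i\pi)}=\psi(\te)\}$ of Lemma~\ref{Lemma:CharacterizeH} through $p_1=m\sinh\te$, one notes that this map is two-to-one, the two sheets being exchanged by the involution $\te\mapsto i\pi-\te$, which fixes $p_1$ and reverses the sign of $\om=m\cosh\te$. The splitting $\psi=\psi_++i\om\psi_-$ is precisely adapted to this symmetry -- $\psi_+$ is even and $i\om\psi_-$ is odd under it -- so that $\psi_\pm$ descend to single valued functions of $p_1$. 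I would then show that $\psi\in H$ is equivalent to $\check\psi_+$ and $\check\psi_-$ being supported in the half-line $[0,\infty)$, equivalently to $\psi_\pm$ being boundary values of Hardy functions on the upper half $p_1$-plane, the $\Gamma$-invariance built into \eqref{eq:psi+-MT} producing the reality relation $\overline{\psi_\pm(-\overline{p_1})}=\psi_\pm(p_1)$; by $J$-symmetry $\psi\in H'$ then corresponds to the opposite support $(-\infty,0]$. Since multiplication by $e^{\mp irp_1}$ merely translates $\check\psi_\pm$, the conditions $e^{irp_1}\psi\in H$ and $e^{-irp_1}\psi\in H'$ read $\supp\check\psi_\pm\subset[-r,\infty)$ and $\supp\check\psi_\pm\subset(-\infty,r]$, whose intersection is $\overline{I_r}$. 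This establishes {\em i)}$\Leftrightarrow${\em ii)}.

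The remaining equivalences {\em ii)}$\Leftrightarrow${\em iii)}$\Leftrightarrow${\em iv)} and the cyclicity of \eqref{eq:HISmooth+-MT} then follow along the same lines as in the one-dimensional Proposition~\ref{Proposition:CharacterizationOfHI}, now applied to the two functions $\psi_\pm$. The Paley--Wiener--Schwartz theorem identifies the compactly supported real distributions $\check\psi_\pm\in\mathcal{E}'(I_r)$ with entire functions of exponential type at most $r$ that obey a polynomial bound on $\Rl$ together with the reality condition, and the passage from the bound in {\em iii)} to the pure type estimate in {\em iv)} is a routine Phragm\'en--Lindel\"of argument. For cyclicity, smooth real Cauchy data with $\check\psi_\pm\in C_{c,\Rl}^\infty(I_r)$ give vectors $\psi=\psi_++i\om\psi_-\in H^m(\OO_r)$ by {\em ii)}; since $\psi_+$ and $\psi_-$ may be prescribed independently and range over dense sets, the annihilator argument of the one-dimensional case forces any vector orthogonal to their complex span to vanish.

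The step I expect to be the main obstacle is the middle one. Because $\om(p_1)=(p_1^2+m^2)^{1/2}$ carries a branch point at $p_1=im$ inside the upper half plane, one must verify that $\psi_-=\frac{1}{2i\om}\big(\psi(p_1)-\overline{\psi(-\overline{p_1})}\big)$ is genuinely single valued and of the asserted analyticity there, the branch being cancelled by a matching zero of the numerator. It is exactly the even/odd splitting under $\te\mapsto i\pi-\te$ that enforces this cancellation, and correctly translating the two-edge boundary relation $\overline{\psi(\te+i\pi)}=\psi(\te)$ defining $H$ into single valuedness and the half-line support condition in the $p_1$-variable is the delicate technical point of the argument.
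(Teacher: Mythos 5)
Your architecture matches the paper's at the level of the skeleton: both proofs run through the pointwise wedge characterization ($\psi\in H$ if and only if $\supp\check{\psi}_\pm\subset\Rl_+$), translate it by the multipliers $e^{\mp irp_1}$ (your identification of $T_m(0,\mp r)$ with multiplication by $e^{\mp irp_1}$, and hence $H^m(\OO_r)=e^{-irp_1}H\cap e^{irp_1}H'$, is correct, as is the action of $J$ flipping the supports of $\check{\psi}_\pm$), and then obtain \emph{ii)}$\Leftrightarrow$\emph{iii)}$\Leftrightarrow$\emph{iv)} from the distributional Paley--Wiener theorem plus a Phragm\'en--Lindel\"of/Hardy step. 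Where you genuinely diverge is in how the wedge characterization itself is proved. You propose to transport Lemma~\ref{Lemma:CharacterizeH} through the two-to-one map $p_1=m\sinh\te$, using the even/odd decomposition under $\te\mapsto i\pi-\te$; your observation that the simple zero of $\om$ at the branch point $p_1=im$ cancels the zero of the odd part is correct, and is essentially the mechanism the paper exploits in the reverse direction (via Schwarz reflection) in Corollary~\ref{Corollary:CharacterizationOfHO-lightray}. The paper instead argues softly: it defines the support-based spaces $K(\OO_{a,b})$, verifies covariance of this net under $\Delta^{it}$ and $T_m(x)$ using the density of Klein--Gordon Cauchy data $L(\OO)$, invokes the uniqueness Proposition~\ref{unique} to conclude $K(W_2)=\pm H$, and settles the pointwise converse by the symplectic-form identity ${\rm Im}\langle\psi,\xi\rangle=\int dx_1\,(\check{\psi}_+\check{\xi}_--\check{\psi}_-\check{\xi}_+)$ tested against $L(-W_2)\subset H'$, concluding with $H''=H$. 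This soft route buys exactly what your route lacks.

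Two steps of your proposal have genuine gaps. First, the converse direction of your central equivalence ($\supp\check{\psi}_\pm\subset\Rl_+$ implies $\psi\in H$) is not delivered by the Hardy-transport mechanism you describe: membership in $H$ requires $\psi\in\dom\Delta^{1/2}$, i.e.\ uniform bounds $\sup_{0<\la<\pi}\int|\psi(\te+i\la)|^2d\te<\infty$, but the Hardy-class control you inherit on $\psi_\pm$ lives on horizontal lines of the upper half $p_1$-plane, whereas the lines ${\rm Im}\,\te=\la$ map to hyperbolas; the pointwise estimate $|f(p_1)|\lesssim({\rm Im}\,p_1)^{-1/2}\|f\|_{{\mathbb H}^2}$ is too lossy (the term $i\om\psi_-$ then fails to be square-integrable along the hyperbolas), and analyticity plus $L^2$ boundary values alone never yields half-line support (consider $e^{-iRp_1}(p_1+i)^{-1}$). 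So your mechanism proves the forward direction, while the asserted equivalence needs the paper's duality argument (or a substitute for it). Second, the cyclicity of \eqref{eq:HISmooth+-MT} does not follow from a bare annihilator computation: orthogonality of $\eta$ to the complex span of $\{\hat{f}_++i\om\hat{f}_-\,:\,f_\pm\in C_{c,\Rl}^\infty(I_r)\}$ only forces the inverse Fourier transform of $\overline{\eta}\,\om_m^{-1}$ to vanish on $I_r$, which does not imply $\eta=0$; note that the analogous argument in the proof of Theorem~\ref{Theorem:CyclicAboveR} works only because of the extra multiplier $\psi_1$, entire with isolated real zeros, which globalizes the orthogonality --- there is no such multiplier here. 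What closes the step, and what the paper uses, is the density of the smooth Cauchy-data set in $H^m(\OO_r)$ (solvability of the Klein--Gordon Cauchy problem) combined with the Reeh--Schlieder-type cyclicity of $H^m(\OO_r)$ itself from Proposition~\ref{Proposition:CyclicityForHNet}~\emph{i)}.
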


We also give a characterization in the lightray picture.
\begin{corollary}\label{Corollary:CharacterizationOfHO-lightray}
	In the lightray representation, let $\psi\in\Hil= L^2(\Rl_+,dp/p)$. The following are equivalent:
	\begin{enumerate}
		\item $\psi\in H^m(\OO_r)$
		\item There exist $\psi^\pm\in\Hil$, $\overline{\psi^\pm(\frac{1}{p})}=\overline{\psi_\pm(-p)}=\psi^\pm(p)$, $p>0$, which are restrictions to $\Rl_+$ of functions analytic on $\Cl\backslash\{0\}$ and satisfy for any $\eps>0$ the bound
		\begin{align}
			|\psi^\pm(p)|\leq C_{\eps,\pm}\,e^{(\frac{mr}{2}+\eps)|p-\frac{1}{p}|}
			\,,\qquad
			p\in\Cl\backslash\{0\}\,,
		\end{align}
		for some $C_{\eps,\pm}>0$, such that
		\begin{align}\label{eq:CauchyRepPsi-LightRay}
			\psi(p)
			=
			\psi^+(p)+i\,(p+\tfrac{1}{p})\cdot\psi^-(p)
			\,,\qquad
			p>0\,.
		\end{align}
	\end{enumerate}
\end{corollary}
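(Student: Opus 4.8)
The plan is to deduce the corollary directly from Proposition~\ref{Proposition:CharacterizationOfHO} by transporting it through the unitary intertwining the momentum and lightray representations, namely the change of variables $p_1=\frac{m}{2}(p-\frac1p)=m\sinh\te$, $p=e^\te$ recorded after \eqref{eq:MomentumRepresentation}. Since $\om_m^{-1}dp_1=d\te=dp/p$, this map merely relabels function values, so I may regard any Cauchy datum $\psi_\pm$ from Proposition~\ref{Proposition:CharacterizationOfHO} as a function of $p$ via $p_1=\frac m2(p-\frac1p)$. Under the same substitution $\om_m(p_1)=m\cosh\te=\frac m2(p+\frac1p)$, so setting $\psi^+:=\psi_+$ and $\psi^-:=\frac m2\,\psi_-$ turns the decomposition $\psi=\psi_++i\om_m\psi_-$ of \eqref{eq:psi+-MT} into exactly \eqref{eq:CauchyRepPsi-LightRay}. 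It then remains to match the analyticity, growth, and reality conditions of Proposition~\ref{Proposition:CharacterizationOfHO}~iv) with those of the corollary.

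Next I would analyze the map $j:p\mapsto\frac m2(p-\frac1p)$. It is a holomorphic branched double cover $\Cl\setminus\{0\}\to\Cl$ whose deck transformation is $p\mapsto-\frac1p$ (indeed $j(p)=j(q)$ iff $q=p$ or $pq=-1$), with ramification at the fixed points $p=\pm i$. Consequently, pulling an entire function $\psi_\pm$ of $p_1$ back through $j$ gives a function of $p$ analytic on $\Cl\setminus\{0\}$ and invariant under $p\mapsto-\frac1p$. Conversely, a function analytic on $\Cl\setminus\{0\}$ and invariant under the deck transformation descends to a single-valued holomorphic function of $p_1$ on $\Cl$, the ramification points $p=\pm i$ being removable by that invariance, so the descended function is entire. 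This settles the analyticity half of the equivalence.

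For the quantitative part I would use $|p_1|=\frac m2|p-\frac1p|$: the exponential-type bound $|\psi_\pm(p_1)|\le C_\eps e^{(r+\eps)|p_1|}$ of Proposition~\ref{Proposition:CharacterizationOfHO}~iv) becomes $|\psi^\pm(p)|\le C_{\eps,\pm}e^{(\frac{mr}{2}+\eps)|p-\frac1p|}$ after absorbing $\frac m2$ into $\eps$, matching the bound of the corollary; the positive real factor relating $\psi^-$ and $\psi_-$ affects only the constant. For the symmetry I would note $j(-\overline p)=-\overline{j(p)}$, so $p_1\mapsto-\overline{p_1}$ corresponds to $p\mapsto-\overline p$ and the reality condition $\psi_\pm(-\overline{p_1})=\overline{\psi_\pm(p_1)}$ turns into $\psi^\pm(-\overline p)=\overline{\psi^\pm(p)}$. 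Specializing to $p>0$ gives $\overline{\psi^\pm(-p)}=\psi^\pm(p)$, and combining this (applied at $1/p$) with the deck invariance $\psi^\pm(-\frac1p)=\psi^\pm(p)$ yields the remaining relation $\overline{\psi^\pm(\frac1p)}=\psi^\pm(p)$, exactly as stated.

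The step needing the most care is the passage between ``entire of finite exponential type in $p_1$'' and ``analytic on $\Cl\setminus\{0\}$ with the transformed bound in $p$''. One must verify both that deck invariance genuinely makes the branch points $p=\pm i$ removable, so that no spurious singularity is introduced on descent, and that the essential singularity of $\psi^\pm$ at the puncture $p=0$ is harmless. The latter is precisely governed by the growth bound, since $p\to0$ corresponds to $p_1\to\infty$, and the exponential-type estimate near $p=0$ is nothing but the exponential-type estimate of $\psi_\pm$ at infinity; this reconciles the apparent singularity at $0$ with entirety in $p_1$. Once these two points are dispatched, the equivalence of i) and ii) follows from Proposition~\ref{Proposition:CharacterizationOfHO}.
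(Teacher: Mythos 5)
Your proof is correct, and its skeleton is the paper's: both directions reduce to Proposition~\ref{Proposition:CharacterizationOfHO}~$iv)$ via the substitution $p_1=\frac{m}{2}(p-\frac{1}{p})$, matching the growth bounds through $|p-\frac{1}{p}|=\frac{2}{m}|p_1|$ and the reality conditions through $j(-\overline{p})=-\overline{j(p)}$ (you are in fact more careful than the paper in absorbing the factor $\frac{m}{2}$ into $\psi^-$, which the printed proof leaves implicit). Where you genuinely diverge is the converse: the paper fixes the branch $p(p_1)=\frac{1}{m}(\om_m(p_1)+p_1)$, so that $\psi_\pm$ is a priori analytic only on $\Cl\setminus\{\pm i[m,\infty)\}$ off the cuts of the square root in $\om_m$, then checks that the boundary values on the two sides of each cut are real and identical (using the symmetries) and invokes the Schwarz reflection principle to conclude entirety; you instead view $j$ as a branched double cover of $\Cl$ by $\Cl\setminus\{0\}$ with deck transformation $p\mapsto-\frac{1}{p}$, descend a deck-invariant function, and remove the ramification points $p=\pm i$ by Riemann's theorem. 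These are two packagings of the same mechanism --- the two sides of the paper's cut are exactly the two preimages $iy$ and $i/y$ exchanged by your deck map --- but your branch-free version is conceptually cleaner, at the modest price of having to justify the descent and removability explicitly. One step you should spell out in the converse: hypothesis~$ii)$ gives $\overline{\psi^\pm(\frac{1}{p})}=\overline{\psi^\pm(-p)}=\psi^\pm(p)$ only for real $p>0$, so the deck invariance $\psi^\pm(-\frac{1}{p})=\psi^\pm(p)$ on all of $\Cl\setminus\{0\}$ (and the global reality condition) must first be propagated from these half-line identities by the identity theorem on the connected domain $\Cl\setminus\{0\}$ --- a one-line argument, and the exact counterpart of the paper's tacit extension of the symmetries to complex arguments before reflecting; similarly, a sentence noting that the stated reality conditions say $\Gamma\psi^\pm=\psi^\pm$, so that uniqueness of the $\Gamma$-eigencomponent decomposition identifies your $\psi^\pm$ with the canonical Cauchy data \eqref{eq:psi+-MT} of $\psi$, would close the loop to Proposition~\ref{Proposition:CharacterizationOfHO} (a point the paper also glosses).
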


Both these results are proven in Appendix~\ref{Section:Appendix2d}.

We begin with the analogue of Lemma~\ref{Lemma:PhiAsRatio-1d}, and the proof of Theorem~\ref{Theorem:CyclicAboveR}~$ii)$.

\begin{lemma}\label{Lemma:PhiAsRatio-2d}
		Let $\varphi$ be a symmetric inner function, and $r>0$. Then $H^m_\varphi(\OO_r)\neq\{0\}$ if and only if there exist non-zero $\psi_1,\psi_2\in H^m(\OO_r)$ such that $\varphi(\frac{mp}{2})=\psi_1(p)/\psi_2(p)$, ${\rm Im}\,p>0$.
\end{lemma}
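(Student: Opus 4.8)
The plan is to mimic the proof of Lemma~\ref{Lemma:PhiAsRatio-1d}, the only genuinely new ingredient being the explicit form of the endomorphism defining $H^m_\varphi$ in the lightray picture. By construction the net $H^m_\varphi$ is the net $H_V$ associated with the endomorphism $V=\varphi(P_+)\in\E(H,T)$, so Lemma~\ref{Lemma:BasicNetPropertiesForHV}~$i)$ gives
\begin{align*}
	H^m_\varphi(\OO_r)=H^m(\OO_r)\cap VH^m(\OO_r)\,.
\end{align*}
Hence $H^m_\varphi(\OO_r)\neq\{0\}$ is equivalent to the existence of a non-zero vector lying simultaneously in $H^m(\OO_r)$ and in $VH^m(\OO_r)$, and the whole content of the lemma is to rephrase membership in $VH^m(\OO_r)$ as a multiplicative relation.

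First I would identify the action of $V$ in the lightray representation $\Hil=L^2(\Rl_+,dp/p)$. There the translation $T(a)$ acts by multiplication with $e^{iap}$, so $P$ is multiplication by $p$; since $P_+=\tfrac{m}{2}P$ in the massive representation $U_m$ (see \eqref{eq:TmAndT}), the endomorphism $V=\varphi(P_+)$ acts by multiplication with the unimodular function $\varphi(\tfrac{mp}{2})$, the modulus being $1$ for almost all $p>0$ because $\varphi$ is inner. In particular $V$ is unitary.

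With this the equivalence is immediate. A non-zero $\psi_1\in H^m(\OO_r)$ lies in $VH^m(\OO_r)$ if and only if $\psi_1=V\psi_2$ for some $\psi_2\in H^m(\OO_r)$, i.e. $\psi_1(p)=\varphi(\tfrac{mp}{2})\psi_2(p)$ for almost all $p>0$. Since $V$ is unitary, $\psi_2$ is non-zero precisely when $\psi_1$ is, and dividing gives $\varphi(\tfrac{mp}{2})=\psi_1(p)/\psi_2(p)$ as an identity of boundary values on $\Rl_+$. By Corollary~\ref{Corollary:CharacterizationOfHO-lightray} the elements of $H^m(\OO_r)$ continue analytically to $\Cl\setminus\{0\}$, while $\varphi(\tfrac{mp}{2})$ is analytic on the upper half plane (as $\tfrac{mp}{2}\in\Cl_+$ whenever ${\rm Im}\,p>0$), so this ratio identity propagates from $\Rl_+$ to all ${\rm Im}\,p>0$ by analytic continuation, yielding exactly the stated form. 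I expect essentially no obstacle beyond the bookkeeping of the relation $P_+=\tfrac{m}{2}P$; the genuine analytic difficulties are confined to the characterization of $H^m(\OO_r)$ in Proposition~\ref{Proposition:CharacterizationOfHO} and Corollary~\ref{Corollary:CharacterizationOfHO-lightray}, which are invoked here only to license the passage to the upper half plane and will carry the weight in the subsequent cyclicity arguments.
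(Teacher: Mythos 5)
Your proposal is correct and follows essentially the same route as the paper: the paper's proof likewise consists of applying Lemma~\ref{Lemma:BasicNetPropertiesForHV}~$i)$ to get $H^m_\varphi(\OO_r)=H^m(\OO_r)\cap\varphi(P_+)H^m(\OO_r)$ with $P_+=\frac{m}{2}P$, and then reading off the multiplicative relation $\varphi(\frac{mp}{2})=\psi_1(p)/\psi_2(p)$. Your extra remark on propagating the identity from $\Rl_+$ to ${\rm Im}\,p>0$ via the analyticity supplied by Corollary~\ref{Corollary:CharacterizationOfHO-lightray} only makes explicit a step the paper leaves implicit.
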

\begin{proof}
	By Lemma~\ref{Lemma:BasicNetPropertiesForHV}~$i)$, $H^m_\varphi(\OO_r)=H^m(\OO_r)\cap \varphi(P_+)H^m(\OO_r)$, with $P_+=\frac{m}{2}P$. Thus $H^m_\varphi(\OO_r)\neq\{0\}$ is equivalent to the condition that there exist non-zero $\psi_1,\psi_2\in H^m(\OO_r)$ such that $\varphi(\frac{mp}{2})=\psi_1(p)/\psi_2(p)$, ${\rm Im}\,p>0$.
\end{proof}

	\noindent{\em Proof of Theorem~\ref{Theorem:CyclicAboveR}~$ii)$}. Let $r>r_{m,\varphi}$, and pick $r'$ such that $r_{m,\varphi}<r'<r$. By definition of $r_{m,\varphi}$, we have $H^m_\varphi(\OO_{r'})\neq\{0\}$, and by the preceding Lemma, this implies $\varphi=\psi_1/\psi_2$ with certain non-zero $\psi_1,\psi_2\in H^m(\OO_{r'})$. Let $f_\pm\in C_{c,\Rl}^\infty(I_\eps)$, with $\eps:=r-r'>0$, so that $f:=\hat{f}_++i\om \hat{f}_-\in H^m(\OO_\eps)$ (Prop.~\ref{Proposition:CharacterizationOfHO}). A short calculation shows that the product function $\psi_k \cdot f$, $k=1,2$, decomposes in
	\begin{align*}
		(\psi_k f)_+
		&=
		\psi_{k,+}f_+-\om^2\psi_{k,-}f_-
		\,,\qquad
		(\psi_k f)_-
		=
		\psi_{k,-}f_++\psi_{k,+}f_-\,.
	\end{align*}
	Making use of Prop.~\ref{Proposition:CharacterizationOfHO}~$iv)$, we see that $\psi_k f\in H^m(\OO_{r})$. As in the proof of Theorem~\ref{Theorem:CyclicAboveR}~$i)$, this implies $\psi_1 f\in H^m_\varphi(\OO_{r})$. Let $\eta$ be in the orthogonal complement of the complex linear span of all vectors of the form $\psi_1 f$, then $\check{f}_{1,\pm},\check{f}_{2,\pm}\in C_{c,\Rl}^\infty(I_\eps)$,
	\begin{align*}
		0
		&=
		\langle\eta,\psi_1\cdot(f_1+if_2)\rangle
		=
		\langle\overline{\psi_1}\cdot \eta,(f_1+if_2)\rangle
		\,,
	\end{align*}
	which implies $\overline{\psi_1}\cdot\eta=0$ by the cyclicity \eqref{eq:HISmooth+-MT}. In view of the representation \eqref{eq:psi+-MT}, $\psi_1\neq0$ is analytic on $\Cl\backslash\{\pm i[m,\infty)\}$, and thus has only isolated zeros on $\Rl$. Hence $\eta=0$. {\hfill $\square$ \\[2mm] \indent}

The influence of the analytic properties of $\varphi$ on the two-dimensional net $H^m_\varphi$ is different from the one on the one-dimensional case $H_\varphi$. The singular functions $\varphi=S_{\mu}$ with measure $\mu=a\delta_0$ are $\varphi(p)=e^{-iap^{-1}}$, i.e. $\varphi(\frac{m}{2}P_+)=T'(-a)$, which do not lead to absence of localized vectors on all scales as in Proposition~\ref{Proposition:NecessaryConditions}. This translation parameter can be recovered from $\varphi$ as $a=x_{\gamma(\varphi)}$ and leads to the natural geometric restriction on $r_{m,\varphi}$.

\begin{proposition}
	Let $\varphi$ be a symmetric inner function for the upper half plane (in the lightray representation).
	\begin{enumerate}
		\item If $\mu_\varphi\neq a\cdot\delta_0$ for some $a\geq0$, then $r_{m,\varphi}=\infty$.
		\item $r_{m,\varphi}\geq\frac{1}{2}\max\{x_\varphi,x_{\gamma(\varphi)}\}$.
	\end{enumerate}
\end{proposition}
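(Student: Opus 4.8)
My plan is to mirror the one–dimensional Proposition~\ref{Proposition:NecessaryConditions}, replacing the entire–function description of interval subspaces by the characterization of the double–cone subspaces $H^m(\OO_r)$ in Proposition~\ref{Proposition:CharacterizationOfHO} and Corollary~\ref{Corollary:CharacterizationOfHO-lightray}, and by exploiting the time reflection of Proposition~\ref{Proposition:1dvs2dNets}. For part~$i)$ I argue by contradiction. If $r_{m,\varphi}<\infty$ then $H^m_\varphi(\OO_r)\neq\{0\}$ for some $r$, and Lemma~\ref{Lemma:PhiAsRatio-2d} furnishes non-zero $\psi_1,\psi_2\in H^m(\OO_r)$ with $\varphi(\tfrac{mp}{2})=\psi_1(p)/\psi_2(p)$. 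By Corollary~\ref{Corollary:CharacterizationOfHO-lightray} each $\psi_k$ extends analytically to $\Cl\setminus\{0\}$, so the quotient exhibits $\varphi$, and hence $B\cdot S=e^{-ipx_\varphi}\varphi$ from the factorization \eqref{eq:PhiFactorization}, as meromorphic on $\Cl\setminus\{0\}$. Since the Blaschke factor $B$ contributes only zeros and poles, the singular factor $S$ can have no essential singularity at any point of $\Rl\setminus\{0\}$; because an atom of $\mu_\varphi$ produces exactly such a singularity, and analytic continuation of $\varphi$ across an interval $K$ avoiding the isolated poles forces $\mu_\varphi(K)=0$, covering $\Rl\setminus\{0\}$ by such intervals yields $\mu_\varphi(\Rl\setminus\{0\})=0$, i.e. $\mu_\varphi=a\delta_0$ with $a=\mu_\varphi(\{0\})\geq0$. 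The contrapositive is the assertion.

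For part~$ii)$ it suffices to prove $r_{m,\varphi}\geq\tfrac12 x_\varphi$: applying this to $\gamma(\varphi)$ and using $\Gamma H^m_\varphi(\OO_r)=H^m_{\gamma(\varphi)}(\OO_r)$ from Proposition~\ref{Proposition:1dvs2dNets}~$ii)$ — whence $r_{m,\varphi}=r_{m,\gamma(\varphi)}$ since $\Gamma$ is an antiunitary bijection — gives $r_{m,\varphi}\geq\tfrac12 x_{\gamma(\varphi)}$, and the maximum follows. To bound $r_{m,\varphi}$ from below, suppose $H^m_\varphi(\OO_r)\neq\{0\}$ and use Lemma~\ref{Lemma:PhiAsRatio-2d} to write $\psi_1=\varphi(\tfrac m2\,\cdot)\,\psi_2$ with non-zero $\psi_1,\psi_2\in H^m(\OO_r)$. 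Splitting off the plane wave as $\varphi=e^{ipx_\varphi}\tilde\varphi$ with $\tilde\varphi$ inner (so $|\tilde\varphi|\geq1$ on the lower half plane), and recalling that in the momentum picture $\varphi(\tfrac m2\,\cdot)$ acts by multiplication with $\varphi\bigl(\tfrac12(\om_m(p_1)+p_1)\bigr)$, I evaluate this relation along rays $p_1=Re^{i\alpha}$ with $\alpha\in(-\tfrac\pi2,0)$. There $\tfrac12(\om_m(p_1)+p_1)\sim p_1$ lies in the lower half plane, so the modulus of the multiplier grows at least like $e^{x_\varphi|\im p_1|}$, while $\psi_1$ obeys the exponential-type bound of $H^m(\OO_r)$; hence $\psi_2$ has exponential type at most $r-x_\varphi$ in the lower-right sector. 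By the reflection symmetry $\psi_{2,\pm}(-\overline{p_1})=\overline{\psi_{2,\pm}(p_1)}$ the same bound holds in the lower-left sector, so the entire Cauchy data $\psi_{2,\pm}$ of Proposition~\ref{Proposition:CharacterizationOfHO} have indicator $h(-\tfrac\pi2)\leq r-x_\varphi$ while $h(\tfrac\pi2)\leq r$. The width inequality $h(\tfrac\pi2)+h(-\tfrac\pi2)\geq0$ for entire functions of exponential type then forces $x_\varphi\leq 2r$; thus $\psi_2=0$ whenever $x_\varphi>2r$, proving $H^m_\varphi(\OO_r)=\{0\}$ there and $r_{m,\varphi}\geq\tfrac12 x_\varphi$.

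The step I expect to be most delicate is transferring the sectorial growth estimate from the boundary value $\psi_2$ to the entire Cauchy data $\psi_{2,\pm}$: the multiplier $\varphi(\tfrac m2\,\cdot)$ acts on $\psi_2$, whereas the exponential-type information needed for the indicator/width inequality lives in $\psi_{2,\pm}$, and these are coupled through $\om_m$ with its branch cuts on $\pm i[m,\infty)$. Controlling the growth toward the negative imaginary axis therefore requires approaching it through nearby rays off the cut and using continuity of the indicator, and one must verify that no cancellation in $\psi_2=\psi_{2,+}+i\om_m\psi_{2,-}$ destroys the individual estimates on $\psi_{2,\pm}$. It is precisely here, rather than in the (purely geometric) one-dimensional argument, that the massive two-dimensional case genuinely needs the analytic characterization, since the would-be double cone $T_m(x)\varphi(P_+)H\cap T_m(y)H'$ is degenerate only in the lightlike $+$ direction and cannot be reduced to $H\cap H'=\{0\}$ by wedge inclusions. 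By contrast, part~$i)$ is comparatively routine once the meromorphy is extracted, the only subtlety being the distinction between polar and essential singularities.
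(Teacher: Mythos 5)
Your part~$i)$ is essentially the paper's proof: the paper reduces it verbatim to the one-dimensional argument of Proposition~\ref{Proposition:NecessaryConditions}~$i)$, with the sole modification that, by Corollary~\ref{Corollary:CharacterizationOfHO-lightray}, elements of $H^m(\OO_r)$ are analytic on $\Cl\setminus\{0\}$, so that only essential singularities at points $p\neq 0$ of $\supp\mu_\varphi$ are excluded -- exactly your covering argument. Your reduction of part~$ii)$ to the single bound $r_{m,\varphi}\geq\tfrac12 x_\varphi$ via $\Gamma H^m_\varphi(\OO_r)=H^m_{\gamma(\varphi)}(\OO_r)$ (Proposition~\ref{Proposition:1dvs2dNets}~$ii)$) is also sound.

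The proof of $r_{m,\varphi}\geq\tfrac12 x_\varphi$ itself, however, has a genuine gap, and it sits exactly at the step you flagged. The asymptotics $q(p_1):=\tfrac12(\om_m(p_1)+p_1)\sim p_1$ holds only in the lower-\emph{right} sector; for ${\rm Re}\,p_1<0$ one has $\om_m(p_1)\sim -p_1$, hence $q(p_1)=\tfrac{m^2}{2}\,(\om_m(p_1)-p_1)^{-1}\to 0$, so in the lower-left sector the factor $e^{ix_\varphi q(p_1)}$ has modulus tending to $1$ and yields \emph{no} decay there (the growth of the multiplier in that sector is governed by the singular part at zero, i.e.\ by $x_{\gamma(\varphi)}$, not by $x_\varphi$). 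Nor does the reflection symmetry repair this: since $\overline{\psi_2(-\overline{p_1})}=\psi_{2,+}(p_1)-i\om_m(p_1)\psi_{2,-}(p_1)$, reflecting the lower-right bound on $\psi_2$ controls only the conjugate combination $\psi_{2,+}-i\om\psi_{2,-}$ in the lower-left sector, not $\psi_2$ itself. Consequently the claimed indicator bound $h_{\psi_{2,\pm}}(-\tfrac\pi2)\leq r-x_\varphi$ for the entire Cauchy data does not follow; already for the pure plane wave $\varphi(p)=e^{ixp}$ your width inequality degenerates to $0\leq 2r$. A natural symmetrized repair -- applying the argument to the entire $\Gamma$-even function $\psi_2\cdot\overline{\psi_2(-\overline{\,\cdot\,})}=\psi_{2,+}^2+\om_m^2\psi_{2,-}^2$ of type $\leq 2r$ -- only yields $x_\varphi\leq 4r$, i.e.\ half the claimed bound. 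The paper avoids growth estimates altogether and argues geometrically: with $\mu_\varphi=a\delta_0$ (else part~$i)$ applies), one factorizes $\varphi(p)=\pm e^{ix_\varphi p}e^{-ix_{\gamma(\varphi)}p^{-1}}B(p)$, so that $V=T_m(x)V_0$ with $V_0=\pm B(\tfrac m2 P)\in\E(H,T_m)$ and the two lightlike components of $x$ carrying $x_\varphi$ and $x_{\gamma(\varphi)}$; then $H^m_\varphi(\OO_r)$ lies in the standard-space intersection of two wedges which are \emph{disjoint} as soon as $2r<\max\{x_\varphi,x_{\gamma(\varphi)}\}$, and triviality follows by second quantization together with Kuckert's theorem \cite{Kuckert:2000} that the von Neumann algebras of disjoint wedges intersect only in multiples of the identity. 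In particular, your closing claim that the massive two-dimensional case ``cannot be reduced to wedge inclusions'' is the opposite of what happens: it can -- the required input is not the factor property $H\cap H'=\{0\}$ of Proposition~\ref{Proposition:CyclicityForHNet}~$ii)$ (which covers only complementary wedges), but the disjoint-wedge intersection theorem.
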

\begin{proof}
	$i)$ This proof works as the proof of Proposition~\ref{Proposition:NecessaryConditions}~$i)$ with the only difference that by Corollary~\ref{Corollary:CharacterizationOfHO-lightray}, functions in $H^m(\OO_r)$ are analytic in $\Cl\backslash\{0\}$, and therefore only essential singularities different from zero (corresponding to points $p\neq0$ in $\supp\mu_\varphi$) are excluded.
	
	$ii)$ Assuming $\mu_\varphi=a\delta_0$, we can decompose $\varphi$ as $\varphi(p)=\pm e^{ix_\varphi p}e^{-ix_{\gamma(\varphi)}p^{-1}}B(p)$ with some Blaschke product for the upper half plane, i.e. $V=\varphi(\frac{m}{2}P)=T_m(x)V_0$, where $V_0=\pm B(\frac{m}{2}P)\in\E(H,T_m)$ and $x\in\Rl^2$ is given by $x_+=x_\varphi$, $x_-=x_{\gamma(\varphi)}$. We have
	\begin{align*}
		H^m_\varphi(\OO_r)
		&=
		T_m((0,-r)+x)V_0H\cap T_m(0,r)H'
		\\
		&\subset 
		H^m((-W+(0,r)+x)\cap(W-(0,r))
	\end{align*}
	We observe that $W-({0\atop r})+x$ is disjoint from $-W+({0\atop r})$ if $r<\frac{1}{2}x_\varphi$ or $r<\frac{1}{2}x_{\gamma(\varphi)}$. This implies that the intersection of the corresponding standard spaces is $\{0\}$: In fact, the corresponding second quantized von Neumann algebras $\A(W-({0\atop r})+x)$ and $\A(W+({0\atop r}))$ have only multiples of the identity in common \cite{Kuckert:2000}, but any non-zero $h\in H^m((-W+(0,r)+x)\cap(W-(0,r))$ would give rise to non-trivial operators in that intersection. Hence we conclude $H^m_\varphi(\OO_r)=\{0\}$ for $r<\frac{1}{2}x_\varphi$ or $r<\frac{1}{2}x_{\gamma(\varphi)}$, i.e. $r_{m,\varphi}\geq\frac{1}{2}\max\{x_\varphi,x_{\gamma(\varphi)}\}$.
\end{proof}
	
In view of the general inequality $r_{m,\varphi}\leq\min\{r_\varphi,\,r_{\gamma(\varphi)}\}$ (Prop.~\ref{Proposition:1dvs2dNets}) and the results on the one-dimensional case, we see that if for example $\varphi$ is a Blaschke product with convergence exponent $\rho_\varphi<1$ or $\rho_{\gamma(\varphi)}<1$, we get $r_{m,\varphi}=0$, i.e. $H^m_\varphi(\OO_r)$ is cyclic for all $r>0$. In this case, also at least one of the nets $I\mapsto H_\varphi(I)$ or $I\mapsto H_{\gamma(\varphi)}(I)$ has cyclic subspaces for intervals of any size ($r_\varphi=0$ or $r_{\gamma(\varphi)}=0$), whereas the other net might or might not have trivial subspaces for all intervals. But also the other extreme case, i.e. $r_\varphi=r_{\gamma(\varphi)}=\infty$, but $r_{m,\varphi}=0$, is realized by suitable $\varphi$, as we show show next.

\begin{proposition}
	There exist symmetric $\gamma$-invariant Blaschke products $B$ for the upper half plane such that $r_B=r_{\gamma(B)}=\infty$ and $r_{m,B}=0$.
\end{proposition}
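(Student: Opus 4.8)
The plan is to write down an explicit sequence of zeros. Fix $m>0$, pick a rapidly increasing sequence $a_n>1$ with $\sum_n a_n^{-1}<\infty$ and convergence exponent $0$ (for instance $a_n=e^n$), and let $B$ be the symmetric Blaschke product whose zeros are the pairs $\{ia_n,\,i/a_n\}_n$. Since $-1/(ia_n)=i/a_n$, this zero set is invariant under $p\mapsto-1/p$, and a one-line computation on each elementary factor $\frac{(p-ia_n)(p-i/a_n)}{(p+ia_n)(p+i/a_n)}$ shows $B(-1/p)=B(p)$, so $B$ is genuinely $\gamma$-invariant. As $B$ has infinitely many zeros, Lemma~\ref{Lemma:GammaSymmetricPhi-1d}~i) forces $\rho_B=\infty$, whence $r_B=\infty$ by Proposition~\ref{Proposition:NecessaryConditions}~i); and since $\gamma(B)=B$ we get $r_{\gamma(B)}=r_B=\infty$ for free. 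This settles the two ``infinite'' assertions simultaneously.

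The real content is $r_{m,B}=0$, and the mechanism is that the very feature producing $r_B=\infty$ — the zeros $i/a_n$ piling up at the origin, which violate the convergence-exponent bound of Proposition~\ref{Proposition:NecessaryConditions} — is harmless for the massive two-dimensional net. By Corollary~\ref{Corollary:CharacterizationOfHO-lightray} the elements of $H^m(\OO_r)$ are analytic on $\Cl\setminus\{0\}$ (an essential singularity at $p=0$ is permitted) and their size is measured by the exponential type in the momentum variable $p_1=\tfrac m2(p-\tfrac1p)$. The multiplier $B(\tfrac m2 p)$ has lightray zeros $\tfrac{2ia_n}{m}$ and $\tfrac{2i}{ma_n}$; in the variable $p_1$ these become $i(a_n+\tfrac{m^2}{4a_n})$ and $i(\tfrac1{a_n}+\tfrac{m^2a_n}{4})$, \emph{both} tending to $i\infty$, and for my choice of $a_n$ the resulting momentum-space zeros form a sparse sequence of convergence exponent $0$. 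Thus in the variable $p_1$ the multiplier behaves like a one-dimensional Blaschke product of vanishing convergence exponent — precisely the situation handled by Proposition~\ref{Proposition:SufficientConditions}, with $p_1$ playing the role that $p$ plays in Proposition~\ref{Proposition:CharacterizationOfHI}.

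I would then transport the proof of Proposition~\ref{Proposition:SufficientConditions} into this picture. Write $B(\tfrac m2 p)=N(p)/D(p)$, where $N$ realises the numerator zeros as one genus-$0$ canonical product in $p$ (for the large zeros $\tfrac{2ia_n}{m}$) times one in $1/p$ (for the zeros $\tfrac{2i}{ma_n}$ near the origin), and $D$ the reflected zeros. Both underlying zero sequences have convergence exponent $0$, so $N,D$ are analytic on $\Cl\setminus\{0\}$ and of exponential type $0$ in $p-\tfrac1p$, i.e.\ $|N(p)|,|D(p)|\le C_\eps\,e^{\eps|p-1/p|}$ for every $\eps>0$. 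To gain square-integrability and membership in $H^m(\OO_r)$, multiply both by a mollifier $M(p)=M_0\!\big(\tfrac m2(p-\tfrac1p)\big)$, with $M_0$ an Ingham function \cite{Ingham:1934}: real, the Fourier transform of a function supported in an arbitrarily short interval, with rapid decay $|M_0(p_1)|\le c\,e^{-\tau|p_1|^\delta}$ on $\Rl$ and small exponential type on $\Cl$. Then $\psi_1:=NM$ and $\psi_2:=DM$ satisfy $\psi_1/\psi_2=B(\tfrac m2 p)$; the Ingham decay annihilates the subexponential growth of $N,D$ on $\Rl_+$ in both limits $p\to\infty$ and $p\to0$, so $\psi_1,\psi_2\in L^2(\Rl_+,dp/p)$ and are of exponential type $\le\tfrac{mr}{2}$ in $p-\tfrac1p$ with $r$ as small as the support of $M_0$ allows. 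Lemma~\ref{Lemma:PhiAsRatio-2d} then gives $H^m_B(\OO_r)\neq\{0\}$ for all $r>0$, i.e.\ $r_{m,B}=0$.

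The main obstacle is the final verification that $\psi_1,\psi_2$ meet \emph{all} of the conditions in Proposition~\ref{Proposition:CharacterizationOfHO}/Corollary~\ref{Corollary:CharacterizationOfHO-lightray}, not just the growth estimate. The characterization is stated in terms of the Cauchy data $\psi_+=\tfrac12(1+\Gamma)\psi$ and $\psi_-=\tfrac1{2i\om}(1-\Gamma)\psi$, which must extend to \emph{entire} functions of $p_1$ obeying $\psi_\pm(-\overline{p_1})=\overline{\psi_\pm(p_1)}$. Because $\Gamma=ZJ$ sends $N$ to a function whose zeros sit at the $m$-rescaled points, one has to check that $\psi-\Gamma\psi$ vanishes at the branch points $p=\pm i$ (so that dividing by $\om$ leaves no pole at $p_1=\pm im$) and that the two reality relations hold. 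I expect to arrange this by taking $M_0$ even and real and by symmetrising the construction, so that the needed $\Gamma$- and reflection-symmetries are built in from the outset — the two-dimensional analogue of the reality bookkeeping already done for the factors $Q_\pm$ and $M$ in Proposition~\ref{Proposition:SufficientConditions}, here complicated by the essential singularity at the origin and the branch points of $\om$.
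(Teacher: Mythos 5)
Your construction is correct, and at the level of strategy it is exactly the paper's. The zero set $\{ia_n,\,i/a_n\}_n$ with $a_n=e^n$ satisfies the Blaschke condition and gives a convergent symmetric Blaschke product; your factor-by-factor check of $B(-1/p)=B(p)$ is right (note the signs from the purely imaginary zeros cancel pairwise within each pair $\{ia_n,i/a_n\}$, so $B=N/D$ with no dangling sign). The conclusion $r_B=r_{\gamma(B)}=\infty$ via Lemma~\ref{Lemma:GammaSymmetricPhi-1d}~i) and Proposition~\ref{Proposition:NecessaryConditions}~i) matches the paper, which takes an arbitrary $\gamma$-invariant product with infinitely many zeros. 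Your central observation --- that under $p_1=\tfrac m2(p-\tfrac1p)$ both zero families migrate to $i\infty$ and become sparse --- is precisely the paper's mechanism; the paper's example $p_n=n^{-\beta}e^{i/n}$ yields momentum-space convergence exponent $\beta^{-1}<1$, while yours yields exponent $0$, which is if anything cleaner.

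Where you depart from the paper is in the execution of the factorization, and that is exactly where your argument stops short. The paper never builds canonical products in the lightray variable: using the $\gamma$-invariance of $B$ it observes that $B'(p_1)=B(\tfrac12(\om_m(p_1)+p_1))$ is itself a Blaschke product on the upper half-plane in $p_1$ (the $\gamma$-invariance is what makes multiplication by the multiplier commute with $\Gamma$, hence respect the Cauchy decomposition $\psi=\psi_++i\om\psi_-$), and that the desired factorization holds if and only if $B'=\psi_{1,\pm}/\psi_{2,\pm}$ at the level of Cauchy data. The proof of Proposition~\ref{Proposition:SufficientConditions}, run verbatim in the variable $p_1$ on $B'$ with $\rho_{B'}<1$, then produces entire functions of type $\le r$, rapidly decreasing on $\Rl$, with the reality condition --- and these \emph{are} the Cauchy data, so Proposition~\ref{Proposition:CharacterizationOfHO}~iv) applies with nothing left to verify. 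Your route ($N,D$ as products in $p$ and $1/p$, mollified by $M_0(\tfrac m2(p-\tfrac1p))$) can be completed, but the step you defer with ``I expect to arrange this'' is the actual content: you must decompose $NM$ and $DM$ into parts even and odd under the deck transformation $p\mapsto-1/p$ and check the conditions of Corollary~\ref{Corollary:CharacterizationOfHO-lightray}~ii). Incidentally, your specific worry about poles at the branch points dissolves on inspection: the odd part automatically vanishes at the fixed points $p=\pm i$ of the involution, at the same square-root rate as $\om$, so dividing by $\om$ is harmless --- this is what the Schwarz-reflection argument in the proof of the corollary encodes. The cleanest repair, and the paper's own move, is to transplant your zeros $i(a_n+\tfrac{m^2}{4a_n})$ and $i(\tfrac1{a_n}+\tfrac{m^2a_n}{4})$ (with their reflections) to the $p_1$-plane and run the Proposition~\ref{Proposition:SufficientConditions} construction there, so that entirety and reality of $\psi_{k,\pm}$ are built in rather than verified after the fact.
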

\begin{proof}
	Let $B$ be a symmetric $\gamma$-invariant Blaschke product for the upper half plane, with infinitely many zeros. According to Lemma~\ref{Lemma:GammaSymmetricPhi-1d}, we then have $\rho_B=\infty$, $r_B=r_{\gamma(B)}=\infty$. 

	To show that $r_{m,B}=0$, we have to demonstrate that for any $r>0$, $B$ can be represented in the form $B(\frac{mp}{2})=\psi_1(p)/\psi_2(p)$ with $\psi_1,\psi_2\in H^m(\OO_r)$. In view of the $\gamma$-invariance of $B$, this will be the case if and only if $B(\frac{mp}{2})=\psi_{1,\pm}(p)/\psi_{2,\pm}(p)$, or, in the momentum representation, where $p=\frac{1}{m}( \om_m(p_1)+p_1)$,
	\begin{align*}
		B'(p_1)
		:=
		B(\tfrac{1}{2}(\om_m(p_1)+p_1))
		=
		\frac{\psi_{1,\pm}(p_1)}{\psi_{2,\pm}(p_1)}
		\,.
	\end{align*}
	By the same line of argument as in the proof of Corollary~\ref{Corollary:CharacterizationOfHO-lightray-appendix}, one sees that $B'$ is analytic in the upper half plane. As it is also non-singular at the boundary, it follows that $B'$ is also a Blaschke product for the upper half plane. We have shown in the proof of Proposition~\ref{Proposition:SufficientConditions} that given any $r>0$, a Blaschke product of convergence exponent strictly smaller one can be written as the ratio of two entire functions $f,g$ of exponential type $r$, which decay rapidly to zero as $p_1\to\pm\infty$ on the real line, and satisfy the reality condition $\overline{f(-\overline{p_1})}=f(p_1)$, $\overline{g(-\overline{p_1})}=g(p_1)$. These properties precisely match the characterization of $\psi_{1,\pm},\psi_{2,\pm}\in H^m(\OO_r)$ (see Prop.~\ref{Proposition:CharacterizationOfHO}~$iv)$).
	
	To conclude the argument, we therefore have to construct $B$ in such a way that $\rho_{B'}<1$. To this end, we consider the zeros $p_n:=n^{-\beta}e^{i/n}$, $n\in\Nl$, $\beta>0$, which lie in the upper half plane and satisfy the Blaschke condition $\sum_n\frac{{\rm Im}\,p_n}{1+|p_n|^2}\leq\sum_n n^{-\beta-1}<\infty$. Denoting the corresponding Blaschke product $b$, we set $B(p):=b(p)\overline{b(-\overline{p})}\overline{b(\overline{p}^{-1})}b(-p^{-1})$, which is a symmetric $\gamma$-invariant Blaschke product for the upper half plane.
	
	The zeros $p_{1,n}$ of the function $\varphi(p_1):=b(\tfrac{1}{2}(\om_m(p_1)+p_1))$ are $p_{1,n}=p_n-\frac{m^2}{4p_n}$, and we have for sufficiently large $n$
	\begin{align*}
		\left|p_n-\frac{m^2}{4p_n}\right|^{-1}
		&=
		\left|n^{-\beta}-\frac{m^2n^\beta\,e^{-2i/n}}{4}\right|^{-1}
		\leq
		\left|n^{-\beta}-\frac{m^2n^\beta}{4}\right|^{-1}
		\leq
		\frac{1}{\frac{m^2n^\beta}{4}-1}\,,
	\end{align*}
	which shows $\sum_n|p_{1,n}|^{-\alpha}<\infty$ for $\alpha>\beta^{-1}$, i.e. $\rho_\varphi\leq\beta^{-1}$. Since the zeros  $p_{1,n}=p_n-\frac{m^2}{4p_n}$ are essentially unchanged under the replacements $p\to-p$ and $p\to p^{-1}$, we also find convergence exponent at most $\beta^{-1}$ for the other factors, and thus $\rho_{B'}\leq\beta^{-1}$. For $\beta>1$ we have thus constructed a Blaschke product with the claimed properties. 
\end{proof}

\section{Discussion and Conclusions}\label{Section:Conclusions}

The nets of standard subspaces that we have constructed here can be promoted to nets of von Neumann algebras by second quantization. In fact, let $\F^\pm_\Hil$ denote the Bose ($+$) or Fermi ($-$) Fock space over $\Hil$, and for any closed real-linear subspace $H\subset\Hil$, let $\M_\pm(H)\subset\B(\F^\pm_\Hil)$ denote the von Neumann algebra generated by the Weyl operators $W(h)$, $h\in H$ ($+$) or the Fermi field $a^*(h)+a(h)$, $h\in H$ ($-$), respectively. Then $H\mapsto\M_\pm(H)$ maps standard subspaces in $\Hil$ to von Neumann algebras in $\B(\F_\Hil^\pm)$ which have the Fock vacuum $\Om$ as a cyclic and separating vector. The mapping $H\mapsto\M_\pm(H)$ preserves inclusions, maps symplectic complements to commutants (twisted commutants in the Fermi case), and also the Poincar\'e symmetries transport to the algebraic level by second quantization \cite{LeylandsRobertsTestard:1978,Derezinski:2004}.

If $(H,T)$ is a one- or two-dimensional non-degenerate standard pair, we thus have the ``free'' nets $I\mapsto\M_\pm(H(I))$, and for any endomorphism $V\in\E(H,T)$, the subnets
\begin{align*}
	I\longmapsto\M_\pm(H_V(I))\subset\M_\pm(H(I))\,.
\end{align*}

In particular, considering irreducible $(H,T)$ with mass $m>0$, we get for any symmetric inner function $\varphi$ the nets $\OO\mapsto\M_\pm(H^m_\varphi(\OO))$ on $\Rl^2$. 

A family of nets indexed by certain symmetric inner functions appears also in a different setting. In fact, if $\varphi$ is $\gamma$-invariant, then it satisfies all properties of a {\em scattering function} \cite{Lechner:2003} (usually formulated in the rapidity representation), i.e. $\varphi$ can be seen as the kernel of an elastic two-particle S-matrix which is unitary, hermitian analytic, and crossing symmetric. One can then formulate the inverse scattering problem, aiming at the construction of a net of von Neumann algebras describing a quantum field theory on two-dimensional Minkowski space, with particles of mass $m>0$ whose collision theory is governed by the factorizing S-matrix derived from $\varphi$. Starting from concepts of the form factor program \cite{Smirnov:1992,BabujianFoersterKarowski:2006}, and Schroer's invention of polarization-free generators \cite{Schroer:1997-1,Schroer:1999}, such a construction was carried out in an operator-algebraic setting in \cite{Lechner:2003,Lechner:2008}, see also \cite{BostelmannLechnerMorsella:2011,BostelmannCadamuro:2012,BischoffTanimoto:2013,LechnerSchutzenhofer:2013} for extensions of this construction and related work.

In comparison to the construction in the present paper, the resulting structure can be summarized as follows: For any $\gamma$-invariant symmetric inner function $\varphi$, one obtains a net $\OO\mapsto\M^\varphi(\OO)$ of von Neumann algebras, indexed by double cones $\OO\subset\Rl^2$. This net is fully Poincar\'e covariant, and for each $\varphi$, there exists a minimal length $r^\varphi$ such that $\M^\varphi(\OO_r)$ has the vacuum vector as a cyclic vector if $r>r^\varphi$. 
As $\varphi$ defines the S-matrix of the net, $\M^{\varphi_1}$ and $\M^{\varphi_2}$ are not equivalent if $\varphi_1\neq\varphi_2$. For the constant function $\varphi(p)=1$, one obtains the free net $\M^1(\OO)=\M_+(H(\OO))$ generated from the two-dimensional irreducible standard pair with mass $m>0$ on the Bose Fock space. 

For general $\varphi$, the representation space can conveniently be chosen as $\F^\pm_\Hil$ if $\varphi(0)=\pm1$ (these are the only possibilities for $\gamma$-invariant $\varphi$). However, the algebras $\M^\varphi(\OO)$ do {\em not} coincide with $\M_\pm(H^m_\varphi(\OO))$ --- for example, the net $\M^\varphi$ is dual, fully Poincar\'e covariant, has non-trivial S-matrix, and a one-particle modular structure independent of $\varphi$. All these properties are not realized by the net $\OO\mapsto\M_\pm(H^m_\varphi(\OO))$. Rather, one can view $\M^\varphi(\OO)$ as a ``deformed second quantization'' of the subspace $H(\OO)$, i.e. the standard one particle subspaces can be recovered from $\M^\varphi(\OO)$ independent of $\varphi$, but the second quantization functor is modified in a $\varphi$-dependent manner.

Despite these differences, we strongly believe that there is a deeper connection between endomorphisms of standard pairs and scattering functions. Partial relations between the two scenarios have already been found in \cite{Tanimoto:2011-1,BischoffTanimoto:2013,LechnerSchlemmerTanimoto:2013}, but a complete understanding of this connection has not been reached yet. 

One can also compare the inverse scattering construction with the endomorphism subnet construction with respect to the minimal localization length. The extended analyticity requirement in Lemma~\ref{Lemma:GammaSymmetricPhi-1d}~$ii)\,c)$ comes from the inverse scattering construction, where scattering functions satisfying it were called {\em regular} \cite{Lechner:2008}. It was shown there that one has $r^\varphi<\infty$ for regular $\varphi$, and $r^\varphi=0$ for regular $\varphi$ with $\varphi(0)=-1$. Lemma~\ref{Lemma:GammaSymmetricPhi-1d} reproduces a similar situation also in the situation considered here\footnote{In the context of short distance scaling limits, the class of symmetric inner functions which are finite Blaschke products plays a distinguished role \cite{BostelmannLechnerMorsella:2011}. While in the von Neumann algebraic situation, the existence of interval-localized observables is currently open for non-constant $\varphi$, it is interesting to note that in the subspace picture used here, this question has a simple affirmative answer: For finite Blaschke products, we have $r_{m,\varphi}=r_\varphi=r_{\gamma(\varphi)}=0$.}.

Dropping the requirement $\gamma(\varphi)=\varphi$, one can however easily produce examples of symmetric inner functions $\varphi$ which violate regularity but still satisfy $r_\varphi=0$. This is not surprising when one recalls the background of the regularity condition: Similar to the procedure in the present paper, the wedge algebra $\M^\varphi(W_2)$ of the net $\M^\varphi$ can be constructed directly, and the double cone algebras are defined as appropriate intersections, which by duality is the same as the relative commutant of the inclusion $\M^\varphi(W_2+x)\subset\M^\varphi(W_2)$, $x\in W_2$. The method to guarantee that these intersections have the vacuum as a cyclic vector (and, in particular, are non-trivial) \cite{BuchholzLechner:2004,Lechner:2008}, was to verify the split property \cite{DoplicherLongo:1984} of these inclusions by checking the modular nuclearity condition 
\cite{BuchholzDAntoniLongo:1990-1}. While it seems difficult to prove modular nuclearity without regularity of $\varphi$ \cite{Lechner:2008}, modular nuclearity or the split property are certainly not necessary for the inclusion $\M^\varphi(W_2+x)\subset\M^\varphi(W_2)$ to have a large relative commutant.

Comparing to the situation in the present paper, we note that inclusions of closed real linear subspaces $K\subset H$, with $H'\cap H=\{0\}$, are always ``normal'' in the sense that $(K'\cap H)'\cap H=K$, i.e. each intersection coincides with its double relative symplectic complement, and the relative symplectic complement of $K\subset H$ is trivial if and only if $K=H$. As second quantization is an isomorphism of orthocomplemented lattices, also inclusions of second quantization von Neumann algebras $\M_\pm(K)\subset\M_\pm(H)$, $K\subset H$, are always normal. Thus the relative commutant of $\M_\pm(K)$ in $\M_\pm(H)$ is non-trivial if and only if $K\neq H$. This situation is quite different for general non-normal inclusions of von Neumann algebras, which have a less accessible relative commutant. In this general setting, a strong condition on an inclusion of von Neumann algebras is to be (quasi-)split, which implies normalcy in the factor case \cite{DoplicherLongo:1984}.

However, these conditions are too strong for certain situations like the one-dimensional case considered here. In fact, consider for some $a>0$ the interval $I=(-a,0)$ and its subspace $H_V(I)=T(-a)VH\cap H'\subset H'$. If the second quantization of the inclusion $H_V(I)'\cap H'\subset H'$ was split, then the same would hold in particular for $V=1$, because $H(I)'\subset H_V(I)'$. But for $V=1$ it is known that the inclusion is not split; one can for example falsify certain one-particle modular compactness criteria, which are necessary conditions for split in this setting. 

On the other hand, in the two-dimensional massive irreducible situation, it is known that the second quantization of the inclusion $T(-a)H'\subset H'$ is split for $a\in W_2$. But for suitable endomorphisms $V$, the second quantization of the inclusion $T(-a)VH'\cap H'\subset H'$ is again not split, but still has a standard relative commutant.

These examples illustrate the well-known fact that the split property is too strong as a condition for large relative commutants, and weaker manageable conditions are needed. We hope to come back to these questions in the future.

\appendix
\section{Appendix}\label{Section:Appendix}

In this appendix we give the proofs of Lemma~\ref{Lemma:CharacterizeH} (reprinted here as Lemma~\ref{Lemma:CharacterizeH-Appendix}), Proposition~\ref{Proposition:CharacterizationOfHI} (as Proposition~\ref{Proposition:CharacterizationOfHI-Appendix}), Proposition~\ref{Proposition:CharacterizationOfHO} (as Proposition~\ref{Proposition:CharacterizationOfHO-Appendix}), and Corollary~\ref{Corollary:CharacterizationOfHO-lightray} (as Corollary~\ref{Corollary:CharacterizationOfHO-lightray-appendix}).

\begin{lemma}\label{Lemma:CharacterizeH-Appendix}
		In the rapidity representation, $H=\{\psi\in {\mathbb H}^2(\Strip_{\pi})\,:\,\overline{\psi(\te+i\pi)}=\psi(\te)\text{ a.e.}\}$.
\end{lemma}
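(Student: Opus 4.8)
The plan is to use the abstract modular-theoretic description of $H$ as the fixed-point set of its Tomita operator, and then make everything explicit in the rapidity representation \eqref{identify}. Since $S(\xi+i\eta)=\xi-i\eta$ for $\xi,\eta\in H$, one has
\[
	H=\{\psi\in D(S)\,:\,S\psi=\psi\}\,,
\]
and because $J$ is an anti-unitary bijection, $D(S)=D(\Delta^{1/2})$. Thus the whole task splits into two parts: first identify the domain $D(\Delta^{1/2})$ inside $\Hil=L^2(\Rl,d\te)$, and then compute the action of $S=J\Delta^{1/2}$ and impose the fixed-point equation.

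For the domain, I would pass to the Fourier transform $\hat\psi(\om)=\int e^{-i\om\te}\psi(\te)\,d\te$. By \eqref{identify}, $(\Delta^{it}\psi)(\te)=\psi(\te-2\pi t)$ becomes multiplication by $e^{-2\pi i\om t}$, so $\log\Delta$ is multiplication by $-2\pi\om$ and $\Delta^{1/2}$ is multiplication by $e^{-\pi\om}$. Hence $D(\Delta^{1/2})=\{\psi\in L^2(\Rl):e^{-\pi\om}\hat\psi\in L^2(\Rl)\}$ and $\widehat{\Delta^{1/2}\psi}(\om)=e^{-\pi\om}\hat\psi(\om)$. By the Paley--Wiener theorem for a strip \cite{Duren:1970}, this domain condition is exactly the statement that $\psi$ is the lower boundary value of a function in ${\mathbb H}^2(\Strip_\pi)$: for such a function the family $\te\mapsto\psi(\te+i\la)$, $0\le\la\le\pi$, has Fourier transform $e^{-\la\om}\hat\psi(\om)$, and the uniform $L^2$-bound reduces to integrability of $(1+e^{-2\pi\om})|\hat\psi(\om)|^2$ — the $\la=0$ part being automatic for $\psi\in\Hil$ and the $\la=\pi$ part being precisely $e^{-\pi\om}\hat\psi\in L^2(\Rl)$. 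This simultaneously identifies $D(\Delta^{1/2})$ with ${\mathbb H}^2(\Strip_\pi)$ and shows that $\Delta^{1/2}$ realizes the analytic continuation $(\Delta^{1/2}\psi)(\te)=\psi(\te+i\pi)$, the boundary value along the upper edge of the strip.

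With this in hand, the conclusion is a direct computation. Using $(J\psi)(\te)=\overline{\psi(\te)}$ one gets $(S\psi)(\te)=(J\Delta^{1/2}\psi)(\te)=\overline{\psi(\te+i\pi)}$, and the fixed-point equation $S\psi=\psi$ reads $\overline{\psi(\te+i\pi)}=\psi(\te)$ almost everywhere. Combined with the membership $\psi\in{\mathbb H}^2(\Strip_\pi)$ coming from the domain analysis, this is exactly the asserted description of $H$.

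The main obstacle is the Paley--Wiener identification of $D(\Delta^{1/2})$ with the Hardy space. The delicate point is not the spectral computation $\Delta^{1/2}=\mathrm{mult}\,(e^{-\pi\om})$, which is routine, but the complex-analytic step of verifying that every $\psi$ with $e^{-\pi\om}\hat\psi\in L^2(\Rl)$ genuinely extends to an analytic function on the open strip $\Strip_\pi$ with $L^2$ boundary values on both edges, and that the top boundary value of this extension coincides with the vector $\Delta^{1/2}\psi$ produced by the spectral calculus. Once this is secured, the remainder of the argument is purely algebraic.
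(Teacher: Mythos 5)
Your proof is correct, but it follows a genuinely different route from the paper's. The paper never diagonalizes $\Delta$: for the inclusion ``$\supset$'' it shows $\psi\in\dom\Delta^{1/2}$ by proving that $t\mapsto\Delta^{it}\psi$ is weakly, hence strongly, analytic (three lines theorem plus \cite[Thm.~VI.4]{ReedSimon:1972}), and for ``$\subset$'' it runs a Cauchy--Fubini--Morera argument on $\zeta\mapsto\langle f,\Delta^{-i\zeta/2\pi}\psi\rangle$ to produce the pointwise analytic continuation $\psi(\te_0+\zeta)$, exploiting only that the modular group acts by translations. You instead pass to the Fourier transform, where $\Delta^{1/2}$ becomes multiplication by $e^{-\pi\om}$, and identify $\dom\Delta^{1/2}$ with ${\mathbb H}^2(\Strip_\pi)$ in one stroke via the Paley--Wiener theorem for the strip; the fixed-point equation $S\psi=\psi$ then yields the reality condition on the upper edge exactly as in the paper. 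Your route buys concreteness: it handles both inclusions simultaneously, and it makes fully explicit the identification of $\Delta^{1/2}\psi$ with the upper boundary value $\te\mapsto\psi(\te+i\pi)$, a fact the paper uses only implicitly when converting $\overline{\psi(\te+i\pi)}=\psi(\te)$ into $J\Delta^{1/2}\psi=\psi$. The price is that all the analytic content is delegated to the strip Paley--Wiener theorem --- the equivalence of $e^{-\pi\om}\hat\psi\in L^2(\Rl)$ with analytic extendability to $\Strip_\pi$ under the two-sided weight $(1+e^{-2\pi\om})|\hat\psi(\om)|^2$, together with the agreement of the a.e.\ boundary value with the $L^2$-limit of $\psi(\cdot+i\la)$ as $\la\to\pi$ --- which you rightly single out as the delicate step; it is classical and can be extracted from the half-plane theory in \cite{Duren:1970}, but in a full write-up it should be quoted in precisely this boundary-value form (monotone convergence at $\om<0$ gives the forward direction, and dominated convergence gives the $L^2$ convergence $\psi(\cdot+i\la)\to\Delta^{1/2}\psi$). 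The paper's orbit-analyticity technique is lengthier but works directly with the unitary group, a method that adapts to situations where no such explicit spectral realization is available.
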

\begin{proof}
	Let $\psi\in {\mathbb H}^2(\Strip_{\pi})$. Then, for any $f\in L^2(\Rl,d\te)$, the function $\Rl\ni t \mapsto\langle f,\Delta^{it}\psi\rangle$ extends to an analytic function on the strip $\Strip_{-1/2}$ bounded by $\|f\|\|\psi\|$ thanks to unitarity of $\Delta^{it}$ and the three lines theorem. Thus $t\mapsto \Delta^{it}\psi$ is weakly analytic, which implies \cite[Thm.~VI.4]{ReedSimon:1972} that it is also strongly analytic. Thus $\psi\in\dom\Delta^{1/2}$. The condition $\overline{\psi(\te+i\pi)}=\psi(\te)$ for almost all $\te\in\Rl$ then gives $J\Delta^{1/2}\psi=\psi\Rightarrow\psi\in H$, i.e. we have shown the inclusion ``$\supset$'' in the claimed equality.
	
	To establish also the inclusion ``$\subset$'', pick $\psi\in H$ and $f\in C_c^\infty(\Rl)$. Then $\zeta\mapsto\langle f,\Delta^{-i\zeta/2\pi}\psi\rangle$ is analytic in the strip $\Strip_{\pi}$. Thus, for any closed path $\gamma\subset \Strip_{\pi}$, we get with the Theorems of Cauchy and Fubini
	\begin{align*}
		0
		&=
		\oint_\gamma d\zeta\,\langle f,\Delta^{-i\zeta/2\pi}\psi\rangle
		=
		\oint_\gamma d\zeta\int_\Rl d\te\, \overline{f(\te)}(\Delta^{-i\zeta/2\pi}\psi)(\te)
		=
		\int_\Rl d\te\, \overline{f(\te)}\oint_\gamma d\zeta(\Delta^{-i\zeta/2\pi}\psi)(\te)
		\,,
	\end{align*}
	which implies $\oint_\gamma d\zeta\,(\Delta^{-i\zeta/2\pi}\psi)(\te)=0$ for almost all $\te$ because $f$ was arbitrary. Thus we find $\te_0\in\Rl$ such that $\zeta\mapsto(\Delta^{-i\zeta/2\pi}\psi)(\te_0)=\psi(\te_0+\zeta)$ is analytic in $\Strip_{\pi}$, which implies the same analyticity for $\psi$. As $\psi\in\dom\Delta^{1/2}$, we see that $\psi$ is (the boundary value of) a function in ${\mathbb H}^2(\Strip_{\pi})$, and by $J\Delta^{1/2}\psi=\psi$, we also find $\overline{\psi(\te+i\pi)}=\psi(\te)$ for almost all $\te\in\Rl$.	
\end{proof}

\subsection{The standard subspace of an interval}\label{Section:Appendix1d}

Working in the lightray representation, we now turn to an explicit characterization of the interval subspaces $H(I)\subset L^2(\Rl_+,dp/p)$. As before, the skew-symmetric extension of a  function $\psi\in L^2(\Rl_+,dp/p)$ to $\Rl$ is denoted as
\begin{align}\label{eq:SkewExtension}
	\psi^{\tt s}(p) :=
	\begin{cases}
		\overline{\psi(-p)} & p<0\\
		\psi(p) & p>0
	\end{cases}
	\,.
\end{align}
Note that for $\psi\in L^2(\Rl_+,dp/p)$, one has $\psi^{\tt s}\in L^2(\Rl,dp/|p|)$, and $p\mapsto \psi^{\tt s}(p)/(1 + |p|)$ belongs to $L^2(\mathbb R , dp)$. So $\psi^{\tt s}$ is, in particular, a tempered distribution: $\psi^{\tt s}\in \Ss'(\Rl)$. In the following, we will often consider (inverse) Fourier transforms $\check{\psi}^{\tt s}$, as well as supports and derivatives of $\check{\psi}^{\tt s}$ in the sense of distributions.

Given an interval $I\subset \mathbb R$, we now consider the real linear subspace
\begin{align}
	K(I)
	:=
	\{\psi\in\Hil\,:\,\supp\check{\psi}^{\tt s}\subset \overline{I}\,\}
	\,,
\end{align}
where the inverse Fourier transform is taken as $\check{\psi}^{\tt s}(x)=(2\pi)^{-1/2}\int_\Rl dp\,\psi^{\tt s}(p)e^{-ipx}$, and correspondingly for distributions. In view of the skew symmetry of $\psi^{\tt s}$, its inverse Fourier transform is a real distribution.

\begin{lemma}\label{Lemma:KIClosed}
	$K(I)$ is a closed real linear subspace of $\Hil$.
\end{lemma}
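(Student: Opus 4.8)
The plan is to establish the two assertions---real linearity and closedness---separately, the latter being the substantial one. Real linearity is routine: the skew-symmetric extension $\psi\mapsto\psi^{\tt s}$ is \emph{real} linear but not complex linear, since on $p<0$ it involves a complex conjugation, so it commutes with real scalar multiplication and respects real combinations. Because the inverse Fourier transform is continuous and linear on $\Ss'(\Rl)$, for $a,b\in\Rl$ the inverse Fourier transform of $(a\psi+b\eta)^{\tt s}$ equals $a\,\check\psi^{\tt s}+b\,\check\eta^{\tt s}$, and the support condition $\supp\check\psi^{\tt s}\subset\overline I$ is plainly preserved under such real combinations.

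The substance is closedness. Suppose $\psi_n\in K(I)$ with $\psi_n\to\psi$ in $\Hil=L^2(\Rl_+,dp/p)$; I must show $\psi\in K(I)$. The strategy is to upgrade convergence in $\Hil$ to convergence in $\Ss'(\Rl)$, and then invoke two soft facts: the inverse Fourier transform is continuous on $\Ss'(\Rl)$, and the set of tempered distributions supported in the closed set $\overline I$ is weak-$*$ closed (if $T_n\to T$ in $\Ss'$ with each $T_n$ annihilating every test function supported off $\overline I$, then so does $T$).

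To obtain $\Ss'$-convergence of the extensions I would route through an intermediate $L^2$ estimate. A direct computation gives $\|\psi^{\tt s}\|_{L^2(\Rl,dp/|p|)}^2=2\|\psi\|_\Hil^2$, so $\psi\mapsto\psi^{\tt s}$ is a multiple of an isometry into $L^2(\Rl,dp/|p|)$ and hence $\psi_n^{\tt s}\to\psi^{\tt s}$ there. The crucial elementary inequality is $\frac{1}{(1+|p|)^2}\le\frac{1}{|p|}$ for all $p\neq0$, giving
\begin{align*}
	\Big\|\frac{\psi_n^{\tt s}-\psi^{\tt s}}{1+|p|}\Big\|_{L^2(\Rl,dp)}^2
	\le
	\|\psi_n^{\tt s}-\psi^{\tt s}\|_{L^2(\Rl,dp/|p|)}^2
	=
	2\|\psi_n-\psi\|_\Hil^2
	\to 0\,.
\end{align*}
Thus $\psi_n^{\tt s}/(1+|p|)\to\psi^{\tt s}/(1+|p|)$ in $L^2(\Rl,dp)$. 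Writing $\langle\psi^{\tt s},f\rangle=\int \frac{\psi^{\tt s}(p)}{1+|p|}\,(1+|p|)f(p)\,dp$ for $f\in\Ss(\Rl)$ and applying Cauchy--Schwarz with $(1+|p|)f\in L^2(\Rl,dp)$ yields $\psi_n^{\tt s}\to\psi^{\tt s}$ in $\Ss'(\Rl)$. Passing to inverse Fourier transforms and using closedness of the support constraint then gives $\supp\check\psi^{\tt s}\subset\overline I$, i.e. $\psi\in K(I)$.

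The main obstacle---more a matter of care than of genuine difficulty---is precisely this passage from the weighted $L^2(\Rl_+,dp/p)$ topology to the $\Ss'(\Rl)$ topology: one must insert the weight $1/(1+|p|)$ so as to simultaneously tame the degeneracy of $dp/p$ near $p=0$ and the growth at $p=\infty$, producing a clean Cauchy--Schwarz pairing against Schwartz test functions.
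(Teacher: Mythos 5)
Your proof is correct and follows essentially the same route as the paper: both upgrade convergence in $L^2(\Rl_+,dp/p)$ to convergence of the skew-symmetric extensions in $\Ss'(\Rl)$ by pairing against Schwartz functions via the weighted $L^2$ structure (the paper pairs $\psi_n^{\tt s}$ weakly in $L^2(\Rl,dp/|p|)$ against $|p|k(p)$ for $k\in\Ss(\Rl)$, while you insert the weight $1/(1+|p|)$ and apply Cauchy--Schwarz, a bound the paper itself records in the remark before the lemma), and then conclude by continuity of the inverse Fourier transform on $\Ss'(\Rl)$ and weak-$*$ closedness of the support constraint. The only difference is cosmetic bookkeeping of the weight, plus your explicit (and correct, if routine) verification of real linearity, which the paper omits.
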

\begin{proof}
	Let $\{\psi_n\}_{n\in\Nl}$ be a sequence in $K(I)$ that converges to $\psi\in L^2(\mathbb R_+ , dp/p)$. Thus
	\[
		\int_{-\infty}^{\infty} {\psi_n^{\tt s}}(p)\frac{\overline{h(p)}}{|p|}dp
		\longrightarrow
		\int_{-\infty}^\infty {\psi^{\tt s}}(p)\frac{\overline{h(p)}}{|p|}dp
	\]
	for every $h\in  L^2(\mathbb R , dp/|p|)$ as $n\to\infty$, and in particular for every $h\in \Ss(\mathbb R)$ such that $h(0)=0$. Thus
	\[
		\int_{-\infty}^{\infty} \psi_n^{\tt s}(p)\overline{k(p)}dp
		\longrightarrow
		\int_{-\infty}^\infty  \psi^{\tt s}(p)\overline{k(p)}dp
	\]
	for every $k\in\Ss(\mathbb R)$. It follows that $\psi_n^{\tt s}\to \psi^{\tt s}$ in $\Ss'(\mathbb R)$. Thus $\check{\psi}_n^{\tt s}\to \check{\psi}^{\tt s}$ in $\Ss'(\Rl)$, and hence the support of $\check{\psi}^{\tt s}$ is contained in $\overline{I}$, namely $\psi\in K(I)$.
\end{proof}

We now define, $a\in\Rl$,
\begin{align*}
	K(a,\infty):=\overline{\bigcup_{a<b}K(a,b)}^{\|\cdot\|_\Hil}
	\,,\qquad
	K(-\infty,a):=\overline{\bigcup_{b<a}K(b,a)}^{\|\cdot\|_\Hil}\,.
\end{align*}

\begin{lemma}\label{Lemma:H=KOnHalflines}
	$H(a,\infty)= K(a,\infty)$, and $H(-\infty, a) = K(-\infty, a)$, for every $a\in \mathbb R$.
\end{lemma}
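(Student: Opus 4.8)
The plan is to reduce the whole statement to the single pair of identities $H = K(0,\infty)$ and $H' = K(-\infty,0)$ and then to identify membership in $H$ with a Fourier support condition via Paley--Wiener. First I would use translation covariance: by the definition of the net, $H(a,\infty)=T(a)H$ and $H(-\infty,a)=T(a)H'$. A direct computation gives $(T(a)\psi)^{\tt s}(p)=e^{iap}\psi^{\tt s}(p)$ for all $p$, hence $\check{(T(a)\psi)}^{\tt s}(x)=\check\psi^{\tt s}(x-a)$, so $T(a)$ shifts Fourier supports by $a$ and therefore $T(a)K(0,\infty)=K(a,\infty)$ and $T(a)K(-\infty,0)=K(-\infty,a)$. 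Thus it suffices to prove $H=K(0,\infty)$ and $H'=K(-\infty,0)$. Moreover, since the modular conjugation is complex conjugation in this picture and $JH=H'$, while conjugation reflects supports via $\check{(\overline\psi)}^{\tt s}(x)=\check\psi^{\tt s}(-x)$ (using that $\check\psi^{\tt s}$ is real), the second identity follows from the first by applying $J$. So the heart of the matter is $H=K(0,\infty)$.

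Next I would transfer the description of $H$ from the rapidity to the lightray picture through $p=e^\te$. Under this change of variables the strip $\Strip_\pi$ is mapped onto the upper half-plane $\Cl_+$, and the defining symmetry $\overline{\psi(\te+i\pi)}=\psi(\te)$ from Lemma~\ref{Lemma:CharacterizeH} translates into the statement that the skew-symmetric extension $\psi^{\tt s}$ is the single boundary value along all of $\Rl$ of one function analytic in $\Cl_+$, carrying the Hardy-type control inherited from $\mathbb{H}^2(\Strip_\pi)$. In other words, $\psi\in H$ if and only if $\psi^{\tt s}$ is the boundary value from the upper half-plane of such an analytic function.

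I would then match this with the $K$-spaces in two inclusions. For $K(0,\infty)\subset H$: if $\psi\in K(0,b)$ then $\check\psi^{\tt s}$ has compact support in $[0,b]$, so by the elementary direction of Paley--Wiener $\psi^{\tt s}$ extends to an entire function of exponential type, in particular analytic in $\Cl_+$, whence $\psi\in H$; since $H$ is closed, the closure $K(0,\infty)=\overline{\bigcup_b K(0,b)}$ also lies in $H$. For $H\subset K(0,\infty)$: given $\psi\in H$, the boundary-value-from-$\Cl_+$ property forces $\supp\check\psi^{\tt s}\subset[0,\infty)$, and I would recover $\psi$ as an $\Hil$-limit of truncations obtained by multiplying $\check\psi^{\tt s}$ by real cutoffs supported in $[0,b]$ (real cutoffs preserve the skew-symmetry because $\check\psi^{\tt s}$ is a real distribution), each truncation lying in $K(0,b)$. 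This yields $H=K(0,\infty)$, and the reflected argument gives $H'=K(-\infty,0)$.

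The main obstacle is making the Paley--Wiener step rigorous in the present weighted setting: the ambient measure is $dp/|p|$ rather than Lebesgue measure, so $\psi^{\tt s}\in L^2(\Rl,dp/|p|)$ is only a tempered distribution (with decay at $p=0$ forced by the weight), and its Fourier transform, as well as the boundary-value characterization of the half-plane, must be handled distributionally. The cleanest route is to pass to the variable $\psi^{\tt s}(p)/p$, equivalently to work with the position-space derivative and write $\psi=\widehat{f'}$ with $-i\psi(p)/p=\hat f(p)$ for a real $f\in L^2(\Rl,dx)$, as already signalled by the appearance of $\widehat{f'}$ in Proposition~\ref{Proposition:CharacterizationOfHI}; the extra factor of $p$ converts the question into a standard $L^2(\Rl,dx)$ Paley--Wiener statement. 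One then still has to verify that the truncation estimates hold in the $dp/p$ norm and that the closure defining $K(0,\infty)$ indeed exhausts all functions with Fourier support in $[0,\infty)$.
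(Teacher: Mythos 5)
Your overall architecture is genuinely different from the paper's. The paper proves this lemma with a \emph{soft} argument: $I\mapsto K(I)$ is isotonous and covariant under $U$ (translations shift Fourier supports, dilations $\Delta^{it}$ scale them), so $K(0,\infty)$ is a closed real subspace (Lemma~\ref{Lemma:KIClosed}) with $T(x)K(0,\infty)\subset K(0,\infty)$ for $x\geq 0$ and $\Delta^{it}K(0,\infty)=K(0,\infty)$; the irreducibility-based uniqueness result, Proposition~\ref{unique}, then gives $K(0,\infty)=\alpha H$ with $|\alpha|=1$, and evaluating on the explicit elements $\widehat{f'}|_{\Rl_+}$, $f\in C^\infty_{c,\Rl}(\Rl_+)$ (whose inverse transforms are real) forces $\alpha=\pm 1$. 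Your preliminary reductions are fine: $(T(a)\psi)^{\tt s}(p)=e^{iap}\psi^{\tt s}(p)$ does give $T(a)K(0,\infty)=K(a,\infty)$, and $(J\psi)^{\tt s}=\overline{\psi^{\tt s}}$ does reduce the left half-line to the right one. The inclusion $K(0,\infty)\subset H$ can also be completed along the lines you indicate: for $\psi\in K(0,b)$ one has $\psi^{\tt s}$ entire with $\psi^{\tt s}(0)=0$, hence $\psi^{\tt s}(p)=ip\hat f(p)$ with real $f\in L^2([0,b])$, and after smoothing one verifies the $\mathbb H^2(\Strip_\pi)$ condition of Lemma~\ref{Lemma:CharacterizeH} (note that bare analyticity in $\Cl_+$, which is all you invoke, is \emph{not} enough — the Hardy bound along rays must be checked, and it fails for crude estimates unless you exploit decay of $\hat f$).

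The genuine gap is in the converse inclusion $H(0,\infty)\subset K(0,\infty)$. First, your assertion that the boundary-value property forces $\supp\check\psi^{\tt s}\subset[0,\infty)$ is exactly the nontrivial distributional Paley--Wiener statement in the weighted setting; in the paper this is Corollary~\ref{Hsupp}, which is \emph{derived from} the present lemma (via the density \eqref{eq:densS'} extracted from its proof), so you cannot cite it without circularity and you do not prove it. Second, and more seriously, your truncation scheme fails as stated: if you multiply $\check\psi^{\tt s}$ by a real cutoff supported in $[0,b]$, the resulting function is entire but generically no longer vanishes at $p=0$, and a continuous $\psi^{\tt s}$ with $\psi^{\tt s}(0)\neq 0$ is not locally square-integrable with respect to $dp/|p|$ — so the truncations need not lie in $\Hil$ at all, let alone converge in the $dp/p$ norm. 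Your proposed remedy, writing $-i\psi(p)/p=\hat f(p)$ with real $f\in L^2$, is only available for \emph{bounded} intervals (where $\psi^{\tt s}(0)=0$ follows from entireness plus integrability); a general $\psi\in H$ has only $L^2(dp/p)$ behavior at $p=0$ (e.g.\ boundary decay $|\psi(\te)|\sim|\te|^{-1}$ as $\te\to-\infty$ in the rapidity picture), so $\psi^{\tt s}(p)/p$ need not be in $L^2(\Rl,dp)$ and no such $f$ exists. What your plan ultimately requires is the density of $\{\widehat{f'}|_{\Rl_+}:f\in C^\infty_{c,\Rl}((0,\infty))\}$ in $H$ — but that density is essentially equivalent to the lemma itself, and the paper obtains it as a by-product of the uniqueness argument rather than as an input. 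Either adopt the Proposition~\ref{unique} route, or supply an honest approximation argument replacing the naive truncation; as written, the proof does not close.
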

\begin{proof}
	From the definition of the $K(I)$ it is clear that $I\mapsto K(I)$ is isotonous, and covariant under $U$. Thus $T(x)K(0,\infty)\subset K(0,\infty)$ for $x\geq0$ and $\Delta^{it}K(0,\infty)=K(0,\infty)$ for all $t\in\Rl$, and we can use the uniqueness statement of Proposition \ref{unique} to conclude $K(0,\infty) = \alpha H(0,\infty)$ for some $\alpha\in\Cl$ with $|\alpha|=1$. 
	
	Now $H(0,\infty)$ contains all functions of the form $\psi_f:=\widehat{f'}|_{\Rl_+}$, where $f\in C_{c,\Rl}^\infty(\Rl_+)$ and the dash denotes the derivative; this can be quickly checked on the basis of Lemma~\ref{Lemma:CharacterizeH-Appendix}. But then one sees that the inverse Fourier transform of $(\alpha\cdot\psi_f)^{\tt s}$ has support in $\Rl_+$ if and only if $\alpha$ is real, i.e. if $\alpha=\pm1$. Therefore $K(0,\infty) = \pm H(0,\infty)=H(0,\infty)$, and similarly  $H(-\infty, 0) = K(-\infty, 0)$. The statement for general $a$ follows by translation covariance.
\end{proof}

\noindent{\em Remark:} The argument in the above Lemma \ref{Lemma:H=KOnHalflines} also shows that
\begin{align}\label{densS}
	H(a,\infty) 
	&=
	\big\{ \widehat{f'}|_{\Rl_+}\,:\, f\in C_c^\infty(\mathbb R)  \text{ real and }\supp f \subset [a,\infty)\big\}^{\|\cdot\|_{\Hil}}\,,
	\\
	H(-\infty,a) 
	&=
	\big\{ \widehat{f'}|_{\Rl_+}\,:\, f\in C_c^\infty(\mathbb R)  \text{ real and }\supp f \subset (-\infty,a]\big\}^{\|\cdot\|_{\Hil}}\,.
	\label{eq:densS'}
\end{align}

\begin{corollary}\label{Hsupp}
	Let $\psi\in\Hil$. Then $\psi\in H$ if and only if $\supp\check\psi^{\tt s}\subset\Rl_+$.
\end{corollary}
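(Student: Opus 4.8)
The plan is to identify $H$ with the half-line subspace $H(0,\infty)$ and invoke Lemma~\ref{Lemma:H=KOnHalflines}, which gives $H=H(0,\infty)=K(0,\infty)$, together with the definition $K(0,\infty)=\overline{\bigcup_{b>0}K(0,b)}$, where $K(0,b)=\{\psi\in\Hil:\supp\check\psi^{\tt s}\subset[0,b]\}$. Writing $\tilde K:=\{\psi\in\Hil:\supp\check\psi^{\tt s}\subset\Rl_+\}$, the corollary asserts $H=\tilde K$, and I would establish the two inclusions separately.

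For $H\subset\tilde K$: every element of $\bigcup_{b>0}K(0,b)$ has support of $\check\psi^{\tt s}$ in $\Rl_+$ by definition, and this property is stable under $\Hil$-limits. Indeed, the argument already used in the proof of Lemma~\ref{Lemma:KIClosed} shows that $\psi_n\to\psi$ in $\Hil$ forces $\psi_n^{\tt s}\to\psi^{\tt s}$ in $\Ss'(\Rl)$, hence $\check\psi_n^{\tt s}\to\check\psi^{\tt s}$ in $\Ss'(\Rl)$; since $\Rl_+$ is closed, testing against any $\phi\in C_c^\infty(\Rl\setminus\Rl_+)$ shows support contained in $\Rl_+$ passes to the limit. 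Taking closures yields $H=K(0,\infty)\subset\tilde K$.

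For $\tilde K\subset H$: this reduces to the density of $\bigcup_{b>0}K(0,b)$ in $\tilde K$. Given $\psi\in\tilde K$, I would set $u:=\check\psi^{\tt s}$, a real tempered distribution with $\supp u\subset\Rl_+$, and approximate it from the right by cutting off its support to $[0,n]$ and regularizing with a one-sided mollifier supported in $[0,\eps]$, which preserves both reality and $\supp\subset\Rl_+$. I would arrange the approximants in the form $\widehat{f_n'}|_{\Rl_+}$ with real $f_n\in C_c^\infty(\Rl)$, $\supp f_n\subset[0,\infty)$ --- exactly the cyclic generators of $H$ appearing in \eqref{densS}. These lie in $\bigcup_b K(0,b)$, and the factor $p$ in $\widehat{f_n'}(p)=-ip\,\hat f_n(p)$ cancels the singular weight $dp/p$ at the origin, so that $\widehat{f_n'}|_{\Rl_+}\in\Hil$ with the correct vanishing at $p=0$ built in automatically. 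The main obstacle is upgrading the evident $\Ss'$-convergence of these approximants to convergence in the $\Hil=L^2(\Rl_+,dp/p)$ norm: since $dp/p$ is singular and not translation invariant, the tail $p\to\infty$ (controlled by the right-cutoff together with $\psi\in\Hil$) and the behavior near $p=0$ (the delicate part, tamed precisely by the $\widehat{f_n'}$ form) must be handled separately.

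Alternatively, and perhaps more cleanly, I would exploit that $H$ is standard, so $H=(H')'$, together with $H'=H(-\infty,0)=K(-\infty,0)\subset\{\eta:\supp\check\eta^{\tt s}\subset(-\infty,0]\}$ (Lemma~\ref{Lemma:H=KOnHalflines} and the inclusion of the second paragraph applied to $H'$). Then $\psi\in\tilde K$ lies in $H$ as soon as $\im\langle\psi,\eta\rangle=0$ for every $\eta$ with $\supp\check\eta^{\tt s}\subset(-\infty,0]$. A direct change of variables gives $\im\langle\psi,\eta\rangle=\tfrac{1}{2i}\int_\Rl\overline{\psi^{\tt s}(p)}\,\eta^{\tt s}(p)\,\tfrac{dp}{p}$, and since $\supp\check\psi^{\tt s}\subset[0,\infty)$ and $\supp\check\eta^{\tt s}\subset(-\infty,0]$, both $\overline{\psi^{\tt s}}$ and $\eta^{\tt s}$ are boundary values of functions holomorphic in the lower half plane; a Paley--Wiener/contour argument then forces the integral to vanish. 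On this route the obstacle shifts to justifying the contour deformation rigorously in the presence of the integrable singularity of $1/p$ at the origin.
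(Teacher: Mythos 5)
Your first inclusion, $H=K(0,\infty)\subset\tilde K$, is exactly the paper's argument (stability of the support condition under $\Hil$-limits via the $\Ss'(\Rl)$-convergence from Lemma~\ref{Lemma:KIClosed}) and is fine. The converse is where both of your routes stop short. Route~1 founders on precisely the obstacle you flag and do not resolve: $\Hil$-convergence of approximants $\widehat{f_n'}|_{\Rl_+}$ to $\psi$ amounts to $\int_0^\infty p\,|\hat f_n(p)-\hat g(p)|^2\,dp\to0$, where $-i\psi^{\tt s}(p)/p=\hat g(p)$ --- a homogeneous $H^{1/2}$-type convergence of the antiderivative of $\check\psi^{\tt s}$, which truncation-plus-mollification gives you only in $\Ss'$, not in this weighted norm. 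As written, that direction is a genuine gap, and the paper does not attempt this route at all.

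Your route~2, on the other hand, has exactly the paper's skeleton --- duality $\psi\in H(-\infty,0)'=H(0,\infty)=H$, with $H(-\infty,0)=K(-\infty,0)$ from Lemma~\ref{Lemma:H=KOnHalflines} --- but you leave the decisive step to a Paley--Wiener/contour deformation for general $\eta\in H'$, which is both unjustified (here $\psi^{\tt s},\eta^{\tt s}$ live only in weighted $L^2$ with tempered boundary behavior, so the relevant Hardy-type classes and the treatment of the $1/p$ singularity would have to be built from scratch) and unnecessary. The paper tests $\psi$ only against the dense generators of $H(-\infty,0)$ from \eqref{eq:densS'}, i.e. $\eta=\widehat{f'}|_{\Rl_+}$ with $f\in C_c^\infty(\Rl)$ real, $\supp f\subset\Rl_-$. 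Then $\overline{\widehat{f'}(p)}=ip\,\overline{\hat f(p)}$ on $\Rl$ cancels the weight exactly, and skew-symmetry of $\psi^{\tt s}$ and $\widehat{f'}$ gives
\begin{align*}
	\im\langle\widehat{f'},\psi\rangle
	=\frac{1}{2i}\int_{-\infty}^{\infty}\psi^{\tt s}(p)\,\overline{\widehat{f'}(p)}\,\frac{dp}{p}
	=\frac{1}{2}\int_{-\infty}^{\infty}\psi^{\tt s}(p)\,\overline{\hat f(p)}\,dp
	=\frac{1}{2}\int_{-\infty}^{\infty}\check\psi^{\tt s}(x)f(x)\,dx
	=0\,,
\end{align*}
the last pairing being a perfectly well-defined evaluation of the tempered distribution $\check\psi^{\tt s}$ (supported in $\Rl_+$) on a Schwartz function supported in $\Rl_-$: no singular integral, no contour, no Hardy spaces. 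Ironically, you observed the key cancellation yourself in route~1 (``the factor $p$ in $\widehat{f_n'}(p)=-ip\hat f_n(p)$ cancels the singular weight''); deploying it in route~2, against the dense generators rather than general $\eta$, is all that is needed to close the proof.
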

\begin{proof}
	If $\psi\in H=K(0,\infty)$, there is a sequence $\{\psi_n\}_{n\in\Nl}\subset\bigcup_{a>0}K(0,a)$ such that $\|\psi_n-\psi\|\to 0$. Since the support of $\check\psi_n^{\tt s}$ is contained in $\Rl_+$, also the support of $\check\psi^{\tt s}$ is contained in $\Rl_+$ as in the proof of Lemma~\ref{Lemma:KIClosed}. Conversely, assume that $\check\psi^{\tt s}$ has  support  in $\mathbb R_+$. If $f\in C_c^\infty(\mathbb R)$ is real, with $\supp f\subset\Rl_-$,
	\[
		\int_{-\infty}^{\infty} \psi^{\tt s}(p)\overline{\widehat{f'}(p)}\frac{dp}{p}
		=
		i\int_{-\infty}^{\infty} \psi^{\tt s}(p)\overline{\widehat{f}(p)}\,dp
		=
		i\int_{-\infty}^{\infty}\check{\psi}^{\tt s}(x)f(x)dx = 0 \ .
	\]
	Since ${\psi}^{\tt s}$ and $\widehat{f'}$ are skew-symmetric, we then have
	\[ 
		\im \langle \widehat{f'},\psi \rangle
		=
		\im \int_{0}^{\infty} {\psi}(p)\overline{\widehat{f'}(p)}\frac{dp}{p}
		=\frac{1}{2i}\int_{-\infty}^{\infty} \psi^{\tt s}(p)\overline{\widehat{f'}(p)}\frac{dp}{p}
		=
		0\ .
	\]
	So $\psi\in H(-\infty, 0)' = H(0,\infty)$ by the density \eqref{eq:densS'}.
\end{proof}

By the definition of the subspaces $K(a,\infty)$, $K(-\infty,b)$, and by Lemma \ref{Lemma:H=KOnHalflines}, we see that  
\begin{align*}
	K(a,b)
	\subset
	K(a,\infty)\cap K(-\infty, b)
	=
	H(a,\infty)\cap H(-\infty, b)
	=
	H(a,b)
	,\qquad a<b\,,
\end{align*}
i.e. $I\mapsto K(I)$ is a subnet of $H\mapsto H(I)$. Indeed the two nets coincide:
\begin{lemma}\label{Lemma:K=H}
	$K(a,b) = H(a,b)$, $a<b$.
\end{lemma}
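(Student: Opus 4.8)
The inclusion $K(a,b)\subset H(a,b)$ is already in hand from the computation displayed just before the statement, so the plan is to establish only the reverse inclusion $H(a,b)\subset K(a,b)$. The strategy is to reduce membership in these spaces to a support condition on the skew-symmetric extension $\check\psi^{\tt s}$ and then to intersect the two half-line conditions.

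First I would record, from the chain of equalities preceding the statement (which uses the definition of the net together with Lemma~\ref{Lemma:H=KOnHalflines}), the identity $H(a,b)=K(a,\infty)\cap K(-\infty,b)$. Thus it suffices to show that every $\psi\in K(a,\infty)\cap K(-\infty,b)$ already lies in $K(a,b)$, i.e. that $\supp\check\psi^{\tt s}\subset[a,b]$.

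The key step is to read the support of $\check\psi^{\tt s}$ off from membership in the half-line spaces. By definition $K(a,\infty)=\overline{\bigcup_{a<c}K(a,c)}$, and each $K(a,c)$ consists of functions with $\supp\check\psi^{\tt s}\subset[a,c]\subset[a,\infty)$; hence $\psi\in K(a,\infty)$ forces $\supp\check\psi^{\tt s}\subset[a,\infty)$, provided this support condition is stable under $\Hil$-limits. This stability is exactly the content of the argument of Lemma~\ref{Lemma:KIClosed}, which applies verbatim with the bounded closed set $\overline I$ replaced by the closed half-line $[a,\infty)$: convergence $\psi_n\to\psi$ in $\Hil$ forces $\psi_n^{\tt s}\to\psi^{\tt s}$ in $\Ss'(\Rl)$, hence $\check\psi_n^{\tt s}\to\check\psi^{\tt s}$ in $\Ss'(\Rl)$, and a distributional limit of distributions supported in a fixed closed set is again supported in that set. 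Symmetrically, $\psi\in K(-\infty,b)$ gives $\supp\check\psi^{\tt s}\subset(-\infty,b]$. Intersecting the two conditions yields $\supp\check\psi^{\tt s}\subset[a,\infty)\cap(-\infty,b]=[a,b]$, i.e. $\psi\in K(a,b)$, which is the desired reverse inclusion; combined with $K(a,b)\subset H(a,b)$ this gives the equality.

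The only genuine point that needs verification is therefore the closedness of the half-line support conditions, which is the mild upgrade of Lemma~\ref{Lemma:KIClosed} just indicated and is the main (and modest) obstacle. No new analytic input is required here — in particular the explicit characterizations of $H(I)$ in terms of entire functions of exponential type are not needed — the proof being pure support-and-intersection bookkeeping on the skew-symmetric extensions.
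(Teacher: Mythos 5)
Your proposal is correct and follows essentially the same route as the paper: the paper's proof likewise writes $H(a,b)=H(a,\infty)\cap H(-\infty,b)=K(a,\infty)\cap K(-\infty,b)$ via Lemma~\ref{Lemma:H=KOnHalflines} and intersects the two support conditions on $\check\psi^{\tt s}$ to land in $K(a,b)$. The one step you single out --- stability of the half-line support condition under $\Hil$-limits, by running the argument of Lemma~\ref{Lemma:KIClosed} for the closed sets $[a,\infty)$ and $(-\infty,b]$ --- is exactly how the paper handles it too (made explicit in the proof of Corollary~\ref{Hsupp}), so you have merely filled in a step the paper leaves implicit.
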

\begin{proof}
	Let $\psi\in \Hil$ belong to $H(a,b)\subset\Hil$. As $\psi\in H(a,\infty)=K(a,\infty)$,  the inverse Fourier transform $\check\psi^{\tt s}$ of $\psi^{\tt s}$ has  support in $[a,\infty)$. As $\psi$ also belongs to $H(-\infty,b)= K(-\infty,b)$ the support of $\check\psi^{\tt s}$ is contained in $(-\infty,b]$ too, thus $\check\psi^{\tt s}$ has support contained in $[a,b]$, namely $\psi\in K(a,b)$.
\end{proof}

Let $\psi\in K(I)$ for some interval $I$. Then, by the Paley-Wiener theorem, $\psi$ is the restriction of an entire analytic function to $\Rl_+$. Since $\check{\psi}$ is real, we also see that $\psi^{\tt s}$ coincides with the restriction of this entire function to $\Rl$. Furthermore, as $\psi\in  L^2(\mathbb R_+ , dp/p)$, we also have $\psi^{\tt s}(0)=0$. Consequently
\[
	p\mapsto i\frac{\psi^{\tt s}(p)}{p}\in L^2(\mathbb R, dp)\ .
\]
Now  the inverse Fourier transform of $p\mapsto i\psi^{\tt s}(p)/p$ has support in $\overline{I}$ too, because its derivative is $\check\psi^{\tt s}$ and $\psi\in L^2(\Rl_+,dp/p)$. Thus any $\psi\in H(I)$ can be represented in the form $\psi(p)=-ip\fhat(p)$ for some real $f\in L^2(\overline{I},dx)$.

By the Paley-Wiener theorem for $L^2$ \cite[Thm.~4.1]{SteinWeiss:1971} we then have with a suitable constant $C>0$ the bound
\[
	|\psi(p)| \leq C |p| e^{r\,|\im p|}\ , \quad p\in\mathbb C\ ,
\]
if $I = I_r = [-r, r]$. Summarizing, we have the following characterization of $H(I_r)$.

\begin{proposition}\label{Proposition:CharacterizationOfHI-Appendix}
	Let $(H,T)$ be the one-dimensional irreducible standard pair, and $\psi\in\Hil = L^2(\mathbb R_+ , dp/p)$. The following are equivalent:
	\begin{enumerate}
		\item $\psi\in H(I_r)$.
		\item The inverse Fourier transform of $\psi^{\tt s}$ has support in $I_r$.
		\item $\psi$ extends to an entire analytic function such that $\psi(-\bar p) = \overline{\psi(p)}$ and
		$|\psi(p)| \leq C |p| e^{r|\im p|}$ for all $p\in\mathbb C$ and some constant $C>0$.
% 		\item $\psi$ extends to an entire analytic function such that $\psi(-\bar p) = \overline{\psi(p)}$ and for some $N\in\Nl$
% 		$|\psi(p)| \leq C (1 + |p|)^N e^{r|\im p|}$, $p\in\mathbb C$, for some constant $C>0$.
		\item $\psi$ extends to an entire analytic function such that $\psi(-\bar p) = \overline{\psi(p)}$, and of exponential type at most $r$, namely for all $\eps>0$ there exists $C_\eps>0$ such that $|\psi(p)| \leq C_\varepsilon  e^{(r+\varepsilon)|p|}$,  $p\in\mathbb C$.
	\end{enumerate}
	For $\psi\in H(I)$, the function $\Rl_+\ni p\mapsto -i\psi(p)/p$ is the restriction to $\Rl_+$ of the Fourier transform of a real function in $L^2(\Rl,dx)$ with support in the closure of $I$, and
	\begin{align}\label{eq:HIFourier1d}
		\{\widehat{f'}|_{\Rl_+}\,:\,f\in C_{c,\Rl}^\infty(I)\}
		\subset 
		H(I)
	\end{align}
	is cyclic.
\end{proposition}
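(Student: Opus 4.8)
The plan is to read off the four equivalences from the chain of lemmas already assembled in this appendix, and then to isolate the one genuinely new ingredient, which is the cyclicity assertion. For the equivalence $i)\Leftrightarrow ii)$ there is essentially nothing left to do: Lemma~\ref{Lemma:K=H} identifies $H(I_r)$ with $K(I_r)$, and by the definition of $K$ the condition $\psi\in K(I_r)$ means precisely that $\check\psi^{\tt s}$ has support in $\overline{I_r}$. The implication $ii)\Rightarrow iii)$ is the Paley--Wiener argument already carried out in the discussion preceding the proposition: one uses $\psi^{\tt s}(0)=0$ to pass to $p\mapsto i\psi^{\tt s}(p)/p\in L^2(\Rl,dp)$, whose inverse transform is a primitive of $\check\psi^{\tt s}$ and hence again supported in $\overline{I_r}$, so that the $L^2$ Paley--Wiener theorem yields the entire extension together with the bound $|\psi(p)|\le C|p|e^{r|\im p|}$; the reality relation $\psi(-\bar p)=\overline{\psi(p)}$ is inherited from the skew-symmetry of $\psi^{\tt s}$, equivalently from $\check\psi^{\tt s}$ being real. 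Finally $iii)\Rightarrow iv)$ is immediate, since $|p|e^{r|\im p|}\le C_\eps e^{(r+\eps)|p|}$ for every $\eps>0$.

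It then remains to close the loop with $iv)\Rightarrow ii)$. Here I would first note that the reality condition forces the skew-symmetric extension to agree with the entire function itself on the real axis, $\psi^{\tt s}=\psi|_\Rl$, so that $\check\psi^{\tt s}=\check\psi$. Since $\psi$ is entire of exponential type at most $r$ and vanishes at the origin (as $\psi\in L^2(\Rl_+,dp/p)$ forces $\psi(0)=0$), the quotient $\psi/p$ is again entire of exponential type at most $r$, and a short estimate splitting the integral at $|p|=1$ shows $\psi/p\in L^2(\Rl,dp)$. The $L^2$ Paley--Wiener theorem then gives $\supp\widehat{\psi/p}\subset\overline{I_r}$, and since multiplication by $p$ corresponds to differentiation of the transform, $\check\psi=\check\psi^{\tt s}$ has support in $\overline{I_r}$ as well, which is $ii)$.

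For the concluding statements, the representation $\psi(p)=-ip\,\widehat f(p)$ with real $f\in L^2(\overline{I_r})$ is exactly what was extracted in the discussion paragraph, so that $-i\psi(p)/p$ is the Fourier transform of the real function $-f$ supported in $\overline{I_r}$, as asserted. For the inclusion \eqref{eq:HIFourier1d} one checks that for $f\in C^\infty_{c,\Rl}(I_r)$ the function $\widehat{f'}|_{\Rl_+}=-ip\,\widehat f(p)|_{\Rl_+}$ satisfies $iv)$ and hence lies in $H(I_r)$.

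The crux is the cyclicity of \eqref{eq:HIFourier1d}, which I expect to be the main obstacle. Suppose $\eta\in\Hil$ is orthogonal to the complex span; then $\langle\eta,\widehat{f'}\rangle=0$ for all complex $f\in C^\infty_c(I_r)$, which after cancelling the factor $p$ against $dp/p$ reads $\int_0^\infty\overline{\eta(p)}\,\widehat f(p)\,dp=0$. Extending $\overline\eta$ by zero to a tempered distribution $u$ supported in $[0,\infty)$, this says that $\mathcal F u$ vanishes on the interval $I_r$. The point is that a tempered distribution supported in a half-line has a Fourier transform that is the boundary value of a function holomorphic in a half-plane, and such a boundary value cannot vanish on an open interval unless it vanishes identically. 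This is a Hardy-space uniqueness statement; I would make it rigorous by dividing out a polynomial factor such as $(\,\cdot\,+i)^{-N}$ to reduce the polynomially bounded situation to the classical $\mathbb H^2$ case, where a nonzero function has $\log|\,\cdot\,|$ integrable on the boundary and therefore cannot vanish on a set of positive measure. This forces $u=0$, hence $\eta=0$, and proves cyclicity. The bookkeeping needed to control the mismatch between the measures $dp$ and $dp/p$ (so that $\overline\eta\,\mathbbm{1}_{(0,\infty)}$ genuinely defines a tempered distribution to which the Hardy argument applies) is the fiddly part I would take most care over.
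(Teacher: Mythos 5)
Your handling of the equivalences $i)$--$iv)$, the representation statement, and the inclusion \eqref{eq:HIFourier1d} coincides in all essentials with the paper's own proof: the paper also gets $i)\Leftrightarrow ii)$ from the identification $K(I)=H(I)$, gets $ii)\Rightarrow iii)$ from the $L^2$ Paley--Wiener discussion preceding the proposition, and for $iv)\Rightarrow ii)$ introduces exactly your auxiliary function $\xi(p)=i\psi^{\tt s}(p)/p$ (entire of exponential type at most $r$ since $\psi^{\tt s}(0)=0$, and in $L^2(\Rl,dp)$ by the split at $|p|=1$), applies Paley--Wiener, and recovers $\check\psi^{\tt s}$ as the derivative of $\check\xi$. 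Where you genuinely diverge is the cyclicity of \eqref{eq:HIFourier1d}. The paper disposes of it in one sentence: every $\psi\in H(I)$ is $-ip\hat f(p)$ with real $f\in L^2(\overline{I},dx)$, and $C_c^\infty(I)$ is dense in $L^2(\overline{I},dx)$ --- implicitly combined with the cyclicity of $H(I)$ from Proposition~\ref{Proposition:CyclicityForHNet}. Read literally this glosses a point you avoid entirely: $f\mapsto\widehat{f'}|_{\Rl_+}$ is \emph{not} continuous from $L^2(\overline{I})$ to $\Hil$ (the relevant norm is of $H^{1/2}$ type), so the density claim needs e.g.\ a mollification $f\ast\chi_\eps$ supported in a slightly larger interval, with $\widehat{f'}\,\hat\chi_\eps\to\widehat{f'}$ in $\Hil$ by dominated convergence. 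Your duality argument --- pass to $\int_0^\infty\overline{\eta(p)}\hat f(p)\,dp=0$, regard $u=\overline\eta\,\mathbb{1}_{(0,\infty)}$ as a tempered distribution supported in a half-line, and invoke uniqueness for boundary values of half-plane holomorphic functions vanishing on an open interval --- is a correct and self-contained alternative, which moreover does not lean on Proposition~\ref{Proposition:CyclicityForHNet}; it is the same style of pairing the paper itself uses in Corollary~\ref{Hsupp}.

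One caveat on your rigorization of the uniqueness step: dividing the Fourier--Laplace extension by $(z+i)^N$ tames polynomial growth at infinity, but it does not remove the possible $|\im z|^{-k}$ blow-up at the boundary that a general distributional boundary value carries, so the reduction to the classical ${\mathbb H}^2$ log-integrability argument is not quite as stated. In your concrete situation it is repairable, and this is precisely the ``fiddly'' bookkeeping you flagged: since $p\mapsto\eta(p)/(1+p)$ lies in $L^2(\Rl_+,dp)$ (the paper records this for $\psi^{\tt s}$), the function $v:=u/(p+i)$ is in $L^2(\Rl,dp)$ with support in $[0,\infty)$, hence $\check v$ is the boundary value of a genuine ${\mathbb H}^2$ function $V$ of a half-plane; the vanishing of $\check u=i\check v'+i\check v$ on $I_r$ gives the ODE $\check v'+\check v=0$ there, so $\check v(x)=ce^{-x}$ on $I_r$, and Painlev\'e continuation of $V(z)-ce^{-z}$ across $I_r$ forces $V(z)=ce^{-z}$ globally, which violates the ${\mathbb H}^2$ bound unless $c=0$, whence $\eta=0$. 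Alternatively you may cite directly the one-sided edge-of-the-wedge/Painlev\'e theorem: a function holomorphic in a half-plane whose distributional boundary value vanishes on an open interval vanishes identically. With either patch your proof is complete.
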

\begin{proof}
	We have seen that $i)\Leftrightarrow ii)\Rightarrow iii)$, and $iii)\Rightarrow iv)$ is obvious. If $iv)$ holds, also $\xi(p):=i\psi^{\tt s}(p)/p$ is an entire function of exponential type at most $r$ because $\psi^{\tt s}(0)=0$. Thus, by the Paley-Wiener theorem for $L^2$ \cite[Thm.~4.1]{SteinWeiss:1971}, $\supp\check{\xi}\subset I_r$ and hence also its derivative $\check{\psi}^{\tt s}$ is supported in $I_r$, i.e. $ii)$ holds.
		
	The statement about the representation of functions in $H(I)$ by Fourier transforms of $L^2(\overline{I},dx)$ has been shown before, and implies \eqref{eq:HIFourier1d} by the density of $C_c^\infty(I)\subset L^2(\overline{I},dx)$.
\end{proof}

\subsection{The standard subspace of a double cone}\label{Section:Appendix2d}

We next consider the structure of standard subspaces for the two-dimensional standard pair with associated massive irreducible representation $U_m$. We will mainly be working in the momentum picture. The characterization of the double cone spaces $H^m(\OO_r)$ is partly similar to the characterization of the interval spaces $H(I_r)$ obtained in Proposition~\ref{Proposition:CharacterizationOfHI-Appendix}; we will be brief about the analogous parts of the proofs.

As in the preceding section, any $\psi\in\Hil=L^2(\Rl,\om_m(p_1)^{-1}dp_1)$ can be considered as a tempered distribution. Making use of the decomposition of vectors $\psi\in\Hil$ into the $\Gamma$-invariant components $\psi_\pm$ \eqref{eq:psi+-MT}, we define
\begin{align}
	K(\OO_{a,b})
	:=
	\{\psi\in\Hil\,:\,\supp\check\psi_\pm\subset [a,b]\}\,,
\end{align}
where $\OO_{a,b}\subset\Rl^2$ is the double cone with base $[a,b]$ on the time zero line. Taking into account the continuity of $\psi\mapsto\psi_\pm$, one proves as in Lemma~\ref{Lemma:KIClosed} that $K(\OO_{a,b})$ is a {\em closed} real linear subspace of $\Hil$.

We also introduce the real subspaces
\begin{align}
	L(\OO)
	:=
	\{p_1\mapsto \fti(\om_m(p_1),p_1)\,:\,f\in C_{c,\Rl}^\infty(\OO)\}\,,
\end{align}
where $\OO\subset\Rl^2$ is any double cone or wedge. It is clear from this definition and the form of $U_m$ that $\OO\mapsto L(\OO)$ satisfies isotony and transforms covariantly under $U_m$. We also recall that $L(\OO_{a,b})\subset K(\OO_{a,b})$ is a dense subspace of $K(\OO_{a,b})$, as can be shown by investigating the Cauchy problem for the Klein-Gordon equation.

We then define subspaces associated with the spatially translated wedges $\pm W_2+c:=\pm W_2+({0\atop c})$, $c\in\Rl$, as
\begin{align}
	K(W_2+c):=\overline{\bigcup_{c<a<b}K(\OO_{a,b})}^{\|\cdot\|}
	\,,\qquad
	K(-W_2+c):=\overline{\bigcup_{a<b<c}K(\OO_{a,b})}^{\|\cdot\|}
	\,.
\end{align}
Similarly to Lemma~\ref{Lemma:H=KOnHalflines} and Corollary~\ref{Hsupp}, we find
\begin{lemma}
	\begin{enumerate}
		\item $H(\pm W_2+c)=K(\pm W_2+c)$ for all $c\in\Rl$.
		\item Let $\psi\in\Hil$. Then $\psi\in H$ if and only if $\supp\check\psi_\pm\subset\Rl_+$.
	\end{enumerate}
\end{lemma}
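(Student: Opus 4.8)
The plan is to mirror the one-dimensional arguments of Lemma~\ref{Lemma:H=KOnHalflines} and Corollary~\ref{Hsupp}, with the skew-symmetric extension $\psi^{\tt s}$ replaced by the $\Gamma$-invariant Cauchy data $\psi_\pm$ of \eqref{eq:psi+-MT}. The basic support inclusion underlying everything is the following: if $\psi\in K(W_2+c)$, then $\supp\check\psi_\pm\subset[c,\infty)$ (and dually $\supp\check\psi_\pm\subset(-\infty,c]$ for $K(-W_2+c)$). This is proved exactly as in Lemma~\ref{Lemma:KIClosed}: an approximating sequence $\psi_n\in K(\OO_{a_n,b_n})$ with $c<a_n<b_n$ has $\supp\check\psi_{n,\pm}\subset[a_n,b_n]$, and since $\psi\mapsto\psi_\pm$ is continuous, the convergence $\psi_n\to\psi$ in $\Hil$ forces $\check\psi_{n,\pm}\to\check\psi_\pm$ in $\Ss'(\Rl)$, so the support passes to the closed half-line in the limit. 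In particular $K(W_2)\neq\Hil$.

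For i) I would take $c=0$. By the isotony and $U_m$-covariance of $\OO\mapsto K(\OO)$ noted before the lemma, and since $W_2+x\subset W_2$ for $x\in W_2$ while the boosts $\la_2(t)$ map $W_2$ onto itself, one gets $T_m(x)K(W_2)\subset K(W_2)$ for $x\in W_2$ and $\Delta^{is}K(W_2)=K(W_2)$ for all $s$. The subspace $K(W_2)$ is closed and real, proper (just shown), and nonzero (it contains $L(W_2)$). Since $(H,T_m)$ is non-degenerate with irreducible associated representation $U_m$, Proposition~\ref{unique} applies and yields $K(W_2)=\alpha H$ for some $\alpha\in\Cl$, $|\alpha|=1$. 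To see that $\alpha$ is real, pick a nonzero $\psi_f\in L(W_2)$: it lies in $K(W_2)=\alpha H$, and, by the Cauchy-problem characterization of $H$ underlying Proposition~\ref{Proposition:CharacterizationOfHO} (namely $L(\OO)\subset H(\OO)$), also in $H$. Thus $\psi_f\in H$ and $\alpha^{-1}\psi_f\in H$; writing $\alpha^{-1}=a+ib$ and using that $H$ is real-linear gives $b\,(i\psi_f)\in H$, so $b\neq0$ would force $\psi_f\in H\cap iH=\{0\}$ by the separating property, a contradiction. Hence $\alpha=\pm1$ and $K(W_2)=\pm H=H=H(W_2)$. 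The dependence on $c$ follows by translation covariance of $K$, and the $-W_2$ statements follow by the analogous argument (Proposition~\ref{unique} in its evident left-wedge form applied to $K(-W_2)$ and $H'$, using $T_m(t)H'\subset H'$ for $t\in-W_2$).

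For ii) with $c=0$ the forward implication is precisely the support inclusion above combined with $H=K(W_2)$ from i). For the converse, suppose $\supp\check\psi_\pm\subset\Rl_+$; I would show $\psi\in H=(H')'=K(-W_2)'$ by checking that $\psi$ is symplectically orthogonal to the dense subspace $L(-W_2)\subset K(-W_2)=H'$. The key computation expresses the symplectic form through the Cauchy data: for $g=g_++i\om g_-\in L(\OO_{a,b})$, using that $\om$ is even and that $\langle u,v\rangle$ is real whenever $u,v$ satisfy the $\Gamma$-invariance $w(-p_1)=\overline{w(p_1)}$, the ``diagonal'' terms drop out and the $\om$-weights cancel against the measure $\om_m^{-1}dp_1$, leaving
\begin{align*}
	\im\langle g,\psi\rangle
	=
	\re\int_\Rl\overline{g_+(p_1)}\,\psi_-(p_1)\,dp_1
	-
	\re\int_\Rl\overline{g_-(p_1)}\,\psi_+(p_1)\,dp_1\,.
\end{align*}
By Plancherel these pairings equal $\re\langle\check g_+,\check\psi_-\rangle-\re\langle\check g_-,\check\psi_+\rangle$ in $L^2(\Rl,dx)$; since $g\in L(-W_2)$ has $\supp\check g_\pm\subset[a,b]\subset\Rl_-$ while $\supp\check\psi_\pm\subset\Rl_+$ by hypothesis, the supports are disjoint and the expression vanishes. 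Density of $L(-W_2)$ in $K(-W_2)=H'$ then gives $\psi\in(H')'=H$.

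The main obstacle I expect is the converse in ii): extracting the clean Cauchy-data form of the symplectic form and keeping it rigorous, since $\psi_\pm$ are in general only tempered distributions (which is exactly why one pairs against the smooth functions $g\in L(-W_2)$ and invokes the density $\overline{L(-W_2)}=K(-W_2)=H'$ rather than pairing $\psi_\pm$ with $\psi_\pm$ directly). The phase-fixing in i) is the other delicate point, but it reduces, via the separating property of $H$, to the elementary inclusion $L(W_2)\subset H\cap K(W_2)$.
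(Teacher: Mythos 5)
Your proposal is correct and follows essentially the same route as the paper: invariance of $K(W_2)$ under the boosts and the $W_2$-translations (obtained through the dense covariant subspaces $L(\OO)$), then Proposition~\ref{unique}, with the phase fixed by a nonzero vector in $L(W_2)\subset H\cap K(W_2)$ together with the separating property $H\cap iH=\{0\}$, and for ii) the identical Cauchy-data expression of the symplectic form paired against $L(-W_2)$, disjoint supports, and the density $L(-W_2)\subset K(-W_2)=H'$. The only point you compress is the claimed ``$U_m$-covariance of $\OO\mapsto K(\OO)$'': since boosts do not map time-zero-based double cones $\OO_{a,b}$ to such double cones, this covariance is not immediate for $K$, and the paper establishes $\Delta^{it}K(W_2)=K(W_2)$ precisely via the $L$-covariance, isotony ($\la_2(t)\OO_{a,b}\subset\OO_{a(t),b(t)}$), and the density $L(\OO_{a,b})\subset K(\OO_{a,b})$ that you invoke.
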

\begin{proof}
	{\em i)} As the spatial translations commute with $\Gamma$, it is clear that the net $K$ is covariant under spatial translations, and it is therefore sufficient to consider the case $c=0$.
	
	We first argue that $K(W_2)$ is invariant under the modular unitaries $\Delta^{it}$, $t\in\Rl$. In fact, we have $\Delta^{it}L(\OO)=L(\la_2(t)\OO)$ by the covariance of the $L$-net, and for $\OO=\OO_{a,b}$ with $0<a<b$, we find by isotony $0<a(t)<b(t)$ such that $\la_2(t)\OO_{a,b}\subset\OO_{a(t),b(t)}$. As $L(\OO_{a,b})\subset K(\OO_{a,b})$ is dense and $\Delta^{it}$ is unitary, this yields $\Delta^{it}K(\OO_{a,b})\subset K(\OO_{a(t),b(t)})$ and thus $\Delta^{it}K(W_2)=K(W_2)$, $t\in\Rl$.
	
	Next, we see from the spatial translational invariance of the $K$-net that $T_m(0,x_1)K(W_2)\subset K(W_2)$ for $x_1>0$. Since any $x\in W_2$ is of the form $x=\la_2(t)({0\atop a})$ for suitable $t\in\Rl$ and $a>0$, this implies via the commutation relations \eqref{eq:BorchersCommutationRelations} that $T_m(x)K(W_2)\subset K(W_2)$ for all $x\in W_2$.
	
	We are thus in the position to apply Proposition~\ref{unique} to conclude $K(W_2)=\alpha H(W_2)=\alpha H$ for some $\alpha\in\Cl$ with $|\alpha|=1$. Taking then $f\in C_{c,\Rl}^\infty(W_2)$, the function $p_1\mapsto\fti(\om_m(p_1),p_1)$ lies in $K(W_2)$ and in $H$ (the latter claim can be quickly checked on the basis of Lemma~\ref{Lemma:CharacterizeH-Appendix}~{\em i)} by changing coordinates according to $p_1=m\sinh\te$), and $\alpha \psi\in H$ if and only if $\alpha$ is real. Thus $\alpha=\pm1$ and $K(W_2)=H$. The proof for the left wedge is analogous.
	
	{\em ii)} As in Corollary~\ref{Hsupp}, it is clear that $\psi\in H$ implies $\supp\check\psi_\pm\subset\Rl_+$. For the converse direction, we first note that for any $\psi,\xi\in\Hil$, the scalar products $\langle\psi_+,\xi_+\rangle$ and $\langle i\om\psi_-,i\om\xi_-\rangle$ are real, because both their entries are eigenvectors of the anti unitary involution $\Gamma$ with the same eigenvalue $\pm1$, and the scalar products $\langle\psi_+,i\om\xi_-\rangle$, $\langle i\om\psi_-,\xi_+\rangle$ are imaginary because their entries are eigenvalues to different eigenvalues of $\Gamma$. Thus
	\begin{align*}
		{\rm Im}\langle\psi,\xi\rangle
		&=
		{\rm Im}\langle\psi_++i\om\psi_-,\xi_++i\om\xi_-\rangle
		\\
		&=
		\langle\psi_+,\om\xi_-\rangle-\langle\om\psi_-,\xi_+\rangle
		\\
		&=
		\int_\Rl dx_1\left({\check{\psi}_+(x_1)}\check{\xi}_-(x_1)-{\check{\psi}_-(x_1)}\check{\xi}_+(x_1)\right)
		\,.
	\end{align*}
	If we now choose $\psi\in\Hil$ with $\supp\check{\psi}_\pm\subset\Rl_+$, and $\xi\in L(-W_2)$, i.e. $\check{\xi}_\pm\in C_{c,\Rl}^\infty(\Rl_-)$, then  Im$\langle\psi,\xi\rangle=0$. As $L(-W_2)\subset K(-W_2)=H'$ is dense, this implies $\psi\in H''=H$.
\end{proof}

With this characterization of the wedge subspaces, we can now argue in complete analogy to Lemma~\ref{Lemma:K=H} that
\begin{align}
	K(\OO_{a,b})=T_m(0,a)H\cap T_m(0,b)H'=H^m(\OO_{a,b})\,,\qquad a<b\,.
\end{align}

We conclude by summarizing our characterization of the double cone subspaces for symmetric double cones $O_r=\OO_{-r,r}$ with $a=-r$, $b=r$.

\begin{proposition}\label{Proposition:CharacterizationOfHO-Appendix}
	Let $(H,T_m)$ be the two-dimensional irreducible standard pair with representation $U_m$, $m>0$, and $\psi\in\Hil = L^2(\Rl,\om_m(p_1)^{-1}dp_1)$. The following are equivalent:
	\begin{enumerate}
		\item $\psi\in H^m(\OO_r)$.
		\item The inverse Fourier transforms of $\psi_\pm$ \eqref{eq:psi+-MT} have support in $I_r$.
		\item The functions $\psi_\pm$ \eqref{eq:psi+-MT} extend to entire analytic functions such that $\psi_\pm(-\overline{p_1}) = \overline{\psi_\pm(p_1)}$ and $|\psi_\pm(p_1)| \leq C(1+|p_1|)^N  e^{r|{\rm Im}\,p_1|}$,  $p_1\in\mathbb C$.
		\item The functions $\psi_\pm$ \eqref{eq:psi+-MT} extend to entire analytic functions such that $\psi_\pm(-\overline{p_1}) = \overline{\psi_\pm(p_1)}$ and of exponential type at most $r$, namely for all $\eps>0$ there exists $C_\eps>0$ such that $|\psi_\pm(p_1)| \leq C_\varepsilon  e^{(r+\varepsilon)|p_1|}$,  $p_1\in\mathbb C$.
	\end{enumerate}
	The subspace
	\begin{align}\label{eq:HIFourier}
		\{\psi\,:\,\check{\psi}_\pm\in C_{c,\Rl}^\infty(I_r)\}
		\subset 
		H^m(\OO_r)
	\end{align}
	is cyclic.
\end{proposition}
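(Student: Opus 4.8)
The equivalence $i)\Leftrightarrow ii)$ has in fact already been established: the identification $K(\OO_r)=H^m(\OO_r)$ (proved exactly as in Lemma~\ref{Lemma:K=H}) says precisely that $\psi\in H^m(\OO_r)$ if and only if the distributions $\check\psi_\pm$ are supported in $\overline{I_r}=[-r,r]$. What remains is to connect this support statement with the analytic characterizations $iii)$ and $iv)$. The plan is to run the same argument as in the one-dimensional Proposition~\ref{Proposition:CharacterizationOfHI-Appendix}, but with the $L^2$ Paley--Wiener theorem replaced by the Paley--Wiener--Schwartz theorem for compactly supported distributions. This replacement is exactly what the polynomial prefactor $(1+|p_1|)^N$ in $iii)$ accounts for, in contrast to the single factor $|p|$ available in the one-dimensional case, where division by $p$ produced a genuine $L^2$-function.

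For $ii)\Rightarrow iii)$: if $\check\psi_\pm$ are supported in $[-r,r]$, they are compactly supported distributions of some finite order $N$, so by Paley--Wiener--Schwartz their Fourier transforms $\psi_\pm$ extend to entire functions obeying $|\psi_\pm(p_1)|\le C(1+|p_1|)^N e^{r|{\rm Im}\,p_1|}$. The reality condition $\psi_\pm(-\overline{p_1})=\overline{\psi_\pm(p_1)}$ is equivalent to $\check\psi_\pm$ being real, which follows from the $\Gamma$-invariance of $\psi_\pm$: since $(\Gamma\eta)(p_1)=\overline{\eta(-p_1)}$ in the momentum picture, $\Gamma\psi_\pm=\psi_\pm$ reads $\overline{\psi_\pm(-p_1)}=\psi_\pm(p_1)$ on $\Rl$, and since the entire functions $\psi_\pm(p_1)$ and $\overline{\psi_\pm(-\overline{p_1})}$ agree on $\Rl$ they agree on $\Cl$, giving the stated identity. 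The implication $iii)\Rightarrow iv)$ is immediate.

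The substantial direction is $iv)\Rightarrow ii)$. Here I would first use that $\psi\in\Hil$ forces $\psi_\pm$ to be tempered distributions: indeed $\psi_+\in\Hil=L^2(\Rl,\om_m^{-1}dp_1)$ and $\om\psi_-\in\Hil$, and since $\om_m^{-1}$-weighted $L^2$ spaces embed continuously into $\Ss'(\Rl)$ (because $(1+p_1^2)^{-1}\le C\,\om_m^{-1}$), both $\psi_+$ and $\psi_-=\om^{-1}(\om\psi_-)$ lie in $\Ss'(\Rl)$. Combining this temperedness on $\Rl$ with the exponential type $\le r$ supplied by $iv)$, one upgrades to the full Paley--Wiener--Schwartz bound: a Bernstein-type inequality bounds $|\psi_\pm(x)|$ pointwise by a local $L^2$-average, hence polynomially on $\Rl$, and a Phragm\'en--Lindel\"of argument in the upper and lower half-planes propagates the polynomial boundedness on $\Rl$ together with the exponential type $r$ to the estimate $|\psi_\pm(p_1)|\le C(1+|p_1|)^N e^{r|{\rm Im}\,p_1|}$ on all of $\Cl$. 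Paley--Wiener--Schwartz then yields $\supp\check\psi_\pm\subset[-r,r]$, which is $ii)$. I expect this Phragm\'en--Lindel\"of bridge---turning the one-sided information (exponential type plus temperedness on the axis) into the two-sided polynomial-times-exponential bound with the sharp constant $r$---to be the only real obstacle; everything else is routine once it is in place.

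Finally, for cyclicity of the subspace \eqref{eq:HIFourier}: its elements lie in $H^m(\OO_r)$ at once by $i)\Leftrightarrow ii)$, since $\check\psi_\pm\in C_{c,\Rl}^\infty(I_r)$ has support in $[-r,r]$. By construction these are precisely the $\psi$ whose $\Gamma$-invariant Cauchy data $\check\psi_\pm$ (cf. \eqref{eq:psi+-MT}) are real and compactly supported in $I_r$, which is the space $L(\OO_r)$ of smooth positive-frequency Klein--Gordon solutions with test-function Cauchy data in $I_r$. Since $L(\OO_r)\subset K(\OO_r)$ is dense and $K(\OO_r)=H^m(\OO_r)$ is standard---hence cyclic---by Proposition~\ref{Proposition:CyclicityForHNet}, norm-density of \eqref{eq:HIFourier} in the cyclic real subspace $H^m(\OO_r)$ gives cyclicity of \eqref{eq:HIFourier}. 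Alternatively, one argues directly as in the proof of Theorem~\ref{Theorem:CyclicAboveR}~$ii)$: if $\eta$ is orthogonal to the complex span of \eqref{eq:HIFourier}, then pairing $\eta$ against $\psi$ with independently varying real data $\check\psi_+,\check\psi_-\in C_{c,\Rl}^\infty(I_r)$ forces $\eta=0$.
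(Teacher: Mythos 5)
Your handling of $i)\Leftrightarrow ii)$ (via $K(\OO_r)=H^m(\OO_r)$), of $ii)\Rightarrow iii)$ (Paley--Wiener--Schwartz, with the reality condition coming from $\Gamma$-invariance of $\psi_\pm$), of the trivial step $iii)\Rightarrow iv)$, and of the cyclicity claim (density of the solutions with smooth Cauchy data in $K(\OO_r)$ combined with Proposition~\ref{Proposition:CyclicityForHNet}) coincides with the paper's proof. The divergence, and the problem, is in your bridge for $iv)\Rightarrow ii)$.

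The step that fails is the claim that ``a Bernstein-type inequality bounds $|\psi_\pm(x)|$ pointwise by a local $L^2$-average.'' No such purely local estimate holds for entire functions of exponential type with a constant depending only on the type: take Chebyshev-type polynomials (exponential type $0$), $P(x)=T_N\bigl(\tfrac{2x^2-1-\eta^2}{1-\eta^2}\bigr)/T_N\bigl(\tfrac{-1-\eta^2}{1-\eta^2}\bigr)$, which satisfy $P(0)=1$, $|P|\le 1$ on $[-\eta,\eta]$ and $|P|\le e^{-cN\eta}$ on $\eta\le|x|\le 1$; choosing $\eta$ small and $N$ large makes $\int_{-1}^{1}|P|^2$ arbitrarily small while $P(0)=1$. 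So the pointwise polynomial bound on $\Rl$ cannot be extracted locally --- you must use the \emph{global} membership $\psi_\pm\in L^2(\Rl,\om_m^{-1}dp_1)$. A minimal repair that keeps your architecture and in fact shortcuts it: $F_\eps(z):=\psi_\pm(z)\,\sin(\eps z)/(\eps z)$ is entire of exponential type $\le r+\eps$ and lies in $L^2(\Rl,dp_1)$, since $\int_{|t|\le 1}|\psi_\pm|^2\,dt<\infty$ and $\int_{|t|\ge 1}|\psi_\pm(t)|^2 t^{-2}dt\le C\|\psi_\pm\|^2$; the classical $L^2$ Paley--Wiener theorem \cite{SteinWeiss:1971} then gives $\supp\check{F}_\eps\subset[-r-\eps,r+\eps]$, and since $F_\eps\to\psi_\pm$ in $\Ss'(\Rl)$ as $\eps\to0$, one gets $\supp\check{\psi}_\pm\subset I_r$ directly, bypassing both the pointwise bound and the Phragm\'en--Lindel\"of step. (Your Phragm\'en--Lindel\"of step itself would be legitimate if the polynomial bound on $\Rl$ were in hand, but note that order $1$ in a half-plane is the borderline case, so one needs the exponential-type version with indicator condition $h(\pi/2)\le 0$, which your hypothesis $|\psi_\pm(z)|\le C_\eps e^{(r+\eps)|z|}$ does supply on the imaginary axis.) The paper's own argument is different again and likewise avoids pointwise estimates: it sets $f_\pm(p_1)=\psi_\pm(p_1)/(im+p_1)\in L^2(\Rl,dp_1)$, invokes the Plancherel--P\'olya inequality \cite[Thm.~6.7.7]{Boas:1954} to get $\int_\Rl|f_\pm(p_1+iq)|^2dp_1\le e^{2rq}\|\psi_\pm\|^2/m$ for $q\ge0$, concludes $e^{iRp_1}f_\pm\in{\mathbb H}^2(\Cl_+)$ for $R>r$, hence $\supp\check{f}_\pm\subset[-r,\infty)$ and then $\supp\check{\psi}_\pm\subset[-r,\infty)$ from $\check{\psi}_\pm=im\check{f}_\pm+i\check{f}_\pm'$, with the bound $\supp\check{\psi}_\pm\subset(-\infty,r]$ obtained by the mirror argument applied to $g_\pm(p_1)=\psi_\pm(-p_1)/(im+p_1)$.
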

\begin{proof}
	We have seen $i)\Leftrightarrow ii)$ already. If $ii)$ holds, the functions $\psi_\pm$ satisfy the reality condition $\psi_\pm(-\overline{p_1}) = \overline{\psi_\pm(p_1)}$ because $\check{\psi}_\pm$ are real distributions, and $\psi_\pm$ extend to entire analytic functions satisfying $|\psi_\pm(p)|\leq C(1+|p_1|)^N\,e^{r|{\rm Im}\,p_1|}$, $p_1\in\Cl$, by the Paley-Wiener Theorem for distributions, see \cite[Thm.~IX.12]{ReedSimon:1975}, i.e. we have shown $ii)\Rightarrow iii)$, and $iii)\Rightarrow iv)$ is trivial.
	
	It remains to show $iv)\Rightarrow ii)$. To this end, we first note that $f_\pm(p_1):=\psi_\pm(p_1)/(im+p_1)$ is analytic and of exponential type at most $r$ on the upper half plane, and $\int_\Rl dp_1\,|f_\pm(p_1)|^2\leq\|\psi_\pm\|^2/m<\infty$. This implies \cite[Thm.~6.7.7]{Boas:1954} $\int_\Rl dp_1\,|f_\pm(p_1+iq)|^2\leq e^{2rq}\|\psi_\pm\|^2/m$ for $q\geq0$, from which we read off that $p_1\mapsto e^{iRp_1}f_\pm(p_1)$ lies in the Hardy space ${\mathbb H}^2(\Cl_+)=\widehat{L^2(\Rl_+)}$ of the upper half plane if $R>r$. Hence $\supp\check{f}_\pm\subset[-R,\infty)$ for all $R>r$, i.e. $\supp\check{f}\subset[-r,\infty)$. Thus also $\check{\psi}_\pm=im\check{f}_\pm+i\check{f}_\pm'$ has support in $[-r,\infty)$. Arguing analogously with $g_\pm(p_1):=\psi_\pm(-p_1)/(im+p_1)$ yields also $\supp\check{\psi}_\pm\subset(-\infty,r]$, i.e. we have $\supp\check{\psi}_\pm\subset I_r$.
\end{proof}

It is also useful to translate this characterization to the lightray picture.

\begin{corollary}\label{Corollary:CharacterizationOfHO-lightray-appendix}
	In the lightray representation, let $\psi\in\Hil= L^2(\Rl_+,dp/p)$. The following are equivalent:
	\begin{enumerate}
		\item $\psi\in H^m(\OO_r)$
		\item There exist $\psi^\pm\in\Hil$, $\overline{\psi^\pm(\frac{1}{p})}=\overline{\psi_\pm(-p)}=\psi^\pm(p)$, $p>0$, which are restrictions to $\Rl_+$ of functions analytic on $\Cl\backslash\{0\}$ and satisfy for any $\eps>0$ the bound
		\begin{align}
			|\psi^\pm(p)|\leq C_{\eps,\pm}\,e^{(\frac{mr}{2}+\eps)|p-\frac{1}{p}|}
			\,,\qquad
			p\in\Cl\backslash\{0\}\,,
		\end{align}
		for some $C_{\eps,\pm}>0$, such that
		\begin{align}\label{eq:CauchyRepPsi-LightRay}
			\psi(p)
			=
			\psi^+(p)+i\,(p+\tfrac{1}{p})\cdot\psi^-(p)
			\,,\qquad
			p>0\,.
		\end{align}
	\end{enumerate}
\end{corollary}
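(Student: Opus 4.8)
The plan is to deduce the corollary from Proposition~\ref{Proposition:CharacterizationOfHO-Appendix} by transporting that characterization from the momentum picture to the lightray picture along the unitary implementing the substitution $p_1=\frac{m}{2}(p-\frac1p)$. Under this map one has $\om_m(p_1)=\frac{m}{2}(p+\frac1p)$ and $\om_m(p_1)^{-1}dp_1=dp/p$, so it is a genuine unitary $L^2(\Rl,\om_m^{-1}dp_1)\to L^2(\Rl_+,dp/p)$ and, being built from $J$ and $\om$ alone, it commutes with $\Gamma$ and with $\om$. Consequently it carries the $\Gamma$-invariant Cauchy data $\psi_\pm$ of \eqref{eq:psi+-MT} to their lightray counterparts; setting $\psi^+:=\psi_+$ and $\psi^-:=\frac{m}{2}\psi_-$ turns the decomposition $\psi=\psi_++i\om\psi_-$ into the representation \eqref{eq:CauchyRepPsi-LightRay}, and it turns the reality relation $\psi_\pm(-\overline{p_1})=\overline{\psi_\pm(p_1)}$ into $\overline{\psi^\pm(1/\overline p)}=\psi^\pm(p)$, using that $p\mapsto 1/\overline p$ corresponds to $p_1\mapsto-\overline{p_1}$.

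For the implication $i)\Rightarrow ii)$ I would substitute directly. Given $\psi\in H^m(\OO_r)$, Proposition~\ref{Proposition:CharacterizationOfHO-Appendix}~$iv)$ supplies entire $\psi_\pm$ of exponential type at most $r$ obeying the reality condition. Since $\frac{m}{2}(p-\frac1p)$ is analytic on $\Cl\setminus\{0\}$ with a pole only at the origin, the composites $\psi^\pm$ are analytic on $\Cl\setminus\{0\}$; and because $|p_1|=\frac{m}{2}|p-\frac1p|$ holds identically, the type bound $|\psi_\pm(p_1)|\le C_\eps e^{(r+\eps)|p_1|}$ becomes exactly $|\psi^\pm(p)|\le C_{\eps,\pm}e^{(\frac{mr}{2}+\eps)|p-\frac1p|}$ after rescaling $\eps$. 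Together with membership in $\Hil$ (inherited from $\psi_\pm$) this yields all assertions in $ii)$.

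The reverse implication $ii)\Rightarrow i)$ is where the real work lies and is the step I expect to be the main obstacle. One is now handed $\psi^\pm$ analytic merely on $\Cl\setminus\{0\}$ and must reconstruct the \emph{entire} functions $\psi_\pm$ required by Proposition~\ref{Proposition:CharacterizationOfHO-Appendix}. The difficulty is that the inverse substitution $p_1\mapsto p=\frac1m(\om_m(p_1)+p_1)$ is two-valued: the two preimages of a given $p_1$ are $p$ and $-1/p$ (their product is $-1$), so $p\mapsto\frac m2(p-\frac1p)$ realizes $\Cl\setminus\{0\}$ as a two-sheeted cover of the $p_1$-plane with deck transformation $p\mapsto-1/p$, ramified over $p_1=\pm im$ at its fixed points $p=\pm i$. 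Thus $\psi_\pm(p_1):=\psi^\pm(p(p_1))$ is a priori defined only on this cover, and it descends to a single-valued entire function precisely when $\psi^\pm(p)=\psi^\pm(-1/p)$, i.e. when $\psi^\pm$ is invariant under the deck map and the square-root branch points at $p=\pm i$ are removable. Verifying this descent is the crux; it is exactly the content recorded by the chain $\overline{\psi^\pm(1/p)}=\overline{\psi_\pm(-p)}=\psi^\pm(p)$ in the statement, which ties $\psi^\pm$ to a bona fide momentum-picture $\psi_\pm$.

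I see two routes to settle it. The first is a local analytic argument at $p=\pm i$: one checks, using that the growth bound $|p-\frac1p|$ is itself invariant under $p\mapsto-1/p$ together with the reality relation, that $\psi^\pm$ has the reflection symmetry forcing $\psi_\pm$ to continue holomorphically across $p_1=\pm im$, after which the global exponential-type estimate follows from $|\psi^\pm(p)|\le C_\eps e^{(\frac{mr}{2}+\eps)|p-\frac1p|}$ just as in the forward step. The second, which I find cleaner, mimics the Hardy-space reasoning in the proof of Proposition~\ref{Proposition:CharacterizationOfHO-Appendix}~$iv)\Rightarrow ii)$: passing to the rapidity variable $\te$ with $p=e^\te$ and $p_1=m\sinh\te$, one has $p-\frac1p=2\sinh\te$, the bound confines $\psi^\pm\circ\exp$ to a strip, and the deck symmetry becomes invariance under $\te\mapsto i\pi-\te$; running the $e^{iR(\cdot)}$-multiplication and Paley--Wiener estimate there recovers the support property $\supp\check\psi_\pm\subset I_r$ of Proposition~\ref{Proposition:CharacterizationOfHO-Appendix}~$ii)$. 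Either way, once $\psi_\pm$ are shown to be entire of type at most $r$ with the correct reality, Proposition~\ref{Proposition:CharacterizationOfHO-Appendix} returns $\psi\in H^m(\OO_r)$.
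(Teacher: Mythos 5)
Your proposal is correct and, in its first route, essentially reproduces the paper's own proof: the forward direction is the same direct substitution $p_1=\frac{m}{2}(p-\frac{1}{p})$ with the rescaled $\eps$ in the type bound, and your descent argument for the converse --- deck invariance $\psi^\pm(-1/p)=\psi^\pm(p)$ extracted by combining the two reality relations (analytically continued from $\Rl_+$), with removability at the ramification points $p=\pm i$ --- is precisely the paper's Schwarz-reflection continuation of $\psi_\pm(p_1):=\psi^\pm(p(p_1))$ across the cuts $\pm i[m,\infty)$, followed by the same identity $|p-\frac{1}{p}|=\frac{2}{m}|p_1|$ and an appeal to Proposition~\ref{Proposition:CharacterizationOfHO-Appendix}~$iv)$. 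Your alternative rapidity/Hardy-space route is a viable variant but unnecessary, and the only blemish is the inaccurate aside that the change-of-variables unitary is ``built from $J$ and $\om$ alone'' (it is simply composition with the point transformation, which indeed intertwines $\Gamma$ and $\om$).
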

\begin{proof}
	$i)\Rightarrow ii)$. This follows by translating the characterization of $H^m(\OO_r)$ in Proposition~\ref{Proposition:CharacterizationOfHO-Appendix}~{\em iv)} to the lightray representation, with $p_1=\frac{m}{2}(p-\frac{1}{p})$: We have 
	\begin{align*}
		\psi(p)
		=
		\psi_+(\tfrac{m}{2}(p-\tfrac{1}{p}))+i\tfrac{m}{2}(p+\tfrac{1}{p})\cdot \psi_-(\tfrac{m}{2}(p-\tfrac{1}{p}))
		\,,
	\end{align*}
	with $\psi^\pm:p\mapsto\psi_\pm(\tfrac{m}{2}(p-\tfrac{1}{p}))\in\Hil$ satisfying $\overline{\psi^\pm(\frac{1}{p})}=\overline{\psi^\pm(-p)}=\psi^\pm(p)$, $p>0$, because of $\overline{\psi_\pm(-p_1)}=\psi_\pm(p_1)$, $p_1\in\Rl$. Furthermore, $\psi^\pm$ have the claimed analyticity and boundedness properties thanks to Proposition~\ref{Proposition:CharacterizationOfHO-Appendix}~{\em iv)}.
	
	For the converse direction, we start from $\psi^\pm\in\Hil$ with the specified properties, and have to show that $\psi_\pm(p_1):=\psi^\pm(\frac{1}{m}(\om_m(p_1)-p_1))$ satisfy the conditions of Proposition~\ref{Proposition:CharacterizationOfHO-Appendix}~{\em iv)}. It is clear that $\psi_\pm$ lie in $L^2(\Rl,\om_m(p_1)^{-1}dp_1)$, and are invariant under $\Gamma$. From their definition, we also see analyticity of $\psi_\pm$ on $\Cl\backslash\{\pm i[m,\infty)\}$ because of the branch cut of the root in $\om_m(p_1)=(m^2+p_1^2)^{1/2}$. The boundaries of these cuts correspond to imaginary values of $p$, and in view of $\overline{\psi^\pm(\frac{1}{\overline{p}})}=\overline{\psi^\pm(-\overline{p})}=\psi^\pm(p)$, the boundary values on the two opposite sides of the (purely imaginary) cuts are real and identical, and $\psi_\pm$ is skew symmetric w.r.t. reflecting about the cut. Thus we can use Schwarz' reflection principle to conclude that $\psi_\pm$	extends to an entire function. Since $|p-\frac{1}{p}|=2|p_1|$, it 
is also easy to see that $\psi_\pm$ is of exponential type at most $r$. By Proposition~\ref{Proposition:CharacterizationOfHO-Appendix}~{\em iv)}, it then follows that $\psi\in H^m(\OO_r)$.
\end{proof}

\footnotesize
%%================================================
% \bibliography{Bibtex-Database.bib}
% \bibliographystyle{alphaGL}

%%================================================

\end{document}